
\documentclass[journal]{IEEEtran}
\ifCLASSINFOpdf
  \usepackage[pdftex]{graphicx}
\else
  \usepackage[dvips]{graphicx}
\fi
%
%

%
\usepackage{amssymb}
\usepackage{amsmath}
\usepackage{amsthm}
\usepackage{caption}
\usepackage{subcaption}
\usepackage{booktabs}
%

%
\usepackage{algorithmic}

\newcommand{\trp}{\scriptscriptstyle\top}
\newtheorem{theorem}{Theorem}
\newtheorem{prop}{Proposition}
\newtheorem{corr}{Corollary}
\newtheorem{lemma}{Lemma}
\newtheorem{assumption}{Assumption}
\newtheorem{remark}{Remark}

\usepackage{accents} 

\hyphenation{op-tical net-works semi-conduc-tor}

\begin{document}
%
\title{Scalar Field Estimation with Mobile Sensor Networks}
%
%
%

\author{Rihab~Abdul~Razak~,
        ~Srikant~Sukumar,~\IEEEmembership{Member,~IEEE,}
        and~Hoam~Chung,~\IEEEmembership{Member,~IEEE}
\thanks{R. A. Razak was with the IITB-Monash Research 
Academy, Mumbai, India.
e-mail: rihab@sc.iitb.ac.in.}
\thanks{S. Sukumar is with Systems and Control Engineering, 
Indian Institute of Technology Bombay, Mumbai, India.}
\thanks{H. Chung is with Mechanical and Aerospace 
Engineering, Monash University,
Clayton, VIC, Australia.}
}

%
%

\markboth{Journal of \LaTeX\ Class Files,~Vol.~14, No.~8, August~2015}%
{Shell \MakeLowercase{\textit{et al.}}: Bare Demo of IEEEtran.cls for IEEE Journals}
%



\maketitle

\begin{abstract}
In this paper, we consider the problem of estimating a scalar 
field using a network of mobile sensors which can measure the 
value of the field at their instantaneous location. The scalar 
field to be estimated is 
assumed to be represented by 
positive definite radial basis kernels and we use techniques 
from adaptive control and Lyapunov analysis to prove the 
stability of the proposed estimation algorithm. 
The convergence of the estimated parameter values to the true 
values is guaranteed by planning the motion of the mobile 
sensors to satisfy persistence-like conditions.
\end{abstract}

\begin{IEEEkeywords}
Estimation, Adaptive Control, Approximation, 
Lyapunov Stability, Radial Basis Functions.
\end{IEEEkeywords}

%
\IEEEpeerreviewmaketitle

\section{Introduction}
\label{sec:introduction}
%
%
%
%
\par Multi-robot systems consists of network of robots which 
cooperate to perform tasks such as consensus, formation control 
etc.
\cite{jadbabaie2003coordination,murray2007recent,
olfati2007consensus,tanner2007flocking}. 
Mobile sensor networks consists of network of robots mounted 
with sensors deployed to perform some distributed sensing task 
such as monitoring, coverage etc \cite{cortes2004coverage}.
In this paper, we consider the problem of estimation of an 
unknown scalar field using mobile sensor networks. There have 
been many works related to scalar field estimation in literature. 
Several works have studied field estimation using wireless sensor 
networks. See for example \cite{novak2004a,bajwa2005a}. 
In \cite{waterschoot2011a} the scalar field is assumed to be 
modelled using a partial differential equation and finite 
element methods are used for estimating the field. 
In \cite{vuran2006a,zhang2009a,dardari2007a,dogandzic2006a,
graham2012a,nevat2013a} the field is modelled as spatial 
random process and estimated using samples from the sensor nodes. 
In \cite{ramachandran2017a} field reconstruction is posed as  
an optimization problem constrained by linear dynamics and a 
gradient-based method is used to solve the problem.
In \cite{bergamo2012a}, the scalar field is assumed to be 
linearly parameterized in terms of Gaussian basis functions 
and the measurements from the sensors are fused together to 
form an estimate for the scalar field.

In most of these cases, the sensors are assumed to be fixed 
and distributed over the region of interest. Usually a large 
number of sensors are required to be installed for achieving  
enough spatial resolution. Using mobile sensor networks can be 
highly advantageous since they can move around the region of 
interest and collect measurements adaptively, the number of 
sensors required is greatly reduced. In \cite{la2013a,la2015a}, 
scalar field estimation is done with mobile sensor network by 
fusing sensor measurements using consensus filters. In 
\cite{wu2011a}, information about a scalar field is obtained 
by exploring the level surfaces of the field using a mobile 
sensor network. In \cite{zhang2007a}, a static sensor network 
is used along with a mobile robot to estimate a scalar field 
by combining the robot measurements with the sensor network 
measurements and planning the robot trajectory to minimize 
some reconstruction error. However the method we propose in 
the current work is motivated by the coverage control problem 
\cite{cortes2004coverage, Schwager2009, rihab2016a, 
rihab2018a, rihab2018b}.

\par In the coverage problem, we are interested in controlling 
the robots so that the robots attain an optimal configuration or 
a near optimal configuration with respect to a scalar field. 
In \cite{cortes2004coverage}, this is achieved by 
minimizing a cost function which gives a measure of how good the coverage is. In \cite{Schwager2009}, the authors extended the coverage algorithm for the case where the scalar field is unknown. The scalar field is assumed to be linearly parameterized with unknown constant parameters. In order to achieve the coverage goal, the robot needs to adapt the unknown parameters so that the estimated scalar field is close to the actual field. The exact estimation (asymptotically) of the density function parameters require a time integral quantity to be positive definite, which is a 
sufficient richness condition for the robot trajectories. See \cite{Schwager2009} for more details. 
In general, the robot trajectories need not meet this condition since the trajectories of the robots are decided based on the gradient of the coverage cost function, not on estimating the density function parameters. However, it is crucial to estimate the true values of those parameters since the estimation of the unknown scalar field is often the primary objective for a robotic sensor network and it may lead more efficient deployment of robots. 
For example, in case of radiation spill, if we have a good estimate of the radiation concentration, we may directly deploy agents to regions of high concentration.

Thus in this work, we look at a slightly different problem closely related to and motivated by the coverage problem discussed above. Our primary aim in this paper is to accurately estimate the scalar field not the coverage. The unknown scalar field is approximated using positive definite radial basis functions and we use a similar adaptive approach as that in 
\cite{Schwager2009} for parameter estimation. 

\par In Section \ref{sec:problem}, we discuss the problem 
statement in detail. In section \ref{sec:singleagent}, we 
consider the single mobile sensor case, followed by the mobile 
sensor network case in Section \ref{sec:multiagent}. In Section 
\ref{sec:centresnotknown} we discuss the case where the centres 
of the radial basis functions are not known exactly, but only 
to within an $\epsilon$-accuracy. We present some simulations 
to verify the results in section \ref{sec:simulations}. We 
conclude the paper with Section \ref{sec:conclusion}.

\section{Preliminaries and Problem Statement} \label{sec:problem}
We denote the set of positive real numbers by $\mathbb{R}_+$. 
The components of a vector $v$ are denoted using superscripts 
$v^i$. Subscripts on vector quantities refer to the agent or 
mobile sensor the quantity is associated to. For example, 
$v_i$ refers to a quantity associated with agent $i$.
\par We consider a compact region 
$\mathcal{Q} \subset \mathbb{R}^n$ with $N$ mobile 
sensors. The position of the sensors is denoted by 
$x_i; \,\,\, i=1,2,\dots,N$. 
There also exists a continuous scalar field 
$\phi : \mathcal{Q} \to \mathbb{R}_+$ 
over $\mathcal{Q}$ which is unknown. The objective  
is to estimate the unknown scalar field using $N$ mobile sensors 
assuming the sensors can measure the value of the scalar field 
at their respective positions. We assume that the unknown 
scalar field can be represented by positive definite 
radial basis functions (RBF). In other words, we assume the 
density function can be parameterized as 
\begin{subequations}
\begin{align}
	\label{eqn:phi1}
	\phi(q) &= \mathcal{K}(q)^{\trp} a \\
		&= \sum_{i=1}^p \mathcal{K}^{i}(q) a^{i}
\end{align}
\end{subequations}
where 
$a \in \mathbb{R}^p$ is a constant vector, and $\mathcal{K}(q) = 
\left[ \mathcal{K}^{1}(q) \,\,\, \mathcal{K}^{2}(q) \,\,\, 
\dots \,\,\, \mathcal{K}^{p}(q) \right]^{\trp}$ 
with $\mathcal{K}^i : \mathcal{Q} \to \mathbb{R}_+$ given by 
$\mathcal{K}^{i}(q) = \varphi(\|c_i - q\|)$ with 
$\varphi: \mathbb{R}_+ \to \mathbb{R}_+$ are radial basis 
functions for a set of points $c_i$. 
This assumption is common in neural 
networks and justified as follows: 
\begin{theorem}[\cite{park1991,lavretsky2013book}]
 \label{thm:approx1}
 For any continuous function $f:\mathbb{R}^n \to \mathbb{R}$ and 
 any $\epsilon > 0$, there is an RBF network with $p$ elements, 
 a set of centers $\{c_i\}_{i=1}^p$, 
 such that we can define 
 \begin{align*}
  \hat{f}(q) &= \sum_{i=1}^p a^{i} \mathcal{K}^{i}(q) 
  = a^{\trp} \mathcal{K}(q) 
 \end{align*}
 with $\|f-\hat{f}\|_{L_2}^2 \leq \epsilon = 
 \mathcal{O}\left(p^{-\frac{1}{n}}\right)$.
\end{theorem}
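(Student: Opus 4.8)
The plan is to use the classical convolution-and-discretization argument underlying the cited universal-approximation results. Since $\mathcal{Q}$ is compact and $f$ is continuous, I first pass to a continuous, compactly supported representative on $\mathbb{R}^n$ (extend $f|_{\mathcal{Q}}$ and multiply by a cutoff), still denoted $f$, so that the $L_2$ norm over $\mathcal{Q}$ is controlled by quantities on all of $\mathbb{R}^n$. I then treat the radial profile as a scaled mollifier, writing $\varphi_\sigma(z)=\sigma^{-n}\varphi(\|z\|/\sigma)$ normalized to unit integral; here the nonnegativity and integrability of the positive-definite profile $\varphi$ are exactly the hypotheses that must be checked to make it a legitimate approximate identity. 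The first step is then to approximate $f$ by the mollification
\begin{equation*}
 f_\sigma(q)=\int_{\mathbb{R}^n} f(y)\,\varphi_\sigma(q-y)\,dy,
\end{equation*}
and to invoke the standard approximate-identity estimate together with the uniform continuity of $f$ to make $\|f-f_\sigma\|_{L_2}$ as small as desired by shrinking $\sigma$.

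The second step turns $f_\sigma$ into an RBF network. I would lay the centers $\{c_i\}_{i=1}^p$ on a uniform grid of spacing $h$ covering a neighbourhood of $\mathcal{Q}$ large enough to contain the support of the integrand, and replace the convolution integral by its Riemann sum,
\begin{equation*}
 \hat f(q)=\sum_{i=1}^p f(c_i)\,h^n\,\varphi_\sigma(q-c_i)=\sum_{i=1}^p a^i\,\mathcal{K}^i(q),
\end{equation*}
which is precisely the claimed form, with weights $a^i=h^n f(c_i)$ and the width $\sigma$ absorbed into the design of the profile $\varphi$. With $p$ grid nodes filling a bounded region, the spacing satisfies $h\sim p^{-1/n}$.

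The third step is the error estimate, split as $\|f-\hat f\|_{L_2}\le\|f-f_\sigma\|_{L_2}+\|f_\sigma-\hat f\|_{L_2}$. The first term is handled by the mollification bound above once $\sigma$ is fixed; the second is the quadrature error of the Riemann sum, governed by the grid spacing and the regularity of the integrand $y\mapsto f(y)\varphi_\sigma(q-y)$ over each cell, which has bounded support. Since the node count enters only through the spacing $h\sim p^{-1/n}$, carefully tracking the exponents converts the quadrature error into the advertised bound $\|f-\hat f\|_{L_2}^2=\mathcal{O}(p^{-1/n})$.

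The main obstacle I anticipate is the tension between the two error sources. Making $\|f-f_\sigma\|_{L_2}$ small forces $\sigma$ down, but a sharp kernel $\varphi_\sigma$ has gradient of size $\sim\sigma^{-n-1}$, so its variation across a grid cell, and hence the quadrature constant, blows up as $\sigma\to 0$. The delicate part is therefore the ordering of limits: fix $\sigma$ first at a value small enough to control the mollification term, and only then refine the grid (increase $p$, shrink $h$) so that the quadrature term is dominated, while bookkeeping how the constant depends on $\sigma$ and on the bounded support so that the final rate genuinely collapses to $\mathcal{O}(p^{-1/n})$ in the number of nodes.
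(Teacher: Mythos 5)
The paper itself offers no proof of Theorem~\ref{thm:approx1}; it is imported verbatim from \cite{park1991,lavretsky2013book}, so there is no in-paper argument to compare against. That said, your mollification-plus-discretization scheme is precisely the classical route of the cited Park--Sandberg result: convolve with the scaled kernel as an approximate identity, then replace the convolution by a Riemann sum over a grid of centers, with weights $a^i = h^n\sigma^{-n}f(c_i)$ once the normalization of $\varphi_\sigma$ is unwound. As a proof of the \emph{density} assertion --- for every $\epsilon>0$ there exist $p$, centers and weights with $\|f-\hat f\|_{L_2}^2\le\epsilon$ --- it is essentially sound, and your flagged hypotheses are the right ones: positive definiteness of the radial profile does not by itself give nonnegativity, and what the convolution step genuinely needs is integrability with nonzero integral, which the Gaussian \eqref{eqn:gaussian} satisfies.

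The gap is in the rate, and it sits exactly where you placed your ``ordering of limits.'' If you fix $\sigma$ first and only then refine the grid, the total error can never fall below $\|f-f_\sigma\|_{L_2}$, which is independent of $p$; that ordering yields the qualitative statement but cannot produce $\|f-\hat f\|_{L_2}^2=\mathcal{O}(p^{-1/n})$. To get a rate you must couple $\sigma=\sigma(p)\to 0$ with $h\sim p^{-1/n}$ and balance the two terms: for Lipschitz $f$ the mollification error is $\mathcal{O}(\sigma)$ in sup norm, while the quadrature error is $\mathcal{O}(h/\sigma)$ (the pointwise gradient bound $\sigma^{-n-1}$ you cite is paid back by the $\mathcal{O}\left((\sigma/h)^n\right)$ cells of volume $h^n$ on which $\varphi_\sigma$ lives, leaving $h\sigma^{-n-1}\cdot\sigma^{n}=h/\sigma$). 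Balancing gives $\sigma\sim h^{1/2}$, sup error $\mathcal{O}(h^{1/2})$, hence squared $L_2$ error $\mathcal{O}(h)=\mathcal{O}(p^{-1/n})$ over the compact domain --- exactly the advertised exponent, but only under a quantitative continuity hypothesis such as Lipschitz, which is the setting in which \cite{lavretsky2013book} actually establishes the bound. For merely continuous $f$, as in the theorem's hypotheses, no rate in $p$ is possible at all (approximation can be arbitrarily slow along a modulus of continuity), so your argument as written proves the Park--Sandberg half of the statement; to claim the exponent you must either replace ``fix $\sigma$ first'' by the coupling $\sigma\sim\sqrt{h}$ under a Lipschitz assumption, or read the $\mathcal{O}(p^{-1/n})$ as holding only under that implicit smoothness.
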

The theorem tells us that we can approximate a continuous 
function to an arbitrary accuracy by using a network of RBF 
elements. An example of positive definite radial kernel is the 
Gaussian kernel,  
\begin{equation}
	\mathcal{K}^{i}(q) = \varphi(\|c_i-q\|) =  
	\exp\left\{-\frac{\|c_i-q\|^2}{\sigma_i^2}\right\}
	\label{eqn:gaussian}
\end{equation} 
where $c_i$ are the centres of the Gaussian kernels.
The main problem studied in this work is to accurately determine 
the parameters $a^i$ so that the scalar field $\phi(.)$ may be 
accurately reconstructed. We make the following assumption:
\begin{assumption}
 The centres $c_i$ of the radial functions are known to all the 
 mobile agents.
\end{assumption}
The strengths $a_i$ of individual radial functions are unknown 
and need to be estimated. To proceed, we require the following 
theorem:
\begin{theorem}[Micchelli's Theorem \cite{micchelli1986}]
 \label{thm:approx2}
 Given $p$ distinct points $c_1,c_2,\dots,c_p$ in $\mathbb{R}^q$, 
 the $p \times p$ matrix $\mathrm{K}$, whose elements are 
 $\mathrm{K}_{ij} = \mathcal{K}^{i}(c_j) = \varphi(\|c_i-c_j\|)$ 
 is non-singular.
\end{theorem}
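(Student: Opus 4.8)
The plan is to prove the stronger statement that $\mathrm{K}$ is positive definite, from which non-singularity follows immediately. The key tool is a spectral representation of the radial kernel. For the positive definite kernels of interest here — in particular the Gaussian in \eqref{eqn:gaussian} — one can write $\varphi(\|c_i-c_j\|) = \psi(\|c_i-c_j\|^2)$, where $\psi:[0,\infty)\to\mathbb{R}_+$ is completely monotone. By the Bernstein--Widder theorem such a $\psi$ is the Laplace transform of a non-negative finite Borel measure $\mu$, i.e.\ $\psi(t)=\int_0^\infty e^{-st}\,d\mu(s)$, so that $\mathrm{K}_{ij}=\int_0^\infty e^{-s\|c_i-c_j\|^2}\,d\mu(s)$. (For the Gaussian, $\psi(t)=e^{-t/\sigma^2}$ is manifestly completely monotone and non-constant.)

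First I would reduce the problem to the Gaussian kernel. For any real vector $\xi=(\xi^1,\dots,\xi^p)^{\trp}$, interchanging summation and integration gives
\begin{equation*}
 \xi^{\trp}\mathrm{K}\,\xi=\int_0^\infty\left(\sum_{i,j=1}^p \xi^i\xi^j\, e^{-s\|c_i-c_j\|^2}\right)d\mu(s),
\end{equation*}
so it suffices to show the inner Gaussian quadratic form is non-negative for every $s>0$, and strictly positive for a set of $s$ of positive $\mu$-measure. Next I would exploit the Fourier representation of the Gaussian. Using $e^{-s\|z\|^2}=\gamma_s\int_{\mathbb{R}^q} e^{-\|\omega\|^2/(4s)}\,e^{\mathrm{i}\,\omega^{\trp}z}\,d\omega$ with $z=c_i-c_j$ and a positive normalizing constant $\gamma_s=(4\pi s)^{-q/2}$, the inner sum factors as
\begin{equation*}
 \sum_{i,j=1}^p \xi^i\xi^j\, e^{-s\|c_i-c_j\|^2}=\gamma_s\int_{\mathbb{R}^q} e^{-\|\omega\|^2/(4s)}\left|\sum_{i=1}^p \xi^i\, e^{\mathrm{i}\,\omega^{\trp}c_i}\right|^2 d\omega\ \ge\ 0.
\end{equation*}
This is clearly non-negative, and it vanishes only if $\sum_{i=1}^p \xi^i e^{\mathrm{i}\,\omega^{\trp}c_i}\equiv 0$ as a function of $\omega$.

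The main obstacle is upgrading non-negativity to strict positivity, and this is exactly where the hypothesis that the $c_i$ are \emph{distinct} enters. Since the points are distinct, the exponentials $\omega\mapsto e^{\mathrm{i}\,\omega^{\trp}c_i}$ are linearly independent, so the integrand above can vanish for all $\omega$ only when $\xi=0$. Hence for every $\xi\neq 0$ the inner integral is strictly positive for each $s>0$; combined with the fact that $\mu$ is not the trivial (point-mass-at-zero) measure, this forces $\xi^{\trp}\mathrm{K}\,\xi>0$. Therefore $\mathrm{K}$ is positive definite and in particular non-singular. I would remark that for radial functions outside the completely-monotone class one must instead invoke conditional positive definiteness and restrict to the hyperplane $\sum_i\xi^i=0$, which is the delicate part of Micchelli's general argument; but for the kernels used in this paper the direct Fourier argument above is self-contained.
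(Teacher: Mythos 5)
Your proof is correct for the kernels this paper actually uses, but there is no internal proof to compare it against: Theorem~\ref{thm:approx2} is quoted as a known result with a citation to \cite{micchelli1986}, so your argument is necessarily a different (and self-contained) route. What you have written is essentially Schoenberg's characterization of strictly positive definite radial functions rather than Micchelli's argument: you represent a completely monotone $\psi$ as a Laplace transform via Bernstein--Widder, reduce to Gaussian kernels, and obtain strict positivity from the linear independence of the characters $\omega \mapsto e^{\mathrm{i}\,\omega^{\trp}c_i}$ at distinct nodes. The steps check out --- the normalization $\gamma_s = (4\pi s)^{-q/2}$ is right, the Fubini interchange is trivial since the sums are finite and $\mu$ is finite, and you correctly flag that non-constancy of $\psi$ (i.e.\ $\mu((0,\infty))>0$; an atom of $\mu$ at $s=0$ contributes only the non-negative term $(\sum_i \xi^i)^2$) is needed to exclude the degenerate case $\mathrm{K}=c\,\mathbf{1}\mathbf{1}^{\trp}$. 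The trade-off versus the cited theorem is clear: you prove a strictly stronger conclusion (positive definiteness, not mere non-singularity) on a strictly smaller class, whereas Micchelli's theorem also covers conditionally positive definite radial functions such as multiquadrics, for which $\mathrm{K}$ is non-singular yet indefinite; that extension --- the restriction to the hyperplane $\sum_i \xi^i=0$ together with an eigenvalue-counting argument --- is the real content of \cite{micchelli1986} and is not reproduced by your Fourier computation, as you yourself acknowledge. Since the paper's standing assumption is positive definite radial kernels (e.g.\ the Gaussian \eqref{eqn:gaussian}), this restriction is immaterial here; indeed your stronger conclusion is exactly the property that the proof of Lemma~\ref{lemma:Spd} invokes when it asserts that the matrix $\left(\mathcal{K}^i(c_j)\right)$ is \emph{positive definite} --- a claim that non-singularity alone, as stated in Theorem~\ref{thm:approx2}, does not quite deliver --- so your proof in fact closes a small gap in the paper's own usage of the theorem.
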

The theorem says that for positive definite radial kernels, 
the $p \times p$ matrix formed by evaluating the radial functions 
at each of the centres is non-singular.
In what follows, we assume that $\phi(.)$ can be 
\emph{exactly parameterized} by the RBF kernels. 
A consequence of theorem \ref{thm:approx2} is given below:

%
\begin{lemma}
 The matrix $S$ given by 
 \begin{equation}
  S:= \int_{\mathcal{Q}} \mathcal{K}(q) \mathcal{K}(q)^{\trp} dq
 \end{equation}
 where $\mathcal{K}(q) = \left[ \mathcal{K}^{1}(q) \,\,\, 
 \mathcal{K}^{2}(q) \,\,\, \dots \,\,\, \mathcal{K}^{p}(q) 
 \right]^{\trp}$ 
 and $\phi$ is parameterized as in (\ref{eqn:phi1}),
 is positive definite.
 \label{lemma:Spd}
\end{lemma}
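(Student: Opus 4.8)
The plan is to recognize $S$ as the Gram matrix of the kernel functions $\mathcal{K}^1,\dots,\mathcal{K}^p$ with respect to the $L_2(\mathcal{Q})$ inner product, observe that this makes $S$ automatically positive semidefinite, and then upgrade semidefiniteness to definiteness by ruling out a nontrivial kernel using Micchelli's theorem (Theorem \ref{thm:approx2}). First I would fix an arbitrary $v \in \mathbb{R}^p$ and compute the quadratic form by pushing $v$ inside the integral:
\begin{equation*}
 v^{\trp} S v = \int_{\mathcal{Q}} v^{\trp} \mathcal{K}(q) \mathcal{K}(q)^{\trp} v \, dq
 = \int_{\mathcal{Q}} \big( v^{\trp} \mathcal{K}(q) \big)^2 \, dq \geq 0,
\end{equation*}
so $S$ is positive semidefinite with no further assumptions. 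It then remains to show that $v^{\trp} S v = 0$ forces $v = 0$.

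Next I would exploit regularity of the integrand. Suppose $v^{\trp} S v = 0$. The map $q \mapsto \big(v^{\trp}\mathcal{K}(q)\big)^2$ is continuous (each $\mathcal{K}^i = \varphi(\|c_i - \cdot\|)$ is continuous) and non-negative, and $\mathcal{Q}$ is a compact region with nonempty interior, hence positive Lebesgue measure; therefore a vanishing integral of such an integrand implies it vanishes everywhere on $\mathcal{Q}$. Consequently $g(q) := v^{\trp}\mathcal{K}(q) = \sum_{i=1}^p v^i \mathcal{K}^i(q) \equiv 0$ on $\mathcal{Q}$, i.e.\ the kernels are forced to be linearly dependent on $\mathcal{Q}$ if $v \neq 0$.

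Finally I would evaluate this identity at the centres to connect with Micchelli's theorem. Taking $q = c_j$ for each $j$ (the natural setting is $c_1,\dots,c_p \in \mathcal{Q}$; for Gaussian kernels one can instead invoke real-analyticity to extend $g\equiv 0$ from the open interior of $\mathcal{Q}$ to all of $\mathbb{R}^n$ and then evaluate at the centres) gives $\sum_{i=1}^p v^i \mathcal{K}^i(c_j) = \sum_{i=1}^p v^i \mathrm{K}_{ij} = 0$ for every $j$, i.e.\ $\mathrm{K}^{\trp} v = 0$. Since Theorem \ref{thm:approx2} guarantees $\mathrm{K}$ is non-singular, we conclude $v = 0$, so the only vector in the kernel of $S$ is zero and $S$ is positive definite. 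The main obstacle I anticipate is the middle step: rigorously passing from the vanishing of the integral to pointwise vanishing of $g$ on $\mathcal{Q}$, and justifying the evaluation at the centres $c_j$ — this is where the location of the centres relative to $\mathcal{Q}$ (or, failing that, the analyticity of the kernel) is essential, and it is the hinge that lets Micchelli's algebraic non-singularity result conclude the functional-analytic statement.
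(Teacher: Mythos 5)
Your proposal is correct and follows essentially the same route as the paper's proof: positive semidefiniteness from the Gram structure $v^{\trp} S v = \int_{\mathcal{Q}} (\mathcal{K}(q)^{\trp} v)^2\, dq$, then Micchelli's theorem (via the non-singularity of $\mathrm{K}$, equivalently linear independence of the vectors $\mathcal{K}(c_j)$) combined with continuity of $\mathcal{K}$ to rule out a nontrivial kernel — you merely run the argument contrapositively where the paper runs it directly. If anything, you are more careful than the paper on the one genuine subtlety, namely that evaluating $v^{\trp}\mathcal{K}(\cdot)$ at the centres requires $c_j \in \mathcal{Q}$ (or an analyticity argument for Gaussian kernels), an assumption the paper's proof uses implicitly without comment.
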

\begin{proof}
 From the definition of $S$, we know it is atleast positive 
 semi-definite. 
 Therefore for any $v \neq 0$, $v^{\trp} S v \geq 0$ or 
 \[
 \int_{\mathcal{Q}} | \mathcal{K}(q)^{\trp}v |^2 dq \geq 0
 \]
 Now, since $\mathcal{K}(q)$ consists of positive definite 
 radial kernels, we have from theorem \ref{thm:approx2} that 
 \begin{equation*}
  \left( \begin{array}{cccc}
  \mathcal{K}^1(c_1) & \mathcal{K}^1(c_2) & \dots & 
  \mathcal{K}^1(c_p) \\
  \mathcal{K}^2(c_1) & \mathcal{K}^2(c_2) & \dots & 
  \mathcal{K}^2(c_p) \\
  \dots & \ddots & \dots & \vdots \\
  \mathcal{K}^p(c_1) & \mathcal{K}^p(c_2) & \dots & 
  \mathcal{K}^p(c_p) \\
  \end{array} \right)
 \end{equation*}
 is positive definite. This implies that the vectors 
 $\mathcal{K}(c_j); \,\, j=1,2,\dots,p$ 
 are linearly independent. Thus, given any 
 $v \neq 0, v \in \mathbb{R}^p$, 
 there exists some $j \in \{1,2,\dots,p\}$ such that 
 $\mathcal{K}(c_j)^{\trp} v$ is non-zero. 
 This along with the fact that $\mathcal{K}(\cdot)$ is 
 continuous allows us to conclude that
 \begin{equation*}
  \int_Q | \mathcal{K}(q)^{\trp} v |^2 dq > 0 
  \quad \mbox{for any } v \neq 0
 \end{equation*}
 Hence, $S$ is positive definite.
\end{proof}

\section{Single Mobile Robot Sensor} \label{sec:singleagent}
In this section, we consider the case of a single mobile sensor 
($N=1$) with position $x(t)$ at time $t$ deployed in the region 
$\mathcal{Q}$ to estimate the scalar field parameter $a$ 
(as given by equation \eqref{eqn:phi1}). The estimate of 
$a$ is denoted by $\hat{a}$. Then we can state the following 
corollary to lemma \ref{lemma:Spd}.
\begin{corr}
 Suppose the mobile sensor moves continuously within the domain 
 $Q$, such that in time $T$, it passes through each of the RBF 
 centres $c_i\, ; \,\, i=1,2,\dots,p$, then
 \begin{equation}
  \mathcal{S}_T := 
  \int_0^T \mathcal{K}(x(t)) 
  \mathcal{K}(x(t))^{\trp} dt
  \label{eqn:Tmatrix}
 \end{equation}
 is positive definite.
 \label{corr:pd}
\end{corr}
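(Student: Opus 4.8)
The plan is to mirror the argument used in the proof of Lemma~\ref{lemma:Spd}, replacing the spatial integral over $\mathcal{Q}$ by the temporal integral along the trajectory. First I would observe that $\mathcal{S}_T$ is automatically positive semi-definite: for any $v \in \mathbb{R}^p$,
\[
 v^{\trp} \mathcal{S}_T v = \int_0^T \left| \mathcal{K}(x(t))^{\trp} v \right|^2 dt \geq 0,
\]
since the integrand is a perfect square. The content of the statement is therefore the strict inequality for $v \neq 0$.

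To obtain strictness, I would argue by contradiction. Suppose $v \neq 0$ satisfies $v^{\trp} \mathcal{S}_T v = 0$. Then the non-negative integrand $|\mathcal{K}(x(t))^{\trp} v|^2$ has vanishing integral over $[0,T]$. Because the sensor moves continuously, $x(\cdot)$ is continuous, and since each $\mathcal{K}^i$ is continuous, the map $t \mapsto \mathcal{K}(x(t))^{\trp} v$ is continuous. A continuous non-negative function with zero integral must vanish identically, so $\mathcal{K}(x(t))^{\trp} v = 0$ for all $t \in [0,T]$.

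Next I would exploit the trajectory hypothesis. Since the sensor passes through every centre $c_i$, for each $i$ there is a time $t_i \in [0,T]$ with $x(t_i) = c_i$, whence $\mathcal{K}(c_i)^{\trp} v = 0$ for all $i = 1, \dots, p$. But by Theorem~\ref{thm:approx2} (Micchelli) the matrix whose columns are $\mathcal{K}(c_i)$ is non-singular, so its columns are linearly independent and the only vector orthogonal to all of them is the zero vector. This forces $v = 0$, contradicting $v \neq 0$. Hence $v^{\trp} \mathcal{S}_T v > 0$ for every nonzero $v$, and $\mathcal{S}_T$ is positive definite.

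I expect the only delicate point to be the passage from the vanishing integral to the pointwise vanishing of $\mathcal{K}(x(t))^{\trp} v$; this is precisely where continuity of \emph{both} the trajectory and the kernels is essential, and it is the step that would fail for a merely measurable trajectory. Everything after that is a direct reuse of the linear-independence conclusion already drawn from Micchelli's theorem in Lemma~\ref{lemma:Spd}.
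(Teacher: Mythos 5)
Your proof is correct and is essentially the paper's own argument: the paper proves this corollary by a one-line appeal to Lemma~\ref{lemma:Spd}, whose proof uses exactly your ingredients (positive semi-definiteness from the squared integrand, Micchelli's theorem for linear independence of the $\mathcal{K}(c_j)$, and continuity to pass between the integral and pointwise values). Your contradiction phrasing is just the contrapositive of the paper's direct argument, so there is no substantive difference.
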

\begin{proof}
 The proof is essentially the same and follows from lemma 
 \ref{lemma:Spd}.
\end{proof}

\noindent Now consider the following integrators running on the 
mobile sensor:
\begin{equation}
 \begin{aligned}
  \dot{\Lambda} &= \mathcal{K}(t)\mathcal{K}(t)^{\trp} \\
  \dot{\lambda} &= \mathcal{K}(t)\phi(t) \\
 \end{aligned}
\end{equation}
where $\mathcal{K}(t) := \mathcal{K}(x(t))$ denotes the value 
of function $\mathcal{K}(\cdot)$ at the point where the robot is 
at time $t$ and $\phi(t)$ is the measured value of the density 
function $\phi(\cdot)$ by the robot at time $t$.

\begin{prop}
 \label{prop:1}
 Suppose the mobile sensor moves such that it passes through 
 each of the centres $c_i; \,\, i=1,2,\dots,p$ in some finite 
 time $T>0$, and during this motion updates its estimate 
 $\hat{a}$ of $a$ by 
 \begin{equation}
  \dot{\hat{a}} = -\Gamma \left( 
  \Lambda \hat{a} - \lambda \right),
  \label{eqn:single_updatelaw}
 \end{equation}
 where $\Gamma$ is a positive definite gain matrix, then the 
 estimate $\hat{a}$ is bounded and converges asymtotically to 
 the true value $a$.
\end{prop}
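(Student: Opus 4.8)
The plan is to reformulate the update law in terms of the estimation error $\tilde{a} := \hat{a} - a$ and then build a Lyapunov function whose rate of decay is governed by the integrated regressor $\Lambda$, which Corollary \ref{corr:pd} guarantees becomes positive definite after time $T$. First I would exploit the exact parameterization $\phi(t) = \mathcal{K}(t)^{\trp} a$. Substituting into the $\dot{\lambda}$ integrator gives $\dot{\lambda} = \mathcal{K}(t)\mathcal{K}(t)^{\trp} a = \dot{\Lambda}\,a$, and with the natural zero initial conditions $\Lambda(0)=0$, $\lambda(0)=0$ integration yields the algebraic identity $\lambda(t) = \Lambda(t)\,a$ for all $t$. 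The update law \eqref{eqn:single_updatelaw} then collapses to $\dot{\hat{a}} = -\Gamma(\Lambda\hat{a} - \Lambda a) = -\Gamma\Lambda\,\tilde{a}$, and since $a$ is constant we obtain the clean error dynamics $\dot{\tilde{a}} = -\Gamma\Lambda\,\tilde{a}$.

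Next I would take the Lyapunov candidate $V = \tfrac{1}{2}\tilde{a}^{\trp}\Gamma^{-1}\tilde{a}$, which is positive definite because $\Gamma$ is positive definite. Differentiating along trajectories gives $\dot{V} = \tilde{a}^{\trp}\Gamma^{-1}\dot{\tilde{a}} = -\tilde{a}^{\trp}\Lambda\,\tilde{a}$. Since $\Lambda(t) = \int_0^t \mathcal{K}(s)\mathcal{K}(s)^{\trp}\,ds$ is positive semi-definite for every $t$, we have $\dot{V} \le 0$, so $V$ is non-increasing and hence bounded; this immediately yields boundedness of $\tilde{a}$, and therefore of $\hat{a}$.

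For asymptotic convergence I would invoke the defining property of the trajectory. Because $\dot{\Lambda} = \mathcal{K}\mathcal{K}^{\trp}$ is positive semi-definite, $\Lambda$ is monotonically non-decreasing in the semi-definite order, so $\Lambda(t) \succeq \Lambda(T) = \mathcal{S}_T$ for all $t \ge T$. Corollary \ref{corr:pd} guarantees $\mathcal{S}_T$ is positive definite, so for $t \ge T$ we get $\dot{V} = -\tilde{a}^{\trp}\Lambda(t)\tilde{a} \le -\lambda_{\min}(\mathcal{S}_T)\|\tilde{a}\|^2 \le -\alpha V$, where $\alpha = 2\lambda_{\min}(\mathcal{S}_T)/\lambda_{\max}(\Gamma^{-1}) > 0$. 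Integrating this differential inequality gives $V(t) \le V(T)\,e^{-\alpha(t-T)}$, so $V \to 0$ and hence $\tilde{a} \to 0$, i.e. $\hat{a} \to a$.

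The main obstacle I anticipate is the convergence step rather than boundedness: one must argue that the positive definiteness secured at the single instant $T$ by Corollary \ref{corr:pd} persists for all later times. The monotonicity of $\Lambda$ in the semi-definite order is precisely what supplies this, converting the one-time richness condition into a uniform lower bound $\Lambda(t) \succeq \mathcal{S}_T$ that drives the exponential estimate. It is worth noting that the argument actually delivers exponential (not merely asymptotic) convergence for $t \ge T$, so no appeal to Barbalat's lemma or a LaSalle-type invariance argument is needed.
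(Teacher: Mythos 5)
Your proposal is correct and follows essentially the same route as the paper: the identity $\lambda(t) = \Lambda(t)\,a$ (implicit in the paper's ``substituting and simplifying'' step), the Lyapunov function $V = \tfrac{1}{2}\tilde{a}^{\trp}\Gamma^{-1}\tilde{a}$, and the monotonicity $\Lambda(t) \succeq \Lambda(T) = \mathcal{S}_T$ turning the one-time condition of Corollary \ref{corr:pd} into $\dot{V} \le -\alpha V$ for $t \ge T$. You merely make explicit what the paper leaves implicit --- the zero initial conditions on the integrators needed for $\lambda = \Lambda a$, the semi-definite-order monotonicity argument, and the fact that the decay is exponential (your $\alpha$ with the factor of $2$ is the sharper, correct constant; the paper's is a valid weaker bound).
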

\begin{proof}
 Under the assumptions of the proposition \ref{prop:1} and 
 corollary \ref{corr:pd},
 \begin{equation*}
  S(T):= \int_0^T \mathcal{K}(\tau) 
  \mathcal{K}(\tau)^{\trp} d \tau
 \end{equation*}
 is positive definite. 
 This implies that 
 \begin{equation*}
  S(t) =
  \int_0^t 
  \mathcal{K}(\tau) \mathcal{K}(\tau)^{\trp} d \tau
 \end{equation*}
 is positive definite for all $t \geq T$. \\
 Now consider the positive definite candidate Lyapunov function,
 \begin{equation}
  V = \frac{1}{2} \tilde{a}^{\trp} \Gamma^{-1} \tilde{a}
 \end{equation}
 where $\tilde{a} = \hat{a} - a$ is the estimation error. 
 Taking the derivative of $V$, we obtain 
 \begin{equation*}
  \dot{V} = 
  \tilde{a}^{\trp} \Gamma^{-1} \dot{\hat{a}}
 \end{equation*}
 Substituting the update law from (\ref{eqn:single_updatelaw}) 
 and simplifying, we get
 \begin{equation*}
  \begin{aligned}
   \dot{V} &= -\tilde{a}^{\trp} S(t) \tilde{a} \\
   \dot{V}	&\leq \left\{ \begin{array}{ll}
   0 & \mbox{ for } t \in [0,T] \\
   -\alpha V & \mbox{ for } t > T,
   \end{array} \right.
  \end{aligned}
 \end{equation*}
 where $\alpha = \frac{\lambda_{\min}(S(T))}{\lambda_{\max}
 (\Gamma^{-1})} > 0$, $\lambda_{min}(\cdot)$ and 
 $\lambda_{\max}(\cdot)$ denoting the minimum and 
 maximum eigenvalues of their argument.
 Since $V$ is always non-increasing and bounded from below, 
 $\tilde{a}(t)$ is bounded for all $t > 0$.
 Since $\dot{V} < 0$ for all $t \geq T$, then we have 
 $V(t) \to 0$ as $t \to \infty$. This implies that 
 $\tilde{a} \to 0$ as $t \to \infty$.
\end{proof}
\begin{remark}
 The matrix $S(t)$ being positive definite for all $t \geq T$ is a 
 \emph{sufficient excitation} condition, similar to 
 (but weaker than) 
 the persistency of excitation condition, on the robot 
 trajectories which ensures parameter convergence. 
 See \cite{Schwager2009} for more information.
\end{remark}

\subsection{Relaxing the condition in corollary \ref{corr:pd}}
\label{sec:relaxpd}
In corollary \ref{corr:pd}, it was required that the mobile 
sensor passes through the centres $c_i$ of the radial kernels. 
This can be relaxed so that the mobile sensor need only move 
through a sufficiently small neighbourhood of each of the centres 
$c_i$, as described in \cite{gorinevsky1995persistency}. 
Consider the vector 
$\mathcal{X}(q) := \mathrm{K}^{-1} \mathcal{K}(q)$ where 
$\mathrm{K}$ is the matrix specified in theorem \ref{thm:approx2}. 
Then $\mathcal{X}(q)$ has the property that 
$\mathcal{X}^j(c_k) = \delta_{jk}$ where $\delta_{jk}$ is the 
Kronecker delta function and $\mathcal{X}^j(c_k)$ is the 
$j$-th component of $\mathcal{X}(c_k)$. Now consider the 
diagonal dominance sets defined by 
($0 < \varepsilon < 1$)
\[
 \mathcal{A}_j^{\varepsilon} := \left\{
 q \in \mathcal{Q} \, : \, |\mathcal{X}^j(q)| - 
 \sum_{i=1,i \neq j}^{p} |\mathcal{X}^i(q)| > \varepsilon
 \right\}.
\]
It can be easily seen that $\mathcal{A}_j^{\varepsilon}$ 
contains the centre $c_j$ and thus $\mathcal{A}_j^{\varepsilon}$ 
is an open subset containing $c_j$. The following lemma is an 
adaptation of theorem $1$ in \cite{gorinevsky1995persistency}:
\begin{lemma}
 Suppose that the mobile sensor moves continuously throughout 
 the domain $\mathcal{Q}$ such that in time $T$, the trajectory 
 traverses through each of the neighbourhoods 
 $\mathcal{A}_j^{\varepsilon} , \,\, j=1,2,\dots,p$, 
 then the matrix $\mathcal{S}_T$ given by equation 
 (\ref{eqn:Tmatrix}) is positive definite.
 \label{lemma:relaxedpd}
\end{lemma}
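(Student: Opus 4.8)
The plan is to verify positive definiteness directly by lower-bounding the quadratic form. For any $v \neq 0$, since $\mathcal{S}_T = \int_0^T \mathcal{K}(x(t))\mathcal{K}(x(t))^{\trp}\,dt$, one has
\[
 v^{\trp}\mathcal{S}_T v = \int_0^T \big| v^{\trp}\mathcal{K}(x(t)) \big|^2\,dt \ge 0.
\]
So the whole task reduces to producing, for each fixed $v \neq 0$, one instant $t^\star \in [0,T]$ at which the scalar $v^{\trp}\mathcal{K}(x(t^\star))$ is nonzero; since $x(\cdot)$ is continuous and the radial kernels are continuous, $t \mapsto v^{\trp}\mathcal{K}(x(t))$ is continuous, so a single nonzero value is automatically promoted to a nonzero value on a whole subinterval, forcing the integral to be strictly positive.

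First I would change coordinates from $\mathcal{K}$ to $\mathcal{X}$. Because $\mathcal{X}(q) = \mathrm{K}^{-1}\mathcal{K}(q)$, equivalently $\mathcal{K}(q) = \mathrm{K}\,\mathcal{X}(q)$, we can write
\[
 v^{\trp}\mathcal{K}(q) = (\mathrm{K}^{\trp}v)^{\trp}\mathcal{X}(q) = \sum_{i=1}^p w^i\,\mathcal{X}^i(q), \qquad w := \mathrm{K}^{\trp}v.
\]
Micchelli's Theorem \ref{thm:approx2} guarantees $\mathrm{K}$, hence $\mathrm{K}^{\trp}$, is nonsingular, so $w \neq 0$ whenever $v \neq 0$. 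I would then single out an index $j$ attaining the largest modulus, $|w^j| = \max_{1 \le i \le p}|w^i| > 0$, and use the hypothesis to pick a time $t^\star$ at which the trajectory lies in the corresponding neighbourhood $\mathcal{A}_j^{\varepsilon}$.

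At that instant the diagonal-dominance inequality defining $\mathcal{A}_j^{\varepsilon}$ applies. Combining it with the reverse triangle inequality and the bound $|w^i| \le |w^j|$ for all $i$, I expect to obtain
\begin{align*}
 \big| v^{\trp}\mathcal{K}(x(t^\star)) \big|
 &\ge |w^j|\,|\mathcal{X}^j(x(t^\star))| - \sum_{i \neq j} |w^i|\,|\mathcal{X}^i(x(t^\star))| \\
 &\ge |w^j| \Big( |\mathcal{X}^j(x(t^\star))| - \sum_{i \neq j} |\mathcal{X}^i(x(t^\star))| \Big) \\
 &> |w^j|\,\varepsilon > 0.
\end{align*}
This supplies the required nonzero instant, and the continuity argument from the first paragraph then closes the proof.

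The step I expect to be the crux is the index matching: the diagonal-dominance estimate is only useful at the neighbourhood $\mathcal{A}_j^{\varepsilon}$ whose index $j$ is the dominant component of $w = \mathrm{K}^{\trp}v$, not of $v$ itself. Choosing $j$ as the argmax of $|w^i|$ is exactly what lets $|w^j|$ factor out and overwhelm the off-diagonal sum; with any other index the reverse-triangle bound could fail to be positive. The remaining ingredients — the invertibility of $\mathrm{K}$ and the promotion from a single point to a positive-measure set by continuity — are routine, so this coordinate change and index alignment is where the real content sits.
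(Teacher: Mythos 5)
Your proof is correct and takes essentially the same route as the paper's: both pass through Micchelli's matrix via $\mathcal{K}(q)=\mathrm{K}\,\mathcal{X}(q)$ and apply the diagonal-dominance bound of $\mathcal{A}_j^{\varepsilon}$ at the index of the largest component of the transformed vector $w=\mathrm{K}^{\trp}v$ (the paper's unit vector $u$ plays exactly this role). The only difference is presentational: you argue directly and promote a single nonzero instant to a positive-length subinterval by continuity, whereas the paper runs the identical estimate as a contradiction on the minimum singular value of $\bar{\mathcal{S}}_T$ over the subinterval spent in $\mathcal{A}_i^{\varepsilon}$ --- if anything, your direct version cleanly sidesteps the paper's slightly awkward $\delta$-quantification.
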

\begin{proof}
 $\mathcal{S}_T$ can be written as $\mathcal{S}_T = \mathrm{K} 
 \bar{\mathcal{S}}_T \mathrm{K}^{\trp}$ where 
 \[
  \bar{\mathcal{S}}_T = \int_0^T 
  \mathcal{X}(x(t)) \mathcal{X}(x(t))^{\trp} dt.
 \]
 Since $\mathrm{K}$ is invertible, $\mathcal{S}_T$ is positive 
 definite iff $\bar{\mathcal{S}}_T$ is positive definite. 
 $\bar{\mathcal{S}}_T$ is positive definite iff 
 there exists some $\delta > 0$ such that 
 $\underaccent{\bar}{\sigma}(\bar{\mathcal{S}}_T) \geq \delta$ 
 where $\underaccent{\bar}{\sigma}(A)$ denotes the minimum 
 singular value of $A$. Suppose $\bar{\mathcal{S}}_T$ is not 
 positive definite under the conditions of the theorem. 
 Then there exists no $\delta > 0$ such that 
 $\underaccent{\bar}{\sigma}(\bar{\mathcal{S}}_T) \geq \delta$. 
 This implies that for any $\delta > 0$, there exists 
 $u \neq 0, \|u\|=1 $ such that 
 $u^{\trp} \bar{\mathcal{S}}_T u < \delta$, i.e.,
 \[
  \int_0^T u^{\trp} \mathcal{X}(x(t)) 
  \mathcal{X}(x(t))^{\trp} u \,dt < \delta
 \]
 Let $i$ be the index of the components of $u$ which has the 
 largest absolute value. i.e., $|u^i| \geq |u^j| \,\, \forall j$. 
 Also let $[t_{i1},t_{i2}] \subset [0,T]$ be the subinterval 
 during which the mobile sensor trajectory is contained in the 
 set $\mathcal{A}_i^{\varepsilon}$. 
 Clearly since the set $\mathcal{A}_i^{\varepsilon}$ is open 
 and the trajectory is continuous, $[t_{i1},t_{i2}]$ has 
 finite positive length. Then,
 \begin{align}
  \int_0^T u^{\trp} & \mathcal{X}(x(t)) \mathcal{X}(x(t))^{\trp} 
  u \,dt	
  = \int_0^T |\mathcal{X}^{\trp}u|^2 \, dt \\
  & \geq \int_{t_{i1}}^{t_{i2}} |\mathcal{X}^{\trp}u|^2 \, dt
  = \int_{t_{i1}}^{t_{i2}} |\sum_{j=1}^p \mathcal{X}^j 
  u^j | ^2 \, dt \\
  & \geq \int_{t_{i1}}^{t_{i2}} ( |\mathcal{X}^i u^i| - 
  |\sum_{j=1,j \neq i}^p \mathcal{X}^j u^j|  )^2 \, dt \\
  & \geq \int_{t_{i1}}^{t_{i2}} ( |\mathcal{X}^i u^i| - 
  \sum_{j=1,j \neq i}^p |\mathcal{X}^j u^j|  )^2 \, dt \\
  & \geq \int_{t_{i1}}^{t_{i2}} (( |\mathcal{X}^i| - 
  \sum_{j=1,j \neq i}^p |\mathcal{X}^j| ) |u^i|  )^2 \, dt \\
  & \geq \int_{t_{i1}}^{t_{i2}} \varepsilon^2 |u^i|^2 \, dt 
  = (t_{i2} - t_{i1}) \varepsilon^2 |u^i|^2. 
 \end{align}
 Choosing $\delta < (t_{i2} - t_{i1}) \varepsilon^2 |u^i|^2$ 
 leads to a contradiction. Therefore, $\bar{\mathcal{S}}_T$ is 
 positive definite and hence $\mathcal{S}_T$ is positive 
 definite.
\end{proof}

\subsection*{A sufficient condition for satisfaction of lemma 
\ref{lemma:relaxedpd}'s assumptions:}
Since checking the condition of the mobile sensor traversing 
througn the sets $\mathcal{A}_j^{\varepsilon}$ in lemma 
\ref{lemma:relaxedpd} involves transforming the vector 
$\mathcal{K}(q)$ at each instant which can be cumbersome if 
the number of parameters are large, we present a simpler 
sufficient condition which ensures that a given point $q$ 
is inside the set $\mathcal{A}_j^{\varepsilon}$. Note that 
the conditions derived are not equivalent to the conditions of 
the lemma, but only sufficient and thus can be conservative. 
However it is beneficial during implementations.
\begin{lemma}
 Given the mobile sensor position $x$, if 
 \begin{equation}
  \|\mathcal{K}(x) - \mathcal{K}(c_j)\|_{\infty} < 
  \frac{(1 - \epsilon)}{2 (p-1) \|\mathrm{K}^{-1}\|_{\infty}},
 \end{equation}
 then $x \in \mathcal{A}_j^{\varepsilon}$.
\end{lemma}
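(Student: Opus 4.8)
The plan is to read the claim as a perturbation estimate: the dominance functional $q\mapsto |\mathcal{X}^j(q)|-\sum_{i\neq j}|\mathcal{X}^i(q)|$ takes the clean value $1$ at the centre $c_j$, and the hypothesis merely bounds how far $x$ can drag it down. The starting observation is that, by the stated property $\mathcal{X}^i(c_j)=\delta_{ij}$, the vector $\mathcal{X}(c_j)$ is the $j$-th standard basis vector; equivalently $\mathcal{X}(c_j)=\mathrm{K}^{-1}\mathcal{K}(c_j)=\mathrm{K}^{-1}(\mathrm{K}e_j)=e_j$, because $\mathcal{K}(c_j)$ is exactly the $j$-th column of the matrix $\mathrm{K}$ from Theorem \ref{thm:approx2} (which is invertible, so $\mathcal{X}$ is well defined). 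Hence at the centre the defining quantity of $\mathcal{A}_j^{\varepsilon}$ equals $|\mathcal{X}^j(c_j)|-\sum_{i\neq j}|\mathcal{X}^i(c_j)|=1$.

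First I would convert the hypothesis on $\mathcal{K}$ into a componentwise bound on $\mathcal{X}$. Writing $\mathcal{X}(x)-\mathcal{X}(c_j)=\mathrm{K}^{-1}\bigl(\mathcal{K}(x)-\mathcal{K}(c_j)\bigr)$ and using submultiplicativity of the induced $\infty$-norm yields
\[
\|\mathcal{X}(x)-\mathcal{X}(c_j)\|_{\infty}\le \|\mathrm{K}^{-1}\|_{\infty}\,\|\mathcal{K}(x)-\mathcal{K}(c_j)\|_{\infty}=:\eta,
\]
so that $|\mathcal{X}^k(x)-\mathcal{X}^k(c_j)|\le\eta$ for every index $k$. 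Combining this with $\mathcal{X}(c_j)=e_j$ and the triangle inequality gives the two one-sided estimates $|\mathcal{X}^j(x)|\ge 1-\eta$ and $|\mathcal{X}^i(x)|\le\eta$ for each $i\neq j$.

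Next I would assemble these into a lower bound on the dominance functional,
\[
|\mathcal{X}^j(x)|-\sum_{i\neq j}|\mathcal{X}^i(x)|\ \ge\ (1-\eta)-(p-1)\eta\ =\ 1-p\eta,
\]
so it suffices to have $1-p\eta>\varepsilon$, i.e. $\eta<(1-\varepsilon)/p$. The hypothesis forces $\eta<(1-\varepsilon)/\bigl(2(p-1)\bigr)$, and since $2(p-1)\ge p$ for all $p\ge 2$ this is at least as strong as needed; therefore $x\in\mathcal{A}_j^{\varepsilon}$. The appearance of $2(p-1)$ rather than the sharp $p$ is precisely the conservatism that the surrounding discussion anticipates.

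The argument carries no deep obstacle — it is a one-shot perturbation bound — so the work is entirely in the bookkeeping. The point I expect to need the most care is the matrix-norm step: one must use the correct induced $\infty$-norm (maximum absolute row sum) so that the inequality $\|\mathrm{K}^{-1}v\|_{\infty}\le\|\mathrm{K}^{-1}\|_{\infty}\|v\|_{\infty}$ is legitimate and transfers uniformly to every component, and one must count the $p-1$ off-diagonal terms accurately, since that count fixes the final constant. It is also worth checking that $\eta<1$ under the hypothesis, so that the bound $|\mathcal{X}^j(x)|\ge 1-\eta$ is not vacuous; this holds automatically because $\varepsilon\in(0,1)$.
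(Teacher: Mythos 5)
Your proof is correct, and although it shares the paper's central mechanism --- transporting the hypothesis on $\mathcal{K}$ to the transformed coordinates $\mathcal{X}(q)=\mathrm{K}^{-1}\mathcal{K}(q)$ via the induced $\infty$-norm, and exploiting $\mathcal{X}(c_j)=e_j$ (the $j$-th standard basis vector) --- your bookkeeping takes a genuinely different route. The paper introduces a $2\times p$ selector matrix $B_j$, maps the deviation $\mathcal{X}(x)-\mathcal{X}(c_j)$ to $y=(y_1,y_2)$, budgets $(1-\varepsilon)/2$ separately to $y_1$ (the $j$-th coordinate) and $y_2$ (the off-diagonal part), and extracts the constant $2(p-1)$ from $\|B_j\|_\infty=p-1$ together with the factor $2$ in that budget. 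You instead apply the triangle inequality componentwise, getting $|\mathcal{X}^j(x)|\ge 1-\eta$ and $|\mathcal{X}^i(x)|\le\eta$ for $i\ne j$ with $\eta=\|\mathrm{K}^{-1}\|_\infty\|\mathcal{K}(x)-\mathcal{K}(c_j)\|_\infty$, and aggregate into the single inequality $1-p\eta>\varepsilon$; this shows the weaker threshold $\eta<(1-\varepsilon)/p$ already suffices, and since $2(p-1)\ge p$ for $p\ge 2$ the lemma's stated hypothesis implies it. Your route buys two things. First, a strictly sharper sufficient condition for $p\ge 3$, making explicit the conservatism the surrounding text only alludes to. Second, it sidesteps a literal weak point in the paper's own argument: the paper's intermediate claim that $\|y\|_\infty<(1-\varepsilon)/2$ by itself guarantees $x\in\mathcal{A}_j^{\varepsilon}$ is, as stated, not valid, because $y_2$ controls only $|\sum_{i\ne j}\mathcal{X}^i(x)|$ and not $\sum_{i\ne j}|\mathcal{X}^i(x)|$ (large cancelling off-diagonal components defeat it); the paper's conclusion survives only because its final estimate in fact passes, as yours does, through the componentwise bound $\eta$, from which $(p-1)\eta$ dominates the sum of absolute values. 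Your implicit assumption $p\ge 2$ is harmless, since the lemma's bound is undefined for $p=1$, and your closing check that $\eta<1$ (so the lower bound on $|\mathcal{X}^j(x)|$ is nonvacuous) is indeed automatic from $\varepsilon\in(0,1)$.
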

\begin{proof} 
We have the $i$-th component of $\mathcal{X}(x)$, 
$\mathcal{X}^i(x) = \left[ \mathrm{K}^{-1} \mathcal{K}(x) 
\right]^i$. Then 
\begin{equation}
 \mathcal{X}^i(x) - \mathcal{X}^i(c_j) = \left[ \mathrm{K}^{-1} 
 (\mathcal{K}(x) - \mathcal{K}(c_j)) \right]^i
\end{equation}
Now consider the mapping
\begin{equation}
 \left[ \begin{array}{c}
 y_1 \\
 y_2
 \end{array} \right] = 
 B_j
 \left( \mathcal{X}(x) - \mathcal{X}(c_j) \right)
\end{equation}
where 
\begin{equation}
 B_j = 
 \left[ \begin{array}{ccccccc}
 0 & \dots & 0 & 1 & 0 & \dots & 0 \\
 1 & \dots & 1 & 0 & 1 & \dots & 1
 \end{array} \right]
\end{equation}
The $1$ in the first row and the $0$ in the second row occurs 
at the $j$-th column. If the infinity-norm of 
$y = [y_1, y_2]^{\trp}$, $\|y\|_{\infty}<(1-\varepsilon)/2$, 
then it is guaranteed that $x \in \mathcal{A}_j^{\varepsilon}$. 
We also have 
\begin{align}
 \|y\|_{\infty} & \leq \|B\|_{\infty} 
 \|\mathcal{X}(x) - \mathcal{X}(c_j)\|_{\infty} \\
 & \leq \|B\|_{\infty} \|\mathrm{K}^{-1}\|_{\infty} 
 \|\mathcal{K}(x) - \mathcal{K}(c_j)\|_{\infty}
\end{align}
Requiring the above bound to be less than 
$\frac{(1-\epsilon)}{2}$ and noting that 
$\|B\|_{\infty} = (p-1)$ we have 
\begin{equation}
 \|\mathcal{K}(x) - \mathcal{K}(c_j)\|_{\infty} < 
 \frac{(1 - \epsilon)}{2 (p-1) \|\mathrm{K}^{-1}\|_{\infty}}
\end{equation}
\end{proof}
Any point $p$ which satisfies the above condition will lie in 
the set $\mathcal{A}_j^{\varepsilon}$ although all points in 
$\mathcal{A}_j^{\varepsilon}$ are not characterized by the 
above condition.

\section{Mobile Sensor Network} \label{sec:multiagent}
Suppose that we have $N$ mobile sensors deployed in the region 
$\mathcal{Q}$, with the position of the $i$-th mobile sensor 
denoted by $x_i$. We want to estimate the function 
$\phi: \mathcal{Q} \to \mathbb{R}_+$ collectively. We assume 
that equation (\ref{eqn:phi1}) holds so that we can linearly 
parameterize $\phi(\cdot)$ in terms of radial basis functions. 
We partition the region into $N$ components 
$\mathcal{Q}_i \,\, (i=1,2,\dots,N)$. Correspondingly we 
partition the basis function vector $\mathcal{K}(q)$ and the 
parameter vector $a$ as 
\begin{equation}
 \mathcal{K}(q) = \left[ \begin{array}{c}
 \mathcal{K}^{(1)}(q) \\
 \mathcal{K}^{(2)}(q) \\
 \vdots \\
 \mathcal{K}^{(N)}(q) 
 \end{array} \right],
 \qquad 
 a = \left[ \begin{array}{c}
 a^{(1)} \\
 a^{(2)} \\
 \vdots \\
 a^{(N)} 
 \end{array} \right]
 \label{eqn:partition}
\end{equation}
Each region $\mathcal{Q}_i$ contains the centres of the basis 
functions in the sub-vector $\mathcal{K}^{(i)}$. We assign 
each region $\mathcal{Q}_i$ to one of the mobile sensors where 
the sensor operates. This assignment is permanent and 
each mobile sensor starts within its region $\mathcal{Q}_i$ 
and moves in $\mathcal{Q}_i$. The algorithms presented below 
do not depend on any particular partition or assignment of 
mobile sensors, and this can be done arbitrarily. One particular 
method to divide the region and assign the sensors will be 
discussed in section \ref{sec:simulations}. Assuming the region 
$\mathcal{Q}$ is partitioned and the mobile sensors are 
assigned to each partition, we consider the graph $\mathcal{G}$ 
with the vertices representing the mobile sensors and an edge 
existing between two sensors if they belong to adjacent 
partitions. By adjacent partitions, we mean two partitions which 
share a subset of their boundary with each other that is of 
non-zero length.
See figure \ref{fig:estimation_distributed} for an illustration.
\begin{figure}
 \centering
 \includegraphics[scale=1.8]{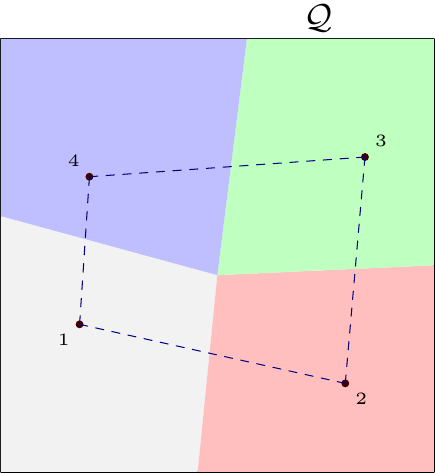}
 \caption{Illustration of four mobile sensors with a partition 
 of domain $\mathcal{Q}$: A graph with mobile sensors as root 
 nodes and edge between neighbouring sensors is also depicted 
 in the figure.}
 \label{fig:estimation_distributed}
\end{figure}
Now we consider two cases: (1) each mobile sensor estimates 
the entire parameter vector, and (2) each mobile sensor 
estimates only part of the parameter vector.

\subsection{Each mobile sensor estimates the full parameter vector}
\label{sec:ntwk_full}
\par \noindent In this subsection, we consider the case where 
each mobile sensor estimates the entire parameter vector, the 
estimate of sensor $i$ being denoted by $\hat{a}_i$.
To proceed, we consider the following integrators running on 
mobile sensor $i$:
\begin{align}
 \dot{\Lambda}_i &= 
 \mathcal{K}_i(t) \mathcal{K}_i(t)^{\trp} \\
 \dot{\lambda}_i &= \mathcal{K}_i(t) \phi_i(t)  
\end{align}
where $\mathcal{K}_i(t) = \mathcal{K}(x_i(t))$ and 
$\phi_i(t) = \phi(x_i(t))$ is the measurement of $\phi(.)$ 
obtained by sensor $i$ at its location at time $t$.
\par \noindent We consider the following update law for the 
parameter estimate of mobile sensor $i$:
\begin{equation}
 \dot{\hat{a}}_i = -\Gamma \left( 
 \Lambda_i \hat{a}_i - \lambda_i \right) 
 - \Gamma \zeta \sum_{j=1}^N l_{ij} \left( 
 \hat{a}_i - \hat{a}_j \right)
 \label{eqn:adaptationlaw_estimationmultiagent1}
\end{equation}
with $\hat{a}_i(0)$ being arbitrary; 
where $\zeta$ is a positive constant, $l_{ij}$ is the weight of 
the edge between sensors $i$ and $j$. The weight $l_{ij}$ is 
zero if there is no edge between sensor $i$ and $j$ and positive 
otherwise. The first term corresponds to the measurement 
update of mobile sensor $i$ and the second term is a consensus 
term to ensure that the estimates of all the mobile sensors 
asymptotically agree or come close to each other. 
This is critical in establishing the convergence of the 
estimation error as will be shown below.
\begin{lemma}
 \label{lemma:estimation_multiagent1}
 Suppose the mobile sensors translate continuously such that 
 in some time $T>0$, each sensor $i$ passes through each of the 
 centres in the region $\mathcal{Q}_i$ so that 
 \[
  \int\limits_{0}^T \mathcal{K}_i^{(i)}(t) 
  \mathcal{K}_i^{(i)}(t)^{\trp} dt > 0, 
  \quad \mbox{for} \,\, i=1,2,\dots,N.
 \]
 where $\mathcal{K}_i^{(i)}(t)$ denotes part of the vector 
 $\mathcal{K}_i(t)$ corresponding to the partition 
 \eqref{eqn:partition}.
 Then, we have 
 \[
  \sum_{i=1}^N 
  \int\limits_{0}^T \mathcal{K}_i(t) 
  \mathcal{K}_i(t)^{\trp} dt > 0.
 \]
\end{lemma}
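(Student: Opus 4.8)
The plan is to show that the sum of the positive semidefinite matrices $M_i := \int_0^T \mathcal{K}_i(t)\mathcal{K}_i(t)^{\trp}\,dt$ is in fact positive definite, by a null-space argument that reduces to the linear independence of the vectors $\mathcal{K}(c_j)$ underlying Lemma~\ref{lemma:Spd}. As a first step I would observe, exactly as in the opening of the proof of Lemma~\ref{lemma:Spd}, that each $M_i$ is positive semidefinite: $v^{\trp} M_i v = \int_0^T |\mathcal{K}_i(t)^{\trp} v|^2\,dt \ge 0$ for every $v$. Consequently $\sum_{i=1}^N M_i \succeq 0$, and the whole task is to rule out a nonzero vector in its kernel.

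To that end I would argue by contradiction. Suppose $v \neq 0$ satisfies $v^{\trp}\bigl(\sum_{i=1}^N M_i\bigr) v = 0$. Since each summand $v^{\trp} M_i v = \int_0^T |\mathcal{K}_i(t)^{\trp} v|^2\,dt$ is nonnegative, all of them must vanish simultaneously. The map $t \mapsto \mathcal{K}_i(t)^{\trp} v$ is continuous, because $\mathcal{K}(\cdot)$ is continuous and each trajectory $x_i(\cdot)$ is continuous, so a vanishing integral of its square forces $\mathcal{K}_i(t)^{\trp} v = 0$ for every $t \in [0,T]$ and every $i$.

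The crux is to upgrade this pointwise identity into a statement at all $p$ centres. By hypothesis, sensor $i$ passes through each centre $c_j$ contained in its region $\mathcal{Q}_i$; at such an instant $t$ we have $\mathcal{K}_i(t) = \mathcal{K}(x_i(t)) = \mathcal{K}(c_j)$, so the identity above yields $\mathcal{K}(c_j)^{\trp} v = 0$. Because the partition \eqref{eqn:partition} assigns every centre $c_1,\dots,c_p$ to exactly one region, letting $i$ range over all sensors gives $\mathcal{K}(c_j)^{\trp} v = 0$ for all $j = 1,\dots,p$. The vectors $\mathcal{K}(c_j)$ are the columns of the matrix $\mathrm{K}$ of Theorem~\ref{thm:approx2}, which is nonsingular, hence they are linearly independent (the very fact exploited in Lemma~\ref{lemma:Spd}). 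Therefore $\mathrm{K}^{\trp} v = 0$ implies $v = 0$, contradicting $v \neq 0$, and we conclude $\sum_{i=1}^N M_i \succ 0$.

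I expect the genuine content to lie in this last step rather than in any hard estimate; the main thing to get right is the bookkeeping that every one of the $p$ centres is actually visited by some sensor, which relies precisely on $\{\mathcal{Q}_i\}$ partitioning the centres so that the per-sensor hypotheses jointly cover all columns of $\mathrm{K}$. It is worth noting that the tempting shortcut of inspecting only the diagonal blocks of $\sum_i M_i$ --- each of which is positive definite by the assumed $\int_0^T \mathcal{K}_i^{(i)}(t)\mathcal{K}_i^{(i)}(t)^{\trp}\,dt \succ 0$ --- does not suffice, since a symmetric block matrix with positive definite diagonal blocks need not be positive definite; this is exactly why the full-vector argument through Theorem~\ref{thm:approx2} is required.
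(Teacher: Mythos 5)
Your proof is correct and takes essentially the same route as the paper: the paper's own proof is a one-line appeal to the arguments of Lemma~\ref{lemma:Spd} and Corollary~\ref{corr:pd} (the union of the sensors' trajectories covers all $p$ centres, and positive definiteness then follows from the linear independence of the vectors $\mathcal{K}(c_j)$ guaranteed by Theorem~\ref{thm:approx2}), which is precisely the argument you spell out in full via the kernel/contradiction formulation. Your closing observation that the positive definite diagonal blocks alone would not imply positive definiteness of the sum is accurate, but it is a clarifying remark rather than a departure from the paper's approach.
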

\begin{proof}
Since each mobile sensor $i$ passes through the centres in 
the region $\mathcal{Q}_i$, the union of the trajectories of 
all mobile sensors cover all the centres, which implies that 
the matrix 
\begin{equation}
 \sum_{i=1}^N 
 \int\limits_{0}^T \mathcal{K}_i(t) 
 \mathcal{K}_i(t)^{\trp} dt
 \label{eqn:matrix-lambdai}
\end{equation}
is positive definite using the same arguments as in proof of  
corrollary \ref{corr:pd} and lemma \ref{lemma:Spd}.
\end{proof}
\begin{remark}
Lemma \ref{lemma:estimation_multiagent1} states that each agent 
passing through the centres in its partition $\mathcal{Q}_i$ 
is sufficient to ensure that the total sum matrix 
(\ref{eqn:matrix-lambdai}) is positive definite.
\end{remark}
\par \noindent Now we have the following result:
\begin{theorem}
\label{thm:estimation_multiagent1}
Suppose the $N$ mobile sensors adopt the parameter adaptation law 
(\ref{eqn:adaptationlaw_estimationmultiagent1}). Further assume 
that each mobile sensor $i$ traverses a trajectory going through 
all the basis function centres in $\mathcal{Q}_i$. Then 
\begin{equation}
 \lim_{t \to \infty} \left( \hat{a}_i - a \right) = 0,
\end{equation}
for each $i \in \{1,2,\dots,N\}$, i.e. the mobile sensors arrive 
at a common value for the parameters, the common value being the 
true parameter value.
\end{theorem}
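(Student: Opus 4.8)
The plan is to reproduce the single-agent Lyapunov argument of Proposition \ref{prop:1}, but on the \emph{stacked} error system, letting the graph Laplacian supply the excitation that any individual sensor lacks. First I would introduce the per-sensor error $\tilde{a}_i = \hat{a}_i - a$ and observe that, because $\phi_i(t) = \mathcal{K}_i(t)^{\trp} a$, the integrator satisfies $\lambda_i(t) = \Lambda_i(t)\,a$ (taking the natural zero initialization $\Lambda_i(0)=0,\ \lambda_i(0)=0$, as in Proposition \ref{prop:1}). Hence $\Lambda_i \hat{a}_i - \lambda_i = \Lambda_i \tilde{a}_i$, and since $a$ is constant the adaptation law \eqref{eqn:adaptationlaw_estimationmultiagent1} reduces to the clean error dynamics
\begin{equation*}
 \dot{\tilde{a}}_i = -\Gamma \Lambda_i \tilde{a}_i - \Gamma \zeta \sum_{j=1}^N l_{ij} (\tilde{a}_i - \tilde{a}_j).
\end{equation*}

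Next I would stack the errors into $\tilde{\mathbf{a}} = [\tilde{a}_1^{\trp}, \dots, \tilde{a}_N^{\trp}]^{\trp}$ and take the Lyapunov candidate $V = \tfrac{1}{2}\sum_i \tilde{a}_i^{\trp}\Gamma^{-1}\tilde{a}_i$. Writing $\mathbf{\Lambda}(t) = \mathrm{blkdiag}(\Lambda_1,\dots,\Lambda_N)$ and $L$ for the symmetric weighted graph Laplacian with $L_{ij}=-l_{ij}$, the identity $\tfrac12\sum_{i,j} l_{ij}\|\tilde a_i-\tilde a_j\|^2 = \tilde{\mathbf{a}}^{\trp}(L\otimes I_p)\tilde{\mathbf{a}}$ collapses the derivative to
\begin{equation*}
 \dot V = -\tilde{\mathbf{a}}^{\trp}\!\left[\mathbf{\Lambda}(t) + \zeta\,(L\otimes I_p)\right]\tilde{\mathbf{a}} =: -\tilde{\mathbf{a}}^{\trp} M(t)\,\tilde{\mathbf{a}}.
\end{equation*}
Both summands of $M(t)$ are positive semidefinite, so $\dot V\le 0$ and every $\tilde a_i$ is bounded.

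The crux is to show $M(t)$ is positive \emph{definite} for $t\ge T$, even though neither piece is: $\mathbf{\Lambda}(t)$ degenerates on the off-consensus directions and $L\otimes I_p$ degenerates on the consensus direction. The argument is that if $\tilde{\mathbf{a}}^{\trp} M(t)\tilde{\mathbf{a}}=0$ then, both terms being semidefinite, each must vanish; vanishing of the Laplacian term forces $\tilde a_1=\dots=\tilde a_N=:\mathbf{v}$ (here I invoke connectivity of $\mathcal{G}$, so that the kernel of $L\otimes I_p$ is exactly the consensus subspace); vanishing of the data term then gives $\mathbf{v}^{\trp}\big(\sum_i\Lambda_i(t)\big)\mathbf{v}=0$, and Lemma \ref{lemma:estimation_multiagent1} guarantees $\sum_i\Lambda_i(t)\succ0$ for $t\ge T$, forcing $\mathbf{v}=0$. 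This is the step I expect to be the main obstacle, since it is precisely where the distributed excitation (each sensor covering only the centres in its own $\mathcal{Q}_i$) and the consensus coupling must be combined into a single definiteness statement.

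Finally, since $\mathbf{\Lambda}(t)-\mathbf{\Lambda}(T)$ is semidefinite, $M(t)\succeq M(T)$, whence $\lambda_{\min}(M(t))\ge\lambda_{\min}(M(T))=:\mu>0$ for all $t\ge T$. Bounding $V\le \tfrac12\lambda_{\max}(\Gamma^{-1})\|\tilde{\mathbf{a}}\|^2$ exactly as in Proposition \ref{prop:1} then yields $\dot V\le -\beta V$ with $\beta = 2\mu/\lambda_{\max}(\Gamma^{-1})>0$ for $t\ge T$, so $V(t)\to0$ and hence $\tilde{\mathbf{a}}\to0$, i.e. $\hat a_i\to a$ for every $i$, which is the claim.
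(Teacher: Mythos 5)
Your proof is correct, and while it starts from the same Lyapunov function $V=\tfrac12\sum_i \tilde a_i^{\trp}\Gamma^{-1}\tilde a_i$ and the same splitting of $\dot V$ into a data term and a Laplacian term, it finishes by a genuinely different mechanism than the paper. The paper argues asymptotically: it invokes Barbalat's lemma to get $\dot V\to 0$, concludes from the Laplacian term (connectivity, nullspace $k\mathbf{1}$) that the estimates reach consensus, and only then applies Lemma \ref{lemma:estimation_multiagent1} to the first term evaluated at the common limiting error to force $\tilde a\to 0$. You instead fuse the two semidefinite pieces into a single matrix $M(t)=\mathbf{\Lambda}(t)+\zeta\,(L\otimes I_p)$ and prove $M(t)\succeq M(T)\succ 0$ for $t\ge T$ via the kernel-intersection argument: connectivity makes the kernel of $L\otimes I_p$ exactly the consensus subspace, on which the quadratic form reduces to $v^{\trp}\bigl(\sum_i \Lambda_i(t)\bigr)v$, strictly positive by Lemma \ref{lemma:estimation_multiagent1} and monotonicity of the running integrals; this yields $\dot V\le -\beta V$ and exponential decay, exactly parallel to Proposition \ref{prop:1}. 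Your route buys two things the paper's does not deliver cleanly: (i) an explicit exponential rate after the excitation time $T$; and (ii) independence from the delicate steps in the Barbalat argument --- the paper's claim that $\dot V$ is uniformly continuous ``since the derivative of each term in $\dot V$ is bounded'' is shaky because $\Lambda_i(t)$ grows without bound, and its final step refers to ``the common value to which the errors converge'' before convergence of the individual $\hat a_i$ has actually been established; your differential-inequality argument sidesteps both issues. One hygiene point you already flagged and should keep explicit in a final write-up: the reduction $\Lambda_i\hat a_i-\lambda_i=\Lambda_i\tilde a_i$ requires zero initialization of the filters and exact parameterization of $\phi$, the same assumptions the paper uses implicitly, and the symmetry $l_{ij}=l_{ji}$ of the edge weights is needed for your identity $\sum_{i,j} l_{ij}\,\tilde a_i^{\trp}(\tilde a_i-\tilde a_j)=\tfrac12\sum_{i,j} l_{ij}\|\tilde a_i-\tilde a_j\|^2$.
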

\begin{proof}
Consider the function 
\begin{equation}
 V = \frac{1}{2} \sum_{i=1}^N 
 \tilde{a}_i^{\trp} \Gamma^{-1} \tilde{a}_i.
\end{equation}
Taking the derivative of $V$,
\begin{align*}
 \dot{V} &= \sum_{i=1}^N 
 \tilde{a}_i^{\trp} \Gamma^{-1} \dot{\hat{a}}_i \\
 &= -\sum_{i=1}^N \tilde{a}_i^{\trp} 
 \left( \Lambda_i \hat{a}_i - \lambda_i \right) - 
 \zeta \sum_{i=1}^N \tilde{a}_i^{\trp} l_{ij} 
 \left( \hat{a}_i - \hat{a}_j \right)
\end{align*}
Substituting for the variables $\Lambda_i$, $\lambda_i$ and 
rearranging the second term,
\begin{align}
 \dot{V} &= -\sum_{i=1}^N \tilde{a}_i^{\trp} 
 \int\limits_0^t 
 \mathcal{K}_i(\tau) \mathcal{K}_i^{\trp}(\tau) 
 d\tau \tilde{a}_i - 
 \zeta \sum_{\alpha=1}^p \hat{a}^{{\alpha}^{\trp}} 
 L \hat{a}^{\alpha}\\
 & \leq 0.
\end{align}
where 
$\hat{a}^{\alpha} = 
[ a_1^{\alpha} \, a_2^{\alpha} \, \dots \, a_N^{\alpha}]^{\trp}$ 
is the vector of the estimate of parameter $\alpha$ of all the 
sensors stacked together. The function $V$ is lower bounded and 
non-increasing, and therefore tends to a limit. This implies 
that $\dot{V}$ is integrable and also that the estimates 
$\hat{a}_i$ are bounded. $\dot{V}$ is also uniformly continuous 
since the derivative of each term in $\dot{V}$ is bounded. 
Using Barbalat's lemma, we conclude that $\dot{V}$ tends to zero 
as $t \to \infty$. From the second term in $\dot{V}$, noting 
that $L$ is the laplacian matrix of the connected graph 
$\mathcal{G}$ with nullspace $k\mathbf{1}$ where 
$\mathbf{1}$ is the vector of ones and $k \in \mathbb{R}$, 
we see that as $t \to \infty$, 
$\hat{a}^{\alpha} \to k_{\alpha} \mathbf{1}$ for some 
$k_{\alpha}$. Then,
\[
 \lim_{t \to \infty} \left( 
 \hat{a}_i - \hat{a}_j \right) = 0.
\]
since $\hat{a}_i = [ a_i^{1} \, a_i^{2} \, \dots \, 
a_i^{p}]^{\trp}$. Now from the first term of $\dot{V}$ we have, 
as $t \to \infty$, \[ - \tilde{a}^{\trp} \sum_{i=1}^N 
\int\limits_0^t 
\mathcal{K}_i(\tau) \mathcal{K}_i^{\trp}(\tau) 
d\tau \tilde{a} = 0 \]
where $\tilde{a}$ is the common value to which the mobile sensor 
parameter estimation errors $\tilde{a}_i$ converge.
Then using lemma \ref{lemma:estimation_multiagent1}, 
it follows that $\lim_{t \to \infty} \tilde{a} = 0$ and the 
parameter estimates converge to the true parameter values.
\end{proof}
\begin{remark}
Although lemma \ref{lemma:estimation_multiagent1} and theorem 
\ref{thm:estimation_multiagent1} requires that the mobile sensors 
move through the centres, the relaxation given in section 
\ref{sec:relaxpd} (requiring that the mobile sensors move only 
through the neighbourhoods $\mathcal{A}_j^{\varepsilon}$ of the 
centres) also applies here, as well as in all the following 
results which requires the sensors to move through the centres.
\end{remark}

\subsection{Each mobile sensor estimates only part of the 
parameter vector}
\label{sec:ntwk_part}
If the number of parameters $p$ is large as could be the case 
when the density function is completely unknown, each mobile 
sensor estimating the entire parameter vector could be 
computationally intensive, as it would require computing 
$\left( \frac{p(p+1)}{2} + p \right)$ filter variables in 
addition to the $p$ parameter estimates. In such cases it would 
be beneficial to have each mobile sensor estimate only part of 
the parameters. Suppose each mobile sensor $i$ is to estimate 
only part of the $a$-vector $a^{(i)}$ given by 
(\ref{eqn:partition}). Now we use $\hat{a}_i$ to 
denote the estimate of $a^{(i)}$ by sensor $i$. We write 
\begin{align} 
 \phi(q) &= \mathcal{K}(q)^{\trp} a \\
 &= \mathcal{K}^{(i)}(q)^{\trp} a^{(i)} + 
 \bar{\mathcal{K}}^{{(i)}^{\trp}} \bar{a}^{(i)}.
\end{align}
where $\mathcal{K}(q)$ and the parameter $a$ are partitioned 
appropriately. Since the mobile sensor $i$'s measurement is 
denoted by $\phi_i(t) := \phi(x_i(t))$, we have 
\begin{align}
 \phi_i(t) &= \mathcal{K}_i^{(i)}(t)^{\trp} a^{(i)} + 
 \bar{\mathcal{K}}_i^{(i)}(t)^{\trp} 
 \bar{a}^{(i)} \\
 &= \mathcal{K}_i^{(i)}(t)^{\trp} a^{(i)} 
 + \Delta \phi_i(t)
\end{align}
where $\mathcal{K}_i(t) := \mathcal{K}(x_i(t))$ and 
$\Delta \phi_i(t) := \bar{\mathcal{K}}_i^{(i)}(t)^{\trp} 
\bar{a}^{(i)}$. 
The basis functions in 
$\bar{\mathcal{K}}_i^{(i)}(t)$ are centred outside the region 
$\mathcal{Q}_i$ and thus their values at the points $p_i(t)$ 
are assumed to be small. Under this condition, we consider the 
contribution to $\phi(.)$ from these terms as a disturbance 
$\Delta \phi_i(t)$. 
\par \noindent Let $C = \{c_1,c_2,\dots,c_p\}$ be the set of 
centres of the basis functions, $C_i \subset C$ be its subset 
which belongs to $\mathcal{Q}_i$. We can then bound 
$\Delta \phi_i(t)$ as follows: 
\begin{lemma}
 For each mobile sensor $i$, $i \in \{1,2,\dots,N\}$,
 \begin{equation}
  |\Delta \phi_i(t)| \leq p \delta_i a_{\max}.
 \end{equation}
where $\delta_i := \max\limits_{j \in \{1,\dots,p\} } 
\exp\left\{-\frac{d_i^2}{\sigma_j^2}\right\}$, 
$d_i := \mbox{dist}(C_i,C \setminus C_i)$, 
$\mbox{dist}(A,B) = \min\limits_{a \in A, b \in B} \|a-b\|$, 
and $a_{\max}$ is an upper bound for the parameters, i.e., 
$|a^i| \leq a_{\max} \,\, \forall i \in \{1,2,\dots,p\}$. \\
Further the bound can be made independent of $i$ as follows, 
\begin{equation}
 |\Delta \phi_i(t)| \leq p \delta a_{\max}.
\end{equation}
where $\delta = \max\limits_{j \in \{1,\dots,N\} } \delta_i$.
\end{lemma}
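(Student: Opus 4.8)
The plan is to expand $\Delta\phi_i(t)$ explicitly as a weighted sum of exactly those Gaussian kernels whose centres lie outside $\mathcal{Q}_i$, and then bound that sum term-by-term. By definition $\Delta\phi_i(t) = \bar{\mathcal{K}}_i^{(i)}(t)^{\trp}\bar{a}^{(i)} = \sum_{c_j \in C\setminus C_i} a^j \mathcal{K}^j(x_i(t))$, where each $\mathcal{K}^j$ is the Gaussian kernel \eqref{eqn:gaussian}. Applying the triangle inequality together with the uniform parameter bound $|a^j| \le a_{\max}$ gives
\begin{equation*}
|\Delta\phi_i(t)| = \left| \sum_{c_j \in C\setminus C_i} a^j \mathcal{K}^j(x_i(t)) \right| \le a_{\max}\sum_{c_j \in C\setminus C_i} \mathcal{K}^j(x_i(t)).
\end{equation*}
Since the index set $C\setminus C_i$ has at most $p$ elements, it then remains only to bound each individual kernel value $\mathcal{K}^j(x_i(t))$ uniformly by $\delta_i$.

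The key estimate I would use is the pointwise bound $\mathcal{K}^j(x_i(t)) = \exp\{-\|c_j - x_i(t)\|^2/\sigma_j^2\} \le \exp\{-d_i^2/\sigma_j^2\} \le \delta_i$ for every $c_j \in C\setminus C_i$, valid as soon as $\|c_j - x_i(t)\| \ge d_i$. Here $d_i = \mbox{dist}(C_i, C\setminus C_i)$ measures the separation between the in-partition and out-of-partition centres, and the definition $\delta_i = \max_{j} \exp\{-d_i^2/\sigma_j^2\}$ takes the worst case over the kernel widths $\sigma_j$; because the Gaussian is monotone decreasing in distance, this maximum dominates every individual term. Multiplying the per-term bound $\delta_i$ by the count of at most $p$ terms and the factor $a_{\max}$ yields $|\Delta\phi_i(t)| \le p\,\delta_i\,a_{\max}$, which is the first claim.

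The hard part will be justifying the distance inequality $\|c_j - x_i(t)\| \ge d_i$ for \emph{all} times $t$: since $x_i(t)$ is the moving sensor position rather than one of the fixed centres, one must argue that the sensor never comes closer to an out-of-partition centre than the inter-partition centre separation. This is where the geometry of the partition and assignment enters — sensor $i$ remains in its region $\mathcal{Q}_i$, whose own centres constitute $C_i$, while the centres $c_j \in C\setminus C_i$ belong to the neighbouring regions, so $d_i$ serves as the relevant lower bound on the approach distance. Once the $i$-dependent estimate is in hand, the uniform bound is immediate: setting $\delta = \max_i \delta_i$ gives $\delta_i \le \delta$ and hence $|\Delta\phi_i(t)| \le p\,\delta\,a_{\max}$ for every $i$, which completes the proof.
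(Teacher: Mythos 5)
Your aggregation step is essentially the paper's argument in mildly different clothing: the paper's (one-line) proof invokes Cauchy--Schwarz, $|\Delta\phi_i(t)| \le \|\bar{\mathcal{K}}_i^{(i)}(t)\|\,\|\bar{a}^{(i)}\| \le \sqrt{p}\,\delta_i \cdot \sqrt{p}\,a_{\max} = p\,\delta_i\,a_{\max}$, whereas you use the triangle inequality with the $\ell^\infty$ bound $|a^j|\le a_{\max}$; both reduce to the same two componentwise estimates (parameters bounded by $a_{\max}$, each out-of-partition kernel bounded by $\delta_i$) and count at most $p$ terms, so this part is correct and interchangeable with the paper's route. The passage to the $i$-independent bound via $\delta=\max_i\delta_i$ is likewise fine.

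The genuine gap is precisely the step you yourself flag as ``the hard part'' and then wave through. The inequality $\|c_j - x_i(t)\| \ge d_i$ for all $t$ and all $c_j \in C\setminus C_i$ does \emph{not} follow from the sensor being confined to $\mathcal{Q}_i$: $d_i = \mathrm{dist}(C_i, C\setminus C_i)$ is a centre-to-centre separation, while what your kernel estimate needs is a trajectory-to-centre separation, i.e.\ a lower bound on $\mathrm{dist}(\mathcal{Q}_i, C\setminus C_i)$ --- and that quantity can be far smaller than $d_i$, indeed arbitrarily close to zero, since a point of $\mathcal{Q}_i$ near the boundary shared with an adjacent cell may lie much closer to an outside centre just across that boundary than any centre of $C_i$ does. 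So your closing remark that ``$d_i$ serves as the relevant lower bound on the approach distance'' is false as a general statement about the partitions in this paper, and no hypothesis about the partition or assignment stated in the paper rescues it. For what it is worth, the paper does not prove this step either: the text immediately preceding the lemma explicitly \emph{assumes} that the out-of-partition kernel values along the trajectory are small, and the one-line proof (``follows from the definition using Cauchy--Schwarz'') silently carries that standing assumption. To make your blind proof self-contained you would have to either add the hypothesis $\mathrm{dist}(x_i(t), C\setminus C_i)\ge d_i$ for all $t$ (e.g.\ the trajectory remains in a suitable inner subset of $\mathcal{Q}_i$), or restate $\delta_i$ with $d_i$ replaced by $\mathrm{dist}(\mathcal{Q}_i, C\setminus C_i)$, accepting the weaker (possibly vacuous) bound that results.
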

\begin{proof}
The lemma follows from the definition of $\Delta \phi_i(t)$ using 
Cauchy-Schwartz inequality.
\end{proof}
\par \noindent We again define the following integrators:
\begin{align}
 \dot{\Lambda}_i &= 
 s \mathcal{K}_i^{(i)} \mathcal{K}_i^{{(i)}^{\trp}} \\
 \dot{\lambda}_i &= s \mathcal{K}_i^{(i)} \phi_i
\end{align}
where $s$ is a switching signal which takes values in the set 
$\{0,1\}$.
Consider the following adaptation law:
\begin{equation}
 \dot{\hat{a}}_i = 
 -\Gamma \left( \Lambda_i \hat{a}_i - \lambda_i \right)
 \label{eqn:adaptationlaw_estimationmultiagent2}
\end{equation}
Then we have the following result:

\begin{theorem}
 \label{thm:estimation_multiagent2}
 Suppose the $N$ mobile sensors implement the parameter 
 adaptation law (\ref{eqn:adaptationlaw_estimationmultiagent2}) 
 with each sensor $i$ only estimating part of the full parameter 
 vector $a^{(i)}$. Further assume that each mobile sensor 
 $i$ produces a trajectory going through all the basis function 
 centres in $\mathcal{Q}_i$ in time $T>0$. Then 
 \[
  \lim_{t \to \infty} \|\hat{a}_i(t) - a^{(i)}\| \leq r_i,
 \]
 where $r_i = \frac{Tp \delta_i a_{\max}}{\alpha \eta_i}$, 
 $a_{\max}$ is the upper bound on the parameter values in 
 $a^{(i)}$, $\alpha \in (0,1)$ and $\eta_i$ is the smallest 
 eigen-vlaue of the matrix 
 $\int_0^T \mathcal{K}_i^{(i)} \mathcal{K}_i^{{(i)}^{\trp}} 
 d\tau $.
\end{theorem}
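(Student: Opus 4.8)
The plan is to reproduce the Lyapunov argument of Proposition~\ref{prop:1}, but now to treat the unmodelled contribution $\Delta\phi_i(t)$ as a bounded disturbance, so that instead of asymptotic convergence to zero we obtain convergence to a residual ball whose radius is set by the disturbance bound. First I would introduce the local error $\tilde{a}_i := \hat{a}_i - a^{(i)}$ and rewrite the accumulated measurement vector: substituting $\phi_i = \mathcal{K}_i^{(i)\trp} a^{(i)} + \Delta\phi_i$ into $\dot{\lambda}_i$ and integrating gives $\lambda_i = \Lambda_i a^{(i)} + d_i$, where the disturbance term is collected as
\begin{equation*}
 d_i(t) := \int_0^t s\,\mathcal{K}_i^{(i)}(\tau)\,\Delta\phi_i(\tau)\,d\tau .
\end{equation*}
Hence $\Lambda_i\hat{a}_i - \lambda_i = \Lambda_i\tilde{a}_i - d_i$, and since $a^{(i)}$ is constant the adaptation law \eqref{eqn:adaptationlaw_estimationmultiagent2} yields the error dynamics $\dot{\tilde{a}}_i = -\Gamma\Lambda_i\tilde{a}_i + \Gamma d_i$.

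Next I would assemble the two ingredients that make the residual radius explicit. On one hand, the assumption that sensor $i$ passes through every centre in $\mathcal{Q}_i$ within time $T$ lets me invoke the reasoning of Corollary~\ref{corr:pd} (restricted to the local centres, exactly as in Lemma~\ref{lemma:estimation_multiagent1}) to conclude that $\int_0^T \mathcal{K}_i^{(i)}\mathcal{K}_i^{(i)\trp}\,d\tau$ is positive definite with smallest eigenvalue $\eta_i > 0$; since the integrand $s\,\mathcal{K}_i^{(i)}\mathcal{K}_i^{(i)\trp}$ is positive semidefinite, $\Lambda_i$ is monotone, so $\Lambda_i(t)\succeq \eta_i I$ for all $t \geq T$. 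On the other hand, using the disturbance bound $|\Delta\phi_i(t)| \leq p\delta_i a_{\max}$ established above together with the boundedness of the kernel vector, I would bound $\|d_i(t)\| \leq T p \delta_i a_{\max}$.

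With these in hand I would take the quadratic certificate $V_i = \tfrac12 \tilde{a}_i^{\trp}\Gamma^{-1}\tilde{a}_i$ of the single-agent case, whose derivative along the error dynamics is $\dot{V}_i = -\tilde{a}_i^{\trp}\Lambda_i\tilde{a}_i + \tilde{a}_i^{\trp} d_i$. For $t \geq T$ this is at most $-\eta_i\|\tilde{a}_i\|^2 + \|\tilde{a}_i\|\,\|d_i\|$. Splitting the negative term as $-\eta_i\|\tilde{a}_i\|^2 = -(1-\alpha)\eta_i\|\tilde{a}_i\|^2 - \alpha\eta_i\|\tilde{a}_i\|^2$ and absorbing the cross term into the $\alpha\eta_i$ piece shows $\dot{V}_i \leq -(1-\alpha)\eta_i\|\tilde{a}_i\|^2 < 0$ whenever $\|\tilde{a}_i\| > T p\delta_i a_{\max}/(\alpha\eta_i) =: r_i$, so $\|\tilde{a}_i\|$ cannot remain outside the ball of radius $r_i$; the standard ultimate-boundedness conclusion then gives $\limsup_{t\to\infty}\|\tilde{a}_i(t)\| \leq r_i$.

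The step I expect to be the main obstacle is making the role of the switching signal $s$ precise. The clean picture that yields the closed-form radius $r_i$ requires that after the exploration phase of length $T$ the integrators $\Lambda_i,\lambda_i$ freeze: with $s=1$ on $[0,T]$ the matrix $\Lambda_i(T)$ has smallest eigenvalue exactly $\eta_i$ (matching the $s$-free integral in the statement), and with $s=0$ thereafter $\Lambda_i$ retains that value while $d_i$ stops accumulating, so $\|d_i\|$ stays bounded by $Tp\delta_i a_{\max}$. If instead $s$ continued to take the value $1$, both $\Lambda_i$ and $d_i$ would keep growing and one would have to argue that the controlling ratio remains bounded. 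Pinning down this schedule of $s$, and tracking the kernel-norm factor so that the disturbance bound collapses to precisely $Tp\delta_i a_{\max}$, is the delicate bookkeeping; the Lyapunov estimate itself is routine once these are fixed.
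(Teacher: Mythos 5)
Your proposal is correct and follows essentially the same route as the paper: the same Lyapunov certificate (your per-agent $V_i$ is just the decoupled form of the paper's summed $V=\tfrac12\sum_{i=1}^N \tilde a_i^{\trp}\Gamma^{-1}\tilde a_i$, legitimate here since the adaptation law \eqref{eqn:adaptationlaw_estimationmultiagent2} involves no inter-agent coupling), the same disturbance bound $\|d_i\|\le Tp\,\delta_i a_{\max}$, the same $\alpha$-splitting producing the radius $r_i$, and precisely the paper's switching schedule $s=1$ on $[0,T]$ and $s=0$ thereafter --- the point you flagged as the main obstacle is resolved exactly as the authors resolve it. If anything, your decoupled per-agent analysis is marginally cleaner, since extracting the individual bounds $\|\tilde a_i\|\le r_i$ from the paper's summed $\dot V$ requires a small extra step that the paper glosses over.
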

\begin{proof}
Consider 
\begin{equation}
 V = \frac{1}{2} \sum_{i=1}^N 
 \tilde{a}_i^{\trp} \Gamma^{-1} \tilde{a}_i
\end{equation}
Taking derivative,
\begin{align}
 \dot{V} &= - \sum_{i=1}^N \tilde{a}_i^{\trp} 
 \left( \Lambda_i \hat{a}_i - \lambda_i \right) \\
 &= - \sum_{i=1}^N \tilde{a}_i^{\trp} 
 \int\limits_0^t s \mathcal{K}_i^{(i)} \left(
 {\mathcal{K}_i^{(i)}}^{\trp} \hat{a}_i - 
 {\mathcal{K}_i^{(i)}}^{\trp} a^{(i)} 
 - \Delta \phi_i \right) d \tau \\
 &= - \sum_{i=1}^N \tilde{a}_i^{\trp} 
 \int\limits_0^t s \mathcal{K}_i^{(i)} \left(
 {\mathcal{K}_i^{(i)}}^{\trp}\tilde{a}_i 
 - \Delta \phi_i 
 \right) d \tau \\
 &= - \sum_{i=1}^N \tilde{a}_i^{\trp} 
 \int\limits_0^t s \mathcal{K}_i^{(i)} 
 {\mathcal{K}_i^{(i)}}^{\trp} 
 d \tau \, \tilde{a}_i 
 + \sum_{i=1}^N \tilde{a}_i^{\trp} 
 \int\limits_0^t s \mathcal{K}_i^{(i)} 
 \Delta \phi_i d \tau
\end{align}
For $t \geq T$, the first term becomes negative definite 
(assuming $s>0$). Setting $s=1$ for $t \leq T$ and $s=0$ for 
$t > T$, we have 
\begin{equation}
 \dot{V} = 
 - \sum_{i=1}^N \tilde{a}_i^{\trp} 
 \int\limits_0^T \mathcal{K}_i^{(i)} 
 {\mathcal{K}_i^{(i)}}^{\trp} 
 d \tau \, \tilde{a}_i 
 + \sum_{i=1}^N \tilde{a}_i^{\trp} 
 \int\limits_0^T \mathcal{K}_i^{(i)} 
 \Delta \phi_i d \tau
\end{equation}
for $t>T$.
Then
\begin{align}
 \dot{V} &\leq -\sum_{i=1}^N \eta_i \|\tilde{a}_i\|^2 
 + \sum_{i=1}^N \|\tilde{a}_i\| Tp \delta_i a_{\max} \\
 &\leq -\kappa V - \sum_{i=1}^N \|\tilde{a}_i\|
 \left(\alpha \eta_i \|\tilde{a}_i\| - Tp \delta_i a_{\max}\right)
\end{align}
where $\kappa = 
\frac{\eta_{\min}}{\lambda_{\max}(\Gamma^{-1})}$ and 
$\alpha \in (0,1)$. 
Thus for $\|\tilde{a}_i\| > r_i$, 
we have $\dot{V} \leq -\kappa V$ and $V$ decays exponentially. 
Therefore the statement of the theorem holds.
\end{proof}

\subsection{Improving the steady state error}
\label{sec:improve_steadystateerror}
In this section, we propose a strategy to improve the steady 
state error with the strategy in theorem 
\ref{thm:estimation_multiagent2}. Note that the strategy in 
theorem \ref{thm:estimation_multiagent2} is completely 
decentralized in that there is no real-time communication 
required between the mobile sensors to implement the estimation 
strategy. On the other hand, we can get better 
parameter estimates at the cost of exchanging information 
about parameter estimates with other mobile sensors.
\par The term $\Delta \phi_i(t)$ depends on the true value of 
parameters corresponding to the other mobile sensors 
(denoted $\bar{a}^{(i)}$). Since we do not know the true values, 
we cannot cancel this term and treat it as a disturbance. 
However we know that the other mobile sensors 
have estimates for the true values of $\bar{a}^{(i)}$. 
We can use these parameter estimates to reduce the effect of 
the $\Delta \phi_i(t)$ term on the estimation algorithm. 
Note that the vector $\bar{a}^{(i)}$ consists of the 
sub-vectors $a^{(j)}$ for all $j \neq i$. Now, corresponding 
to each $a^{(i)}$, we construct a directed graph with a rooted 
outbranching (see \cite{mesbahi2010graph}), denoted 
$\mathcal{G}_i$ which is a sub-graph of the undirected graph 
$\mathcal{G}$ with mobile sensor $i$ as the root node. 
An illustration is shown in figure \ref{fig:partition2}.
\begin{figure}
 \centering
 \includegraphics[scale=1.8]{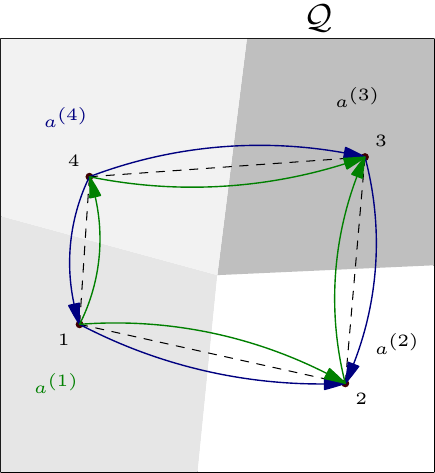}
 \caption{Illustration of four mobile sensors with the directed 
 graphs corresponding to $a^{(1)}$ and $a^{(4)}$.}
 \label{fig:partition2}
\end{figure}
\par For each mobile sensor $i$, we introduce additional states 
$b_i^j$ for each $j \in \{1,2,\dots,N\}$ and $j \neq i$, 
which evolves according to the equation  
\begin{equation}
 \dot{b}_i^j = - \sum_{k=1}^N l^d_{ik} 
 \left( \hat{b}_i^j - \hat{b}_k^j \right)
\end{equation}
where we define $b_i^i := \hat{a}_i$ for ease of notation and 
$l^d_{ik}$ is zero if there is no directed path from node $i$ 
to $k$ in graph $\mathcal{G}_j$, and non-zero constant value 
otherwise. This implements a directed consensus protocol on 
the variables $b_i^j$ with $i=1,2,\dots,N$ 
(see \cite{mesbahi2010graph}) converging to the root value 
$b_j^j = \hat{a}_j$ for each $j$. Thus $b_i^j$ is an estimator 
of $\hat{a}_j$ by mobile sensor $i$. We now use the modified 
integrators:
\begin{align}
 \dot{\Lambda}_i &= 
 s \mathcal{K}_i^{(i)} \mathcal{K}_i^{{(i)}^{\trp}} \\
 \dot{\lambda}_i &= s \mathcal{K}_i^{(i)} \left( \phi_i - 
 \bar{\mathcal{K}}_i^{{(i)}^{\trp}} b_i \right)
\end{align}
where $b_i$ is the concatenated vector given by $b_i = 
\left[ b_i^{1^{\trp}} \dots b_i^{j^{\trp}} \dots 
b_i^{N^{\trp}} \right]^{\trp}$ ($j=i$ not included). 
Using the adaptation law 
(\ref{eqn:adaptationlaw_estimationmultiagent2}) we can see that 
the disturbance term now becomes 
\begin{equation}
 \Delta \phi_i'(t) := 
 \bar{\mathcal{K}}_i^{(i)}(t)^{\trp} (\bar{a}^{(i)}-b_i)
\end{equation}
which is expected to be smaller than $\Delta \phi_i(t)$, 
although we cannot put a theoretical bound better than $r_i$ 
in theorem \ref{thm:estimation_multiagent2}. 
The stability and convergence in case of the above modification 
is not proved here as it is essentially a similar exercise to 
that in the previous section. We will investigate the effect 
of the above modification in section \ref{sec:simulations}.

\section{Unknown Centres} \label{sec:centresnotknown}
In this section, we assume as before that the scalar field is 
a finite linear combination of radial basis functions. 
We further assume that the centres are not exactly known, 
but known to within an accuracy of $\epsilon_c$, i.e., 
$\|\hat{c}_i - c_i\| \leq \epsilon_c$. We will evaluate 
the quality of parameter estimates in this case. Define 
\[
 \tilde{\mathcal{K}}(q) = \hat{\mathcal{K}}(q) - \mathcal{K}(q)
\]
where $\hat{\mathcal{K}}(q)$ is the RBF evaluated at the 
known values of the centres and $\mathcal{K}(q)$ corresponds 
to the true values of the centres.

\subsection{Each mobile sensor estimates only a part of 
the parameter vector}
As in section \ref{sec:ntwk_full}, we assume that each 
mobile sensor estimates part of the parameter vector 
$a^{(i)}$ corresponding to the partition $\mathcal{Q}_i$. 
In this case we propose the following modified filters,
\begin{align}
 \label{eqn:unknowncentres1_filters1}
 \dot{\Lambda}_i &= s\hat{\mathcal{K}}_i^{(i)} 
 \hat{\mathcal{K}}_i^{{(i)}^{\trp}} \\
 \dot{\lambda}_i &= s\hat{\mathcal{K}}_i^{(i)} \phi_i 
 \label{eqn:unknowncentres1_filters2}
\end{align}
with equation (\ref{eqn:adaptationlaw_estimationmultiagent2}) 
as the adaptation law. Then we have the following result.
\begin{prop}
 Assuming the centres are only known to within an accuracy of 
 $\epsilon_c$ ($\|\hat{c}_i - c_i < \epsilon_c\|$), 
 let each mobile sensor pass through the set of 
 known (inaccurate) centres $\hat{c}_i$ in $\mathcal{Q}_i$. 
 If each mobile sensor implements the adaptation law 
 \eqref{eqn:adaptationlaw_estimationmultiagent2} along with 
 \eqref{eqn:unknowncentres1_filters1}-
 \eqref{eqn:unknowncentres1_filters2}, 
 the estimation error $\tilde{a}_i$ converges to within a bound 
 $r_i$ of the origin, where $r_i = \frac{Tp a_{\max} 
 (\sqrt{p} k \epsilon_c + \delta_i)}{\alpha \eta_i}$.
\end{prop}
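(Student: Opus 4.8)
The plan is to replay the Lyapunov argument of Theorem~\ref{thm:estimation_multiagent2} almost verbatim, treating the error introduced by the inaccurate centres as a second bounded disturbance superimposed on $\Delta\phi_i$. I would start from the same storage function $V=\frac{1}{2}\sum_{i=1}^N \tilde a_i^{\trp}\Gamma^{-1}\tilde a_i$ with $\tilde a_i=\hat a_i-a^{(i)}$ and differentiate along the adaptation law \eqref{eqn:adaptationlaw_estimationmultiagent2} driven now by the modified filters \eqref{eqn:unknowncentres1_filters1}--\eqref{eqn:unknowncentres1_filters2}. The one structural change from the known-centre case is that the filter regressor is $\hat{\mathcal{K}}_i^{(i)}$ whereas the true measurement is still $\phi_i=\mathcal{K}_i^{(i)\trp}a^{(i)}+\Delta\phi_i$. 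Writing $\hat a_i=a^{(i)}+\tilde a_i$ and $\hat{\mathcal{K}}_i^{(i)}=\mathcal{K}_i^{(i)}+\tilde{\mathcal{K}}_i^{(i)}$, the bracket $\hat{\mathcal{K}}_i^{(i)\trp}\hat a_i-\phi_i$ decomposes into the regression term $\hat{\mathcal{K}}_i^{(i)\trp}\tilde a_i$, the familiar out-of-partition disturbance $-\Delta\phi_i$, and a genuinely new term $\tilde{\mathcal{K}}_i^{(i)\trp}a^{(i)}$ produced by evaluating the kernels at the wrong centres.

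As before, I would select the switching signal $s=1$ on $[0,T]$ and $s=0$ for $t>T$. For $t>T$ the regression term then contributes the form $-\sum_i \tilde a_i^{\trp}\bigl(\int_0^T \hat{\mathcal{K}}_i^{(i)}\hat{\mathcal{K}}_i^{(i)\trp}\,d\tau\bigr)\tilde a_i$. Because each sensor is assumed to pass through the \emph{known} centres $\hat c_i$ in $\mathcal{Q}_i$, which are still $p_i$ distinct points, Micchelli's theorem (via Corollary~\ref{corr:pd} applied to the known-centre kernels) makes this Gram matrix positive definite with smallest eigenvalue $\eta_i>0$, so this term is bounded above by $-\sum_i \eta_i\|\tilde a_i\|^2$, exactly as in Theorem~\ref{thm:estimation_multiagent2}.

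Next I would bound the two disturbance contributions uniformly in $\tau$. The out-of-partition term is handled by the earlier bound $|\Delta\phi_i|\le p\delta_i a_{\max}$, reproducing the $Tp\delta_i a_{\max}$ contribution of Theorem~\ref{thm:estimation_multiagent2}. For the new term I would exploit that the Gaussian kernel is Lipschitz in its centre: a mean-value estimate on $\exp\{-\|c-q\|^2/\sigma^2\}$ yields a finite constant $k$ (a uniform bound over $\mathcal{Q}$ on the centre-gradient) with $|\tilde{\mathcal{K}}^j(q)|=|\hat{\mathcal{K}}^j(q)-\mathcal{K}^j(q)|\le k\|\hat c_j-c_j\|\le k\epsilon_c$ for every component and every $q$. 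Assembling these componentwise bounds into a vector-norm bound on $\tilde{\mathcal{K}}_i^{(i)}$ and combining with $|a^{(i)}|\le a_{\max}$ and the length-$T$ integration, the new disturbance adds a term of order $\sqrt p\,k\epsilon_c$ to the factor already carrying $\delta_i$, so that the two disturbances together are bounded by $Tp a_{\max}(\sqrt p\,k\epsilon_c+\delta_i)$.

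These steps leave me with the same inequality shape as in Theorem~\ref{thm:estimation_multiagent2},
\[
\dot V \le -\sum_{i=1}^N \eta_i\|\tilde a_i\|^2 + \sum_{i=1}^N \|\tilde a_i\|\,Tp a_{\max}(\sqrt p\,k\epsilon_c+\delta_i),
\]
which I would rewrite as $\dot V \le -\kappa V -\sum_i \|\tilde a_i\|\bigl(\alpha\eta_i\|\tilde a_i\|-Tp a_{\max}(\sqrt p\,k\epsilon_c+\delta_i)\bigr)$ with $\kappa=\eta_{\min}/\lambda_{\max}(\Gamma^{-1})$ and $\alpha\in(0,1)$. For $\|\tilde a_i\|>r_i:=\frac{Tp a_{\max}(\sqrt p\,k\epsilon_c+\delta_i)}{\alpha\eta_i}$ each bracketed factor is positive, so $\dot V\le -\kappa V$ and the error decays exponentially into the residual ball of radius $r_i$, which is the claim. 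I expect the main obstacle to be this third step rather than the Lyapunov machinery: namely establishing the uniform-in-$q$ Lipschitz constant $k$ for the Gaussian kernel with respect to its centre and then tracking the powers of $p$ through the vector-norm assembly carefully enough to recover precisely the stated $\sqrt p\,k\epsilon_c$ dependence rather than a coarser power of $p$.
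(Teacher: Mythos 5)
Your proposal is correct and follows essentially the same route as the paper's own proof: the identical Lyapunov function, the same decomposition of $\hat{\mathcal{K}}_i^{(i)\trp}\hat a_i-\phi_i$ into the regression term, the out-of-partition disturbance $\Delta\phi_i$, and the centre-perturbation term $\tilde{\mathcal{K}}_i^{(i)\trp}a^{(i)}$, the same switching of $s$ at time $T$, and the same bounds $\|\hat{\mathcal{K}}\|\le\sqrt p$ and $\|\tilde{\mathcal{K}}\|\le\sqrt p\,k\epsilon_c$ leading to the identical final inequality and residual radius $r_i$. If anything, you are slightly more careful than the paper, which simply \emph{assumes} the Gram matrix $\int_0^T\hat{\mathcal{K}}_i^{(i)}\hat{\mathcal{K}}_i^{(i)\trp}d\tau$ becomes positive definite, whereas you justify it via Micchelli's theorem applied to the distinct known centres $\hat c_i$.
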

\begin{proof}
Consider the same Lyapunov function as before,
\[
 V = \sum_{i=1}^N \tilde{a}_i^{\trp} \Gamma^{-1} \tilde{a}_i
\]
Taking the time derivative,
\begin{align*}
 \dot{V} &= - \sum_{i=1}^N \tilde{a}_i^{\trp} 
 \left( \Lambda_i \hat{a}_i - \lambda_i \right) \\
 &= - \sum_{i=1}^N \tilde{a}_i^{\trp} 
 \int\limits_0^t s \hat{\mathcal{K}}_i^{(i)} \left(
 {\hat{\mathcal{K}}_i^{(i)}}^{\trp} \hat{a}_i - 
 {\mathcal{K}_i^{(i)}}^{\trp} a^{(i)} 
 - \Delta \phi_i \right) d \tau \\
 &= - \sum_{i=1}^N \tilde{a}_i^{\trp} 
 \int\limits_0^t s \hat{\mathcal{K}}_i^{(i)} 
 {\hat{\mathcal{K}}_i^{(i)}}^{\trp} 
 d \tau \, \tilde{a}_i 
 - \sum_{i=1}^N \tilde{a}_i^{\trp}
 \int\limits_0^t s \hat{\mathcal{K}}_i^{(i)} 
 {\tilde{\mathcal{K}}_i^{(i)}}^{\trp} 
 d \tau \, a^{(i)} \\
 & \qquad + \sum_{i=1}^N \tilde{a}_i^{\trp} 
 \int\limits_0^t s \hat{\mathcal{K}}_i^{(i)} 
 \Delta \phi_i d \tau
\end{align*}
Also note that $|\hat{\mathcal{K}}^i(q)| \leq 1 
\implies \|\hat{\mathcal{K}}(q)\| 
\leq \sqrt{p}$, and $|\tilde{\mathcal{K}}^i(q)| \leq k 
\epsilon_c \implies \|\tilde{\mathcal{K}}(q)\| \leq 
\sqrt{p}k \epsilon_c$ for some $k$ (lipschitz constant), 
Setting $s=1$ for $t \leq T$ and $s=0$ for $t > T$ as before 
and, assuming the first term becomes negative definite at 
time $T$, 
we now have 
\begin{align*}
 \dot{V} & \leq  
 - \sum_{i=1}^N \tilde{a}_i^{\trp} 
 \int\limits_0^T \hat{\mathcal{K}}_i^{(i)} 
 {\hat{\mathcal{K}}_i^{(i)}}^{\trp} 
 d \tau \, \tilde{a}_i \\
 & \qquad + \sum_{i=1}^N \|\tilde{a}_i\| Tp a_{\max} 
 (\sqrt{p}k \epsilon_c + \delta_i) \\
 & \leq
 - \kappa V - \sum_{i=1}^N \|\tilde{a}_i\|
 \left(\alpha \eta_i \|\tilde{a}_i\| - Tp a_{\max} 
 (\sqrt{p}k \epsilon_c + \delta_i) \right)
\end{align*}
for $t \geq T$.
Therefore, the statement of the theorem follows.
\end{proof}

\subsection{Each mobile sensor estimates the entire parameter 
vector}
We define the following filter equations, 
\begin{align}
 \dot{\Lambda}_i &= s\hat{\mathcal{K}}_i 
 \hat{\mathcal{K}}_i^{\trp} 
 \label{eqn:unknowncentres2_filters1} \\
 \dot{\lambda}_i &= s\hat{\mathcal{K}}_i \phi_i 
 \label{eqn:unknowncentres2_filters2}
\end{align}
The adaptation law is given by equation 
(\ref{eqn:adaptationlaw_estimationmultiagent1}).
In this case, we have the following proposition.
\begin{prop}
 Suppose the $N$ mobile sensors adopt the parameter adaptation law 
 (\ref{eqn:adaptationlaw_estimationmultiagent1}) with the 
 integrators \eqref{eqn:unknowncentres2_filters1}-
 \eqref{eqn:unknowncentres2_filters2}. Also assume 
 that each mobile sensor $i$ produces a trajectory going through 
 all the approximate basis function centres $\hat{c}_i$ in 
 $\mathcal{Q}_i$. Then the parameter estimation errors of the 
 mobile sensors converge to within a bound $r_i$ of origin, 
 where $r_i = \frac{Tp \sqrt{p} k \epsilon_c a_{\max}}
 {\alpha \eta_{\min}}$.
\end{prop}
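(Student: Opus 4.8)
The plan is to reuse the quadratic Lyapunov function $V=\frac{1}{2}\sum_{i=1}^N \tilde{a}_i^{\trp}\Gamma^{-1}\tilde{a}_i$ from Theorems \ref{thm:estimation_multiagent1} and \ref{thm:estimation_multiagent2} and to repeat the $\dot{V}$ computation under the adaptation law \eqref{eqn:adaptationlaw_estimationmultiagent1}, now feeding in the filters \eqref{eqn:unknowncentres2_filters1}--\eqref{eqn:unknowncentres2_filters2} built from the inaccurate full basis vector $\hat{\mathcal{K}}_i$. First I would rewrite the measurement residual via $\mathcal{K}_i=\hat{\mathcal{K}}_i-\tilde{\mathcal{K}}_i$: since $\phi_i=\mathcal{K}_i^{\trp}a$, one gets $\hat{\mathcal{K}}_i^{\trp}\hat{a}_i-\phi_i=\hat{\mathcal{K}}_i^{\trp}\tilde{a}_i+\tilde{\mathcal{K}}_i^{\trp}a$. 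This separates $\dot{V}$ into a benign quadratic part driven by $\hat{\mathcal{K}}_i\hat{\mathcal{K}}_i^{\trp}$, a disturbance part driven by the centre error $\tilde{\mathcal{K}}_i$, and the consensus part. As before I would switch $s=1$ on $[0,T]$ and $s=0$ afterwards, freezing the accumulated matrices at their values over $[0,T]$.

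The key step, and the one that differs essentially from the part-vector proof of Theorem \ref{thm:estimation_multiagent2}, is handling the quadratic part. Here each sensor traverses only the centres in its own partition $\mathcal{Q}_i$, so the individual matrices $M_i:=\int_0^T \hat{\mathcal{K}}_i \hat{\mathcal{K}}_i^{\trp}\,d\tau$ are merely positive semidefinite; only their sum is positive definite, by Lemma \ref{lemma:estimation_multiagent1} applied to the approximate centres. Consequently I would not discard the consensus term as a mere nonpositive contribution but keep it and combine it with the quadratic term. Stacking the errors as $\tilde{a}=[\tilde{a}_1^{\trp}\,\dots\,\tilde{a}_N^{\trp}]^{\trp}$ and noting $\hat{a}_i-\hat{a}_j=\tilde{a}_i-\tilde{a}_j$, the two terms together read $-\tilde{a}^{\trp}H\tilde{a}$ with $H=\mathrm{blockdiag}(M_1,\dots,M_N)+\zeta\,(L\otimes I_p)$. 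I would then show $H$ is positive definite: if $\tilde{a}^{\trp}H\tilde{a}=0$ then the Laplacian term forces consensus $\tilde{a}_1=\dots=\tilde{a}_N=:w$, whereupon the quadratic term collapses to $w^{\trp}(\sum_i M_i)w$, which vanishes only for $w=0$ since $\sum_i M_i$ is positive definite. Writing $\eta_{\min}:=\lambda_{\min}(H)>0$ gives $-\tilde{a}^{\trp}H\tilde{a}\le-\eta_{\min}\sum_i\|\tilde{a}_i\|^2$.

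It then remains to bound the disturbance term $-\sum_i\tilde{a}_i^{\trp}\big(\int_0^T s\,\hat{\mathcal{K}}_i\tilde{\mathcal{K}}_i^{\trp}\,d\tau\big)a$. Using the bounds $\|\hat{\mathcal{K}}\|\le\sqrt{p}$, $\|\tilde{\mathcal{K}}\|\le\sqrt{p}\,k\epsilon_c$ and $\|a\|\le\sqrt{p}\,a_{\max}$ already recorded in the preceding proposition, each accumulated vector has norm at most $Tp\sqrt{p}\,k\epsilon_c\,a_{\max}$, so the disturbance term is at most $\sum_i\|\tilde{a}_i\|\,Tp\sqrt{p}\,k\epsilon_c\,a_{\max}$. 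Combining with the quadratic estimate and splitting $\eta_{\min}$ through a factor $\alpha\in(0,1)$ yields, for $t>T$, $\dot{V}\le-\kappa V-\sum_{i=1}^N\|\tilde{a}_i\|\big(\alpha\eta_{\min}\|\tilde{a}_i\|-Tp\sqrt{p}\,k\epsilon_c\,a_{\max}\big)$ with $\kappa=(1-\alpha)\eta_{\min}/\lambda_{\max}(\Gamma^{-1})$ up to the usual factor. Whenever $\|\tilde{a}_i\|>r_i$ the bracket is positive and $\dot{V}\le-\kappa V$, so each $\tilde{a}_i$ is driven into the ball of radius $r_i=Tp\sqrt{p}\,k\epsilon_c\,a_{\max}/(\alpha\eta_{\min})$, exactly the claimed bound.

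The main obstacle is precisely the positive-definiteness of $H$: unlike in Theorem \ref{thm:estimation_multiagent2}, the excitation supplied by any single sensor is deficient, and strict decay of $V$ is recovered only through the interplay of the collective excitation (Lemma \ref{lemma:estimation_multiagent1}) with the Laplacian consensus coupling. Everything else is the routine norm bookkeeping and the input-to-state argument already used above; one must only read the constant $\eta_{\min}$ in the statement as the smallest eigenvalue of this combined measurement-plus-consensus matrix $H$ rather than of any single $M_i$.
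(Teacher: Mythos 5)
Your proposal is correct and follows essentially the same route as the paper's proof: the same stacked Lyapunov argument, the same splitting of $\dot V$ into the $\hat{\mathcal{K}}_i\hat{\mathcal{K}}_i^{\trp}$ quadratic part, the centre-error disturbance, and the consensus term, and the same key step of proving positive definiteness of the combined matrix via the Laplacian's consensus nullspace together with Lemma \ref{lemma:estimation_multiagent1}, followed by identical norm bounds yielding $r_i$. The only cosmetic difference is that you write the coupling directly as $\zeta\,(L\otimes I_p)$ added to $\mathrm{blockdiag}(M_1,\dots,M_N)$, whereas the paper reaches the same matrix through a permutation conjugation $P^{\trp}\underline{L}P$ of a parameter-indexed stacking.
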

\begin{proof}
Consider the lyapunov function 
\[
 V = \sum_{i=1}^N \tilde{a}_i^{\trp} \Gamma^{-1} \tilde{a}_i
\]
Taking the derivative of $V$,
\begin{align*}
 \dot{V} &= \sum_{i=1}^N 
 \tilde{a}_i^{\trp} \Gamma^{-1} \dot{\hat{a}}_i \\
 &= -\sum_{i=1}^N \tilde{a}_i^{\trp} 
 \left( \Lambda_i \hat{a}_i - \lambda_i \right) - 
 \zeta \sum_{i=1}^N \tilde{a}_i^{\trp} l_{ij} 
 \left( \hat{a}_i - \hat{a}_j \right)
\end{align*}
Substituting for the variables $\Lambda_i$, $\lambda_i$ and 
rearranging the second term,
\begin{align*}
 \dot{V} &= -\sum_{i=1}^N \tilde{a}_i^{\trp} 
 \int\limits_0^t 
 s \hat{\mathcal{K}}_i \hat{\mathcal{K}}_i^{\trp} 
 d\tau \tilde{a}_i 
 - \sum_{i=1}^N \tilde{a}_i^{\trp}
 \int\limits_0^t s \hat{\mathcal{K}}_i 
 {\tilde{\mathcal{K}}_i}^{\trp} 
 d \tau \, a^{(i)} \\
 & \qquad - \zeta \sum_{\alpha=1}^p \hat{a}^{{\alpha}^{\trp}} 
 L \hat{a}^{\alpha} 
\end{align*}
Simplifying,
\begin{align*}
 \dot{V} &= -\sum_{i=1}^N \tilde{a}_i^{\trp} 
 \int\limits_0^T 
 \hat{\mathcal{K}}_i \hat{\mathcal{K}}_i^{\trp} 
 d\tau \tilde{a}_i 
 - \sum_{i=1}^N \tilde{a}_i^{\trp}
 \int\limits_0^T \hat{\mathcal{K}}_i 
 {\tilde{\mathcal{K}}_i}^{\trp} 
 d \tau \, a^{(i)} \\
 & \qquad - \zeta \sum_{\alpha=1}^p \tilde{a}^{{\alpha}^{\trp}} 
 L \tilde{a}^{\alpha} \\
\end{align*}
for $t \geq T$.
We can write the first and last terms in the above equation 
in terms of stacked vectors as 
\begin{align*}
 \dot{V} &= - \tilde{\underline{a}}^{\trp} \underline{Q} 
 \tilde{\underline{a}} 
 - \zeta \, \tilde{\underline{a}}^{\trp} P^{\trp} 
 \underline{L} P \tilde{\underline{a}} 
 - \tilde{\underline{a}}^{\trp}
 E \underline{a} \\
 &= - \tilde{\underline{a}}^{\trp} \left( \underline{Q} 
 + \zeta \,  P^{\trp} 
 \underline{L} P \right) \tilde{\underline{a}} 
 - \tilde{\underline{a}}^{\trp}
 E \underline{a} 
\end{align*}
where $\tilde{\underline{a}} = \left[ \tilde{a}_1^{\trp} \, 
\tilde{a}_2^{\trp} \, 
\dots \, \tilde{a}_N^{\trp} \right]^{\trp}$, 
\begin{align*}
 \underline{Q} &= 
 \left[ \begin{array}{cccc}
 \int_0^T  \hat{\mathcal{K}}_1 \hat{\mathcal{K}}_1^{\trp} d\tau 
 & \dots & 0 \\
 0 & \dots & 0 \\
 \vdots & \ddots & \vdots \\
 0 & \dots & \int_0^T  \hat{\mathcal{K}}_N 
 \hat{\mathcal{K}}_N^{\trp} d\tau
 \end{array} \right], \\
 \underline{L} &= 
 \left[ \begin{array}{cccc}
 L & 0 & \dots & 0 \\
 0 & L & \dots & 0\\
 \vdots & \vdots & \ddots & \vdots \\
 0 & 0 & \dots & L
 \end{array} \right], \\
 E &= 
 \left[ \begin{array}{cccc}
 \int_0^T  \hat{\mathcal{K}}_1 \tilde{\mathcal{K}}_1^{\trp} d\tau 
 & \dots & 0 \\
 0 & \dots & 0 \\
 \vdots & \ddots & \vdots \\
 0 & \dots & \int_0^T  \hat{\mathcal{K}}_N 
 \tilde{\mathcal{K}}_N^{\trp} d\tau
 \end{array} \right]
\end{align*}
and $P$ is the permutation matrix 
\[
 P = \left[ \begin{array}{ccccccc}
 1 & 0 & \dots & 0 & 0 & \dots & 0 \\
 0 & 0 & \dots & 1 & 0 & \dots & 0 \\
 \vdots & \vdots & \vdots & \vdots & \vdots & \vdots & \vdots \\
 0 & 1 & \dots & 0 & 0 & \dots & 0 \\
 0 & 0 & \dots & 0 & 1 & \dots & 0 \\
 \vdots & \vdots & \vdots & \vdots & \vdots & \vdots & \vdots \\
 \end{array} \right]
\]
of dimension $Np \times Np$. We show that the matrix 
$ \left( \underline{Q} + P^{\trp} \underline{L} P \right) $
is positive definite. 
Each of the terms are positive semi-definite. 
The nullspace of matrix $\underline{L}$ contains elements 
of the form 
\[ c_1 \left[ \begin{array}{c} \mathbf{1}_p \\ 0 \\ \vdots \\ 0 
 \end{array} \right] + c_2 \left[ \begin{array}{c} 0 \\ 
 \mathbf{1}_p \\ \vdots \\ 0 \end{array} \right] + \dots + 
 c_N \left[ \begin{array}{c} 0 \\ 0 \\ \vdots \\ \mathbf{1}_p 
 \end{array} \right]. \] 
Therefore $P^{\trp} \underline{L} P$ has nullspace elements of 
the form \[ c_1 \left[ \begin{array}{c} 1 \\ 0 \\ \vdots \\ 0 \\ 
1 \\ 0 \\ \vdots \\ 0 \end{array} \right] + c_2 \left[ 
\begin{array}{c} 0 \\ 1 \\ \vdots \\ 0 \\ 0 \\ 1 \\ 
\vdots \\ 0 \end{array} \right] + \dots + c_N \left[ 
\begin{array}{c} 0 \\ 0 \\ \vdots \\ 1 \\ 0 \\ 0 \\ \vdots \\ 1 
\end{array} \right], \] i.e., elements of the form 
$\left[ \, c_1 \, c_2 \, \dots \, c_N \, c_1 \, c_2 \, \dots \, 
c_N \right]^{\trp}$. Correspondingly the $\underline{Q}$ term 
can be written as 
\[
 c^{\trp} \sum_{i=1}^N \int_0^T \hat{\mathcal{K}}_i 
 \hat{\mathcal{K}}_i^{\trp} dt \, c
\]
where $c = \left[ \, c_1 \, c_2 \, \dots \, c_N \right]^{\trp}$.
Under the assumptions of the proposition, and lemma 
\ref{lemma:estimation_multiagent1}, the above term is strictly 
positive. Hence 
$\left( \underline{Q} + P^{\trp} \underline{L} P \right)$ 
is positive definite. Let $\eta_{\min}$ be the smallest 
eigen-value of 
$\left( \underline{Q} + P^{\trp} \underline{L} P \right)$. 
Then we have 
\begin{align*}
 \dot{V} & \leq -\kappa V - \alpha \eta_{\min} 
 \|\underline{\tilde{a}}\|^2 
 + \sum_{i=1}^N \|\tilde{a}_i\| Tp \sqrt{p} k \epsilon_c a_{\max} 
 \\
 & = -\kappa V - \alpha \eta_{\min} \sum_{i=1}^N \|\tilde{a}_i\| 
 \left( \|\tilde{a}_i\| - \frac{Tp \sqrt{p} k \epsilon_c a_{\max}}
 {\alpha \eta_{\min}} \right)
\end{align*}
for some $\kappa > 0$.
Thus for 
$\|\tilde{a}_i\| > \frac{Tp \sqrt{p} k \epsilon_c a_{\max}}
{\alpha \eta_{\min}}$, $V$ decreases exponentially and the 
result holds.
\end{proof}

\section{Simulations} \label{sec:simulations}
In this section, we verify the algorithms presented using 
simulations. First we consider the exact parameterization case 
where the true scalar field is a linear combination of RBFs 
with the centres of the RBFs being known. This case allows us to 
verify the correctness of the algorithms presented in the paper. 
Next we consider a scalar field which is completely unknown, 
and use the algorithms presented to reconstruct the scalar field. 
The mobile sensors in the simulations are assumed to be single 
integrators with dynamics given by $\dot{x}_i = u_i$ where 
$x_i$ is the position of sensor $i$ and $u_i$ is its control 
input. For ease of comparing various algorithms, we refer to 
the algorithm in section \ref{sec:ntwk_full} as 
\emph{Algorithm S1}, the algorithm presented in section 
\ref{sec:ntwk_part} as \emph{Algorithm S2}, and the modified 
version of algorithm $S2$ in section 
\ref{sec:improve_steadystateerror} as \emph{Algorithm S3}.
\subsection{Exact parameterization}
We consider the unit square region $\mathcal{Q}$ with four 
mobile sensors. The scalar field to be estimated is exactly 
parameterized in terms of Gaussian RBFs 
(given by equation \eqref{eqn:gaussian}), the $x$ and $y$ 
coordinates of the RBF centres $c_i$ being given in table 
\ref{tab:sim1}. The standard deviation of each of the gaussians 
$\sigma_i$ is chosen to be $0.1$. The true parameter values $a^i$ 
are also given in table \ref{tab:sim1}.
\begin{table}
 \begin{tabular}{l|cccccccc}
  $c_{i,x}$ & $0.20$ & $0.35$ & $0.60$ & $0.85$ & $0.70$ & $0.75$ 
  & $0.15$ & $0.35$ \\	
  $c_{i,y}$ & $0.25$ & $0.26$ & $0.18$ & $0.30$ & $0.75$ & $0.90$ 
  & $0.75$ & $0.60$ \\	
  $a^{i}$ & $2.0$ & $1.0$ & $1.5$ & $1.8$ & $1.2$ & $1.6$ 
  & $2.5$ & $1.1$ \\	
 \end{tabular}
 \caption{Parameters of the simulated scalar field}
 \label{tab:sim1}
\end{table}
The scalar field is shown in figure \ref{fig:scalarfield1}.
\begin{figure}
 \begin{subfigure}{0.23\textwidth}
  \centering
  \includegraphics[scale=0.38]
  {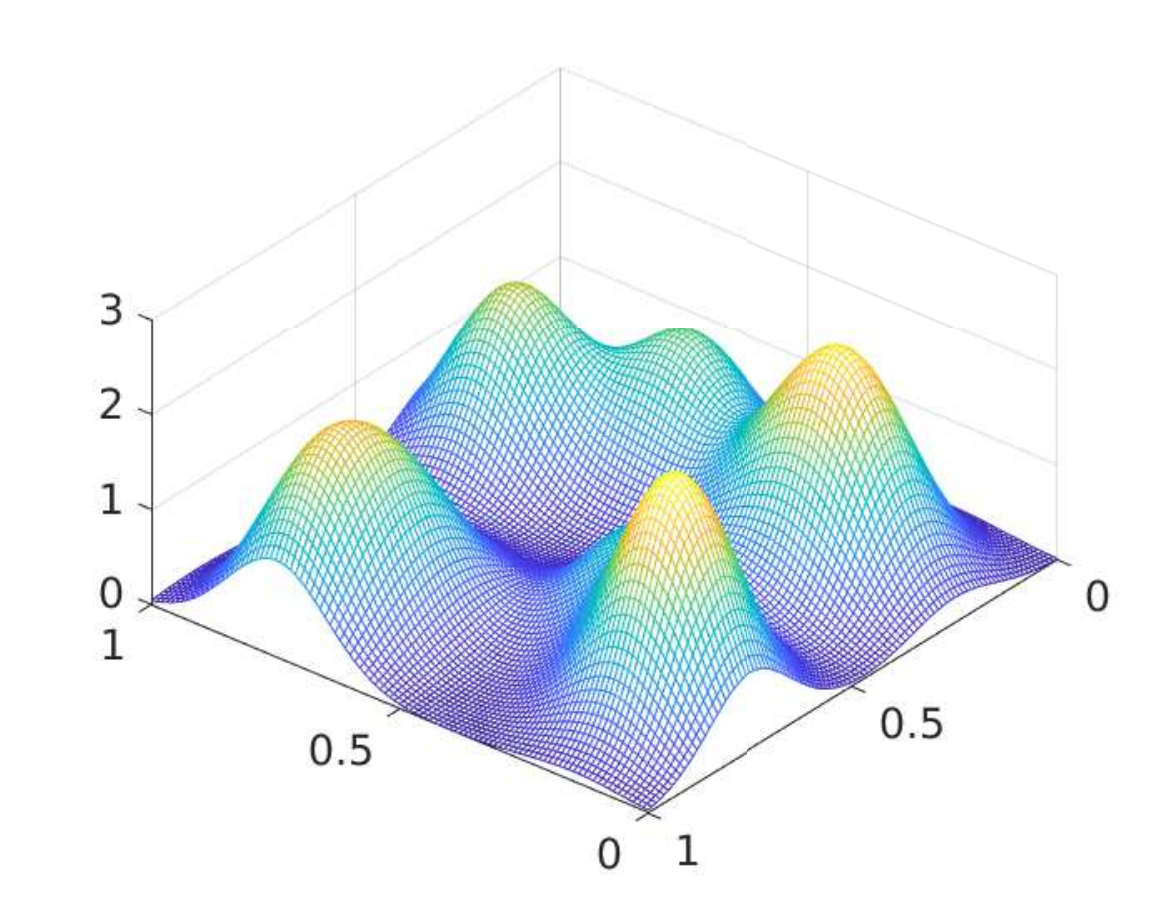}
 \end{subfigure}
 \begin{subfigure}{0.24\textwidth}
  \centering
  \includegraphics
  [scale=0.45]
  {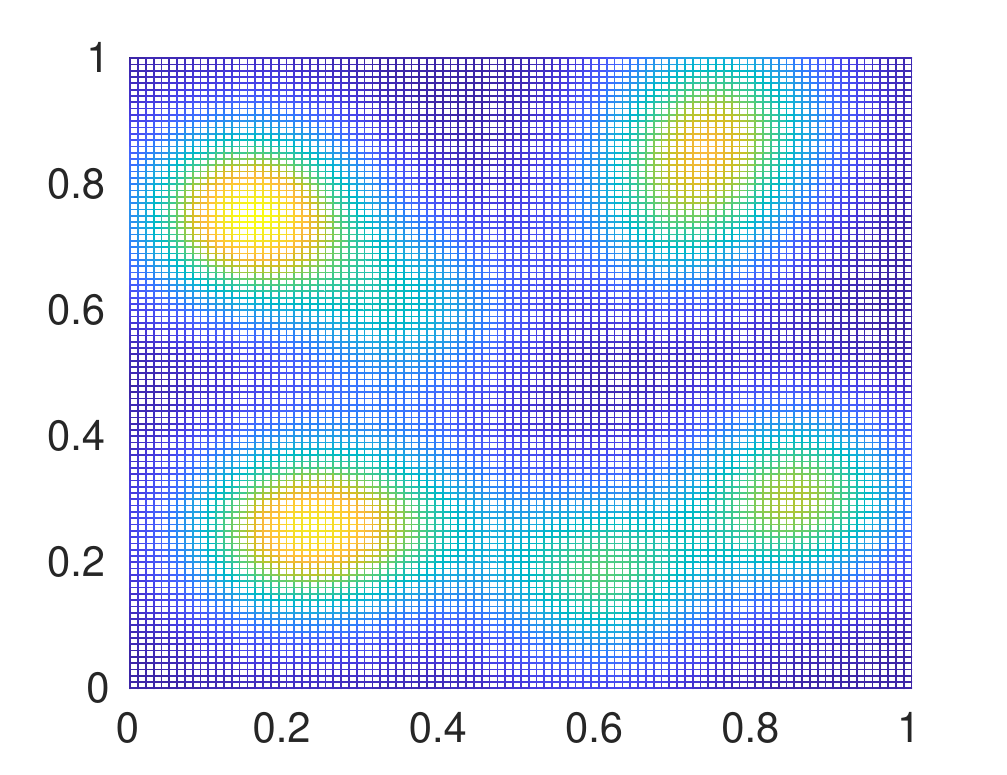}
 \end{subfigure}
 \caption{The scalar field used for verifying the algorithms}
 \label{fig:scalarfield1}
\end{figure}
The initial positions of the mobile sensors were chosen randomly 
and shown in figure \ref{fig:reconstructedfield1}. The partition 
of the region was done by constructing the voronoi cells for 
each mobile sensor. The Voronoi cell of mobile sensor $i$ 
(denoted $\mathcal{Q}_i$) consists of those points which are 
closer to sensor $i$ as compared to all other sensors: 
\begin{equation}
 \mathcal{Q}_i = \{ q \in \mathcal{Q} : \|q-x_i\| \leq \|q-x_j\| 
 , j=1,2,\dots,N; j \neq i \}
 \label{eqn:voronoi}
\end{equation}
For motion control of the sensors, we use a proportional control 
law $u_i = k (x_i - x_{gi})$ where $x_{gi}$ is made to 
switch between all the centres in the region $\mathcal{Q}_i$ 
making sure the condition in lemma \ref{lemma:relaxedpd} is 
satisfied. The control gain $k$ was chosen to be $5$. 
The simulation ran for $16.5$ seconds. The excitation condition 
was achieved in $T = 1.5$ seconds. The reconstructed 
scalar field with algorithm S$1$ is shown in figure 
\ref{fig:reconstructedfield1} on the right and the average 
(across all the mobile sensors) parameter estimation error is 
shown in figure \ref{fig:estimationerror1}. It can be seen that 
the parameters converge exactly to the true values and exact 
reconstruction is achieved.
\begin{figure}
 \begin{subfigure}{0.24\textwidth}
  \centering
  \includegraphics
  [scale=0.4]
  {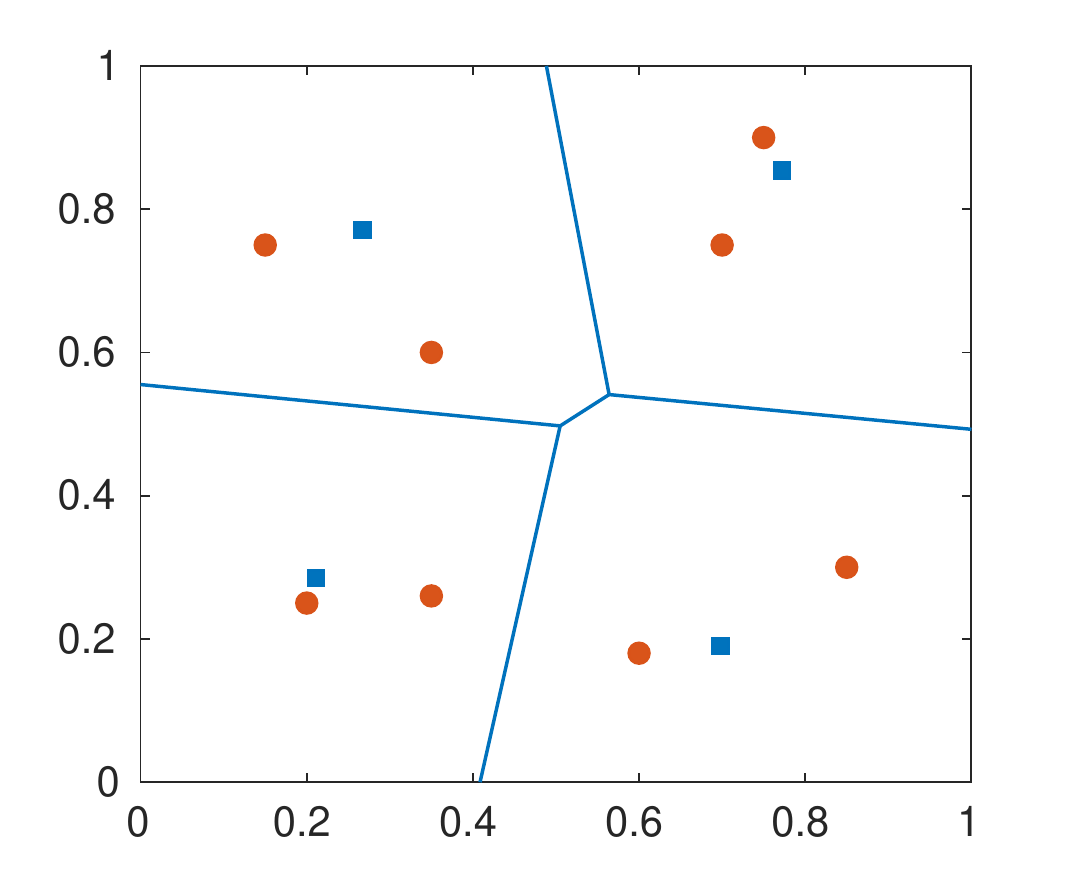}
  \caption{\small Partitions} 
 \end{subfigure}
 \begin{subfigure}{0.24\textwidth}
  \centering
  \includegraphics
  [scale=0.45]
  {figures/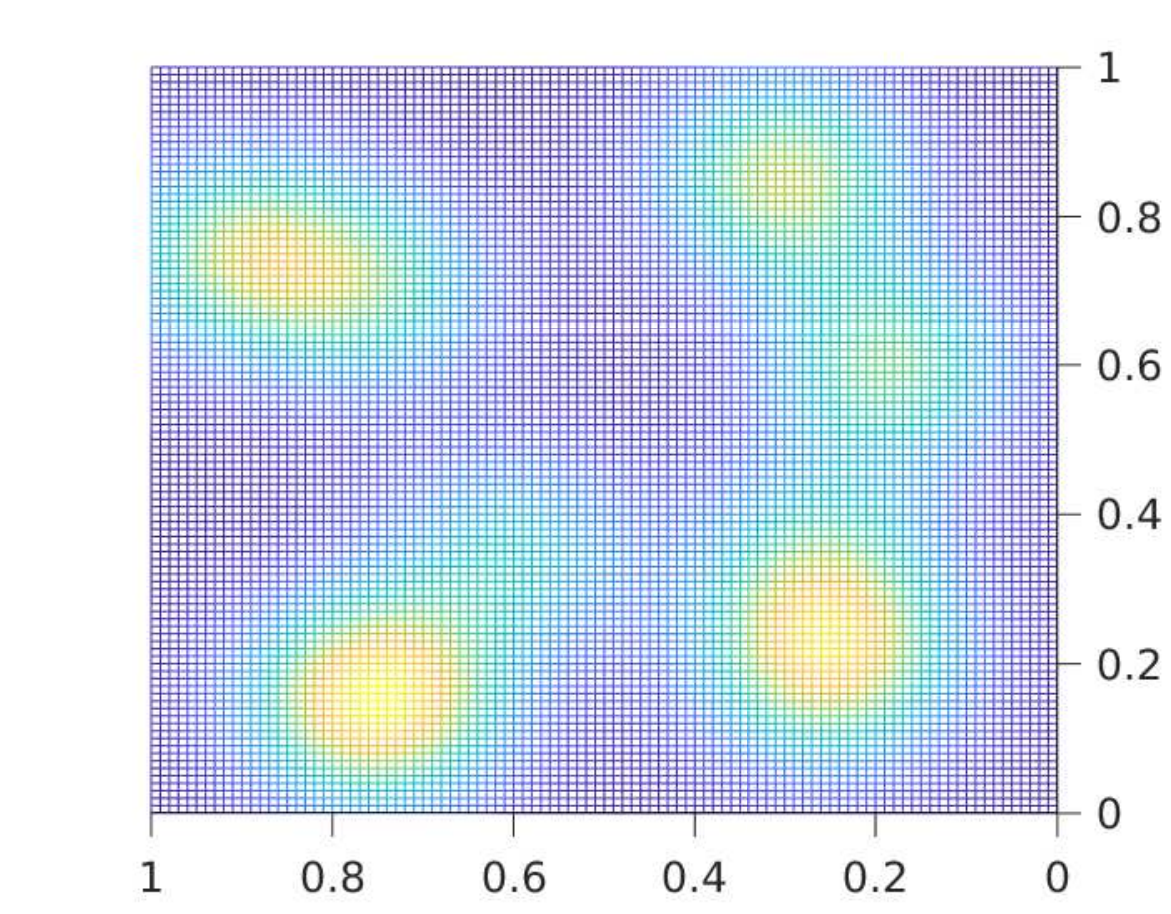}
  \caption{\small Algorithm S$1$.} 
 \end{subfigure}
 \caption{Left: Initial positions (blue squares), corresponding 
 partitions and centres of RBFs (red circles); Right: 
 Reconstructed field using algorithm S$1$.} 
 \label{fig:reconstructedfield1}
\end{figure}
\begin{figure}
 \centering
 \includegraphics
 [width=0.5\textwidth, height=0.2\textheight]
 {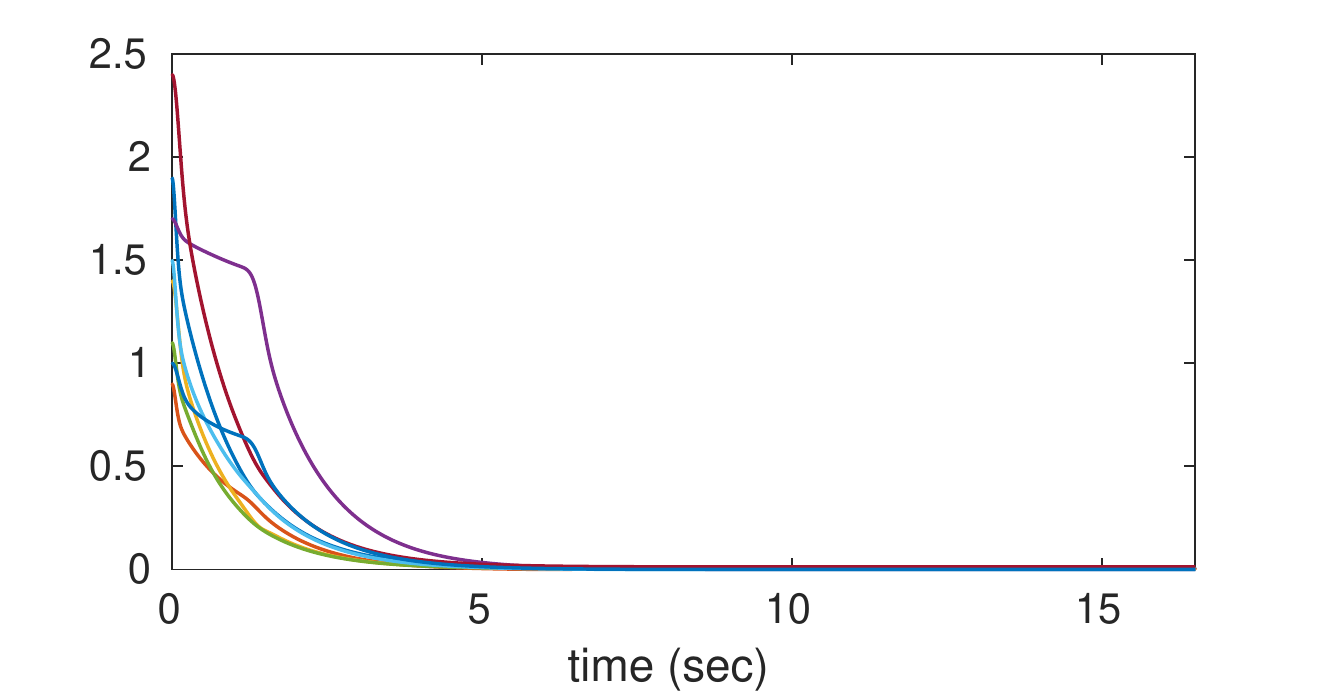}
 \caption{Algorithm S$1$: Average parameter estimation error 
 with time}
 \label{fig:estimationerror1}
\end{figure}
The reconstructed field with algorithm S$2$ and 
algorithm S$3$ are shown in figure \ref{fig:reconstructedfield2}. 
The corresponding estimation errors are shown in figures 
\ref{fig:estimationerror2} and \ref{fig:estimationerror3} 
respectively.
\begin{figure}
 \begin{subfigure}{0.23\textwidth}
  \centering
  \includegraphics
  [scale=0.38]
  {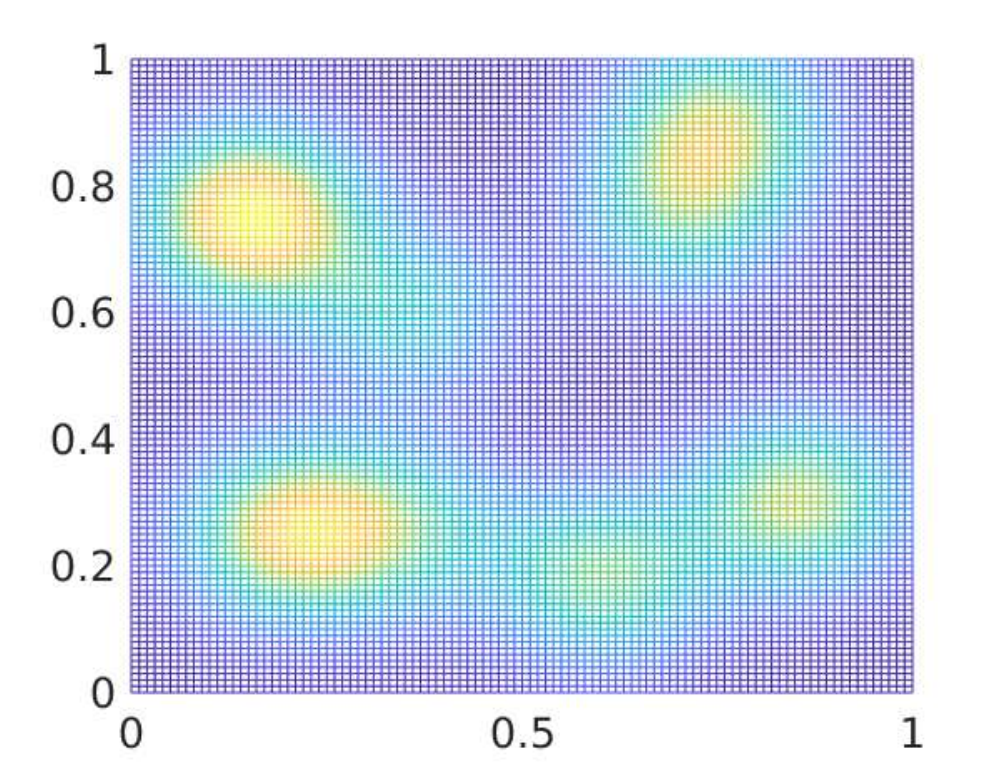}
  \caption{\small Algorithm S$2$.} 
 \end{subfigure}
 \begin{subfigure}{0.24\textwidth}
  \centering
  \includegraphics
  [scale=0.38]
  {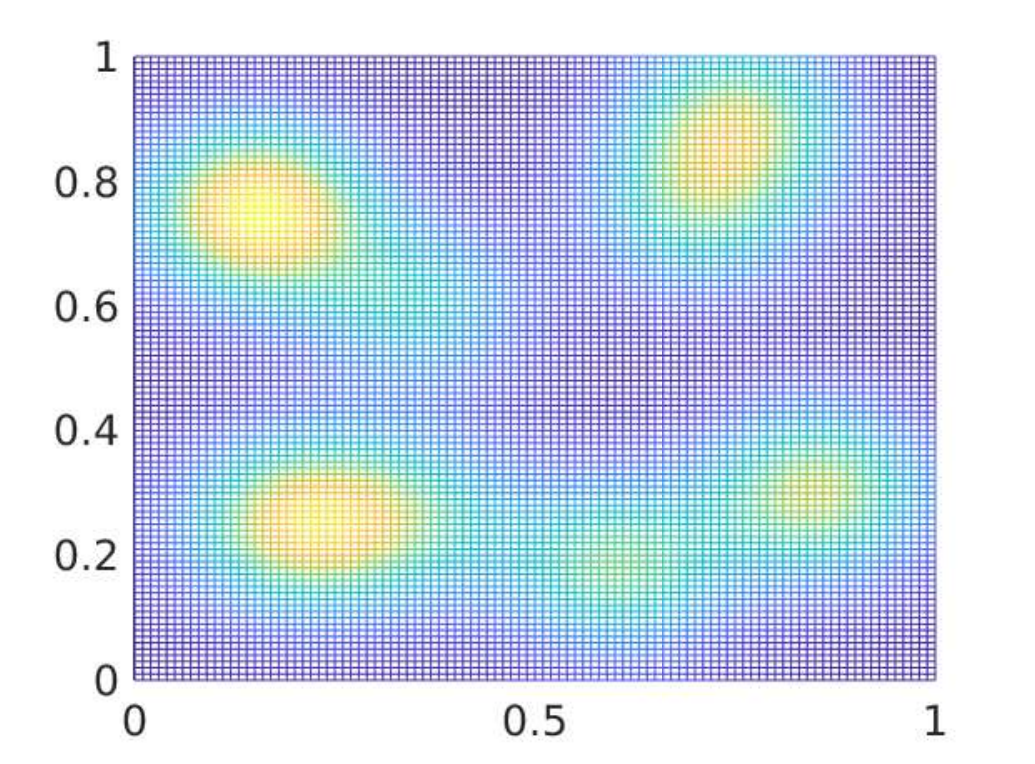}
  \caption{\small Algorithm S$3$.} 
 \end{subfigure}
 \caption{The reconstructed field using algorithm S$2$ and S$3$.}
 \label{fig:reconstructedfield2}
\end{figure}
\begin{figure}
 \centering
 \includegraphics
 [width=0.5\textwidth, height=0.2\textheight]
 {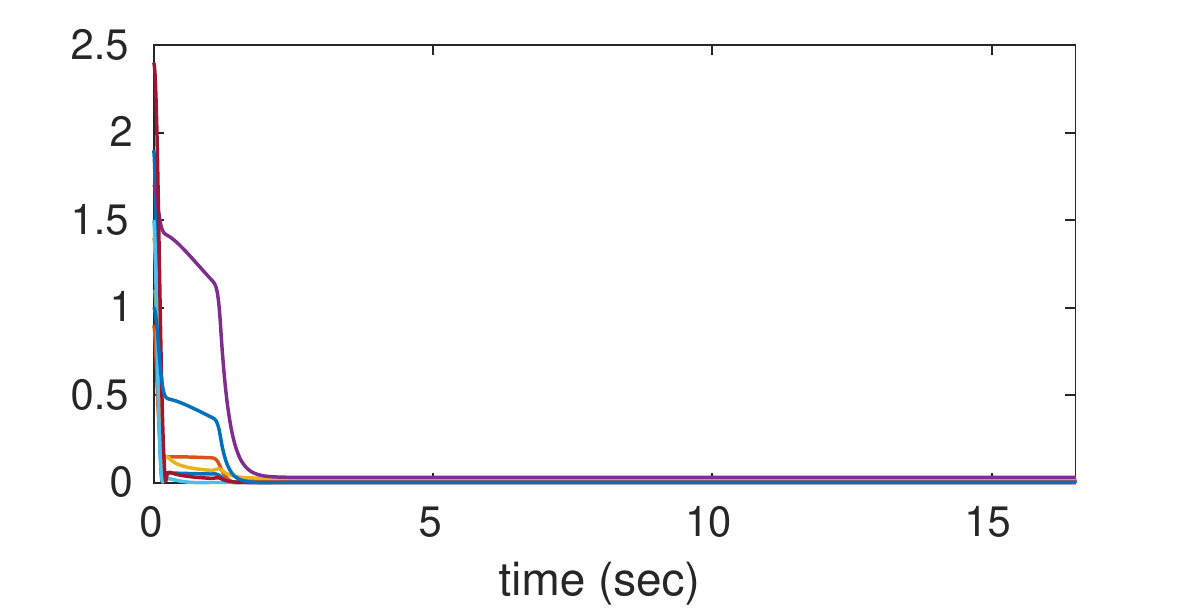}
 \caption{Algorithm S$2$: Average parameter estimation error 
 with time}
 \label{fig:estimationerror2}
\end{figure}
\begin{figure}
 \centering
 \includegraphics
 [width=0.5\textwidth, height=0.2\textheight]
 {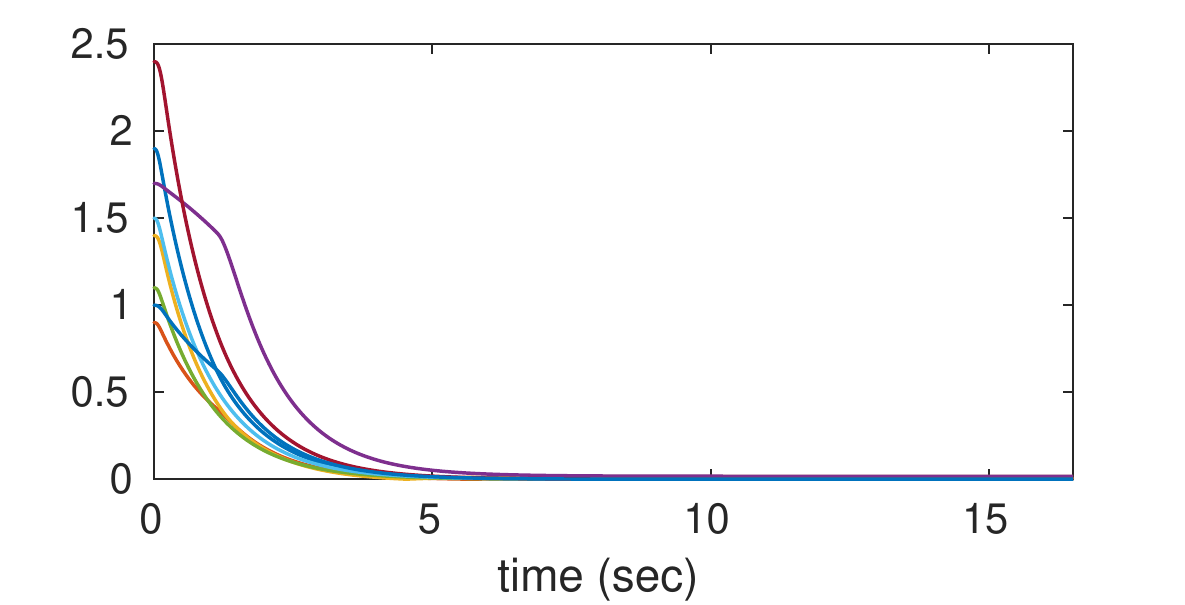}
 \caption{Algorithm S$3$: Average parameter estimation 
 error with time}
 \label{fig:estimationerror3}
\end{figure}
The maximum parameter estimation error using algorithm S$2$ 
was found to be $0.030$ and using the algorithm S$3$ was 
found to be $0.017$. Thus the algorithm S$3$ is seen to 
give better parameter estimates in this case.
\par We also present simulation results where we do not know 
the exact value of the centres of the RBFs (as in section 
\ref{sec:centresnotknown}). We assume we know the centres within 
an accuracy of $\epsilon_c = 0.05$. For this, we add a random 
perturbation (bounded by $\epsilon_c$) to the true centre 
coordinates and use the perturbed centres in the estimation 
algorithm. The reconstructed fields with algorithms S$1$, 
S$2$ and S$3$ are shown in figures 
\ref{fig:reconstructedfield3} and \ref{fig:reconstructedfield4} 
respectively. Table \ref{tab:unknowncentres_maxerror} also 
compares the maximum steady state parameter errors in the 
three cases. As expected, algorithm S$1$ has much lower steady 
state error compared to algorithm S$2$ and 
algorithm S$3$ performs better than algorithm S$2$. It should be 
noted that all the algorithms identify the main features of the 
true field, as seen from the reconstructed field plots.
\begin{figure}
 \begin{subfigure}{0.23\textwidth}
  \centering
  \begin{tabular}{lc}
   \toprule 
   {\small \textit{Algorithm}} & 
   {\small \textit{Max. est. error}} \\
   \midrule 
   \emph{\small S$1$} & $0.16$ \\
   \emph{\small S$2$} & $0.62$ \\
   \emph{\small S$3$} & $0.44$ \\
   \bottomrule
  \end{tabular}
  \caption{\small Max. parameter estimation errors.} 
  \label{tab:unknowncentres_maxerror}
 \end{subfigure}
 \begin{subfigure}{0.23\textwidth}
  \centering
  \includegraphics
  [scale=0.38]
  {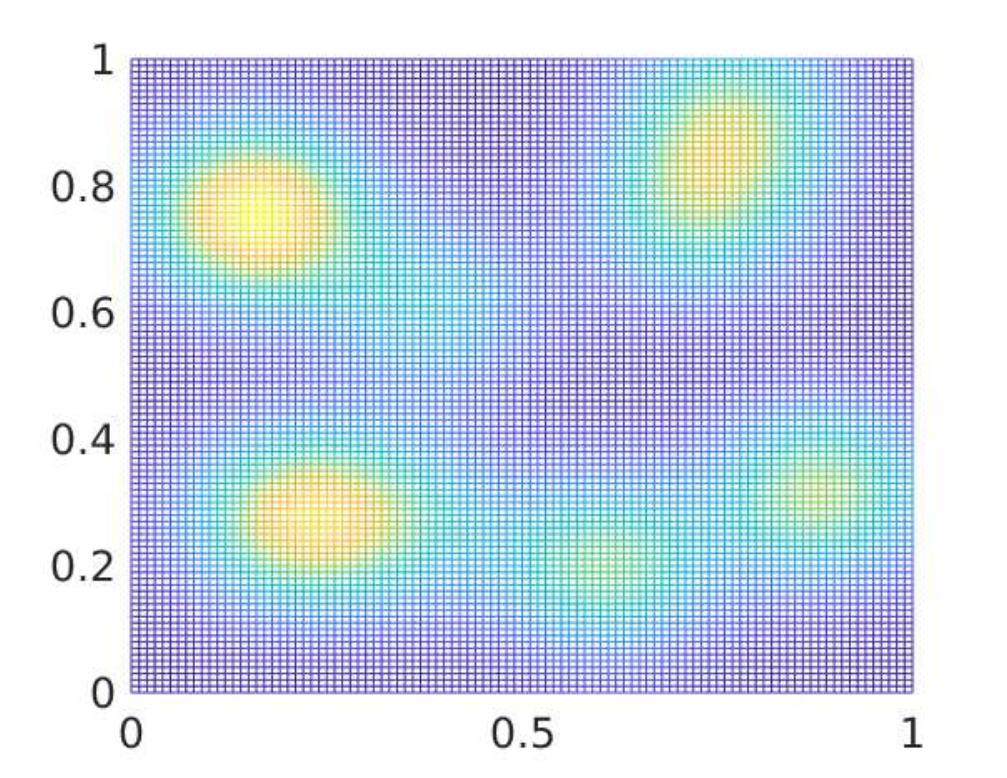}
  \caption{\small Algorithm S$1$.}
 \end{subfigure}
 \caption{Unknown Centres: Max. parameter estimation errors 
 (left) and the reconstructed field using algorithm S$1$ (right).}
 \label{fig:reconstructedfield3}
\end{figure}
\begin{figure}
 \begin{subfigure}{0.23\textwidth}
  \centering
  \includegraphics
  [scale=0.38]
  {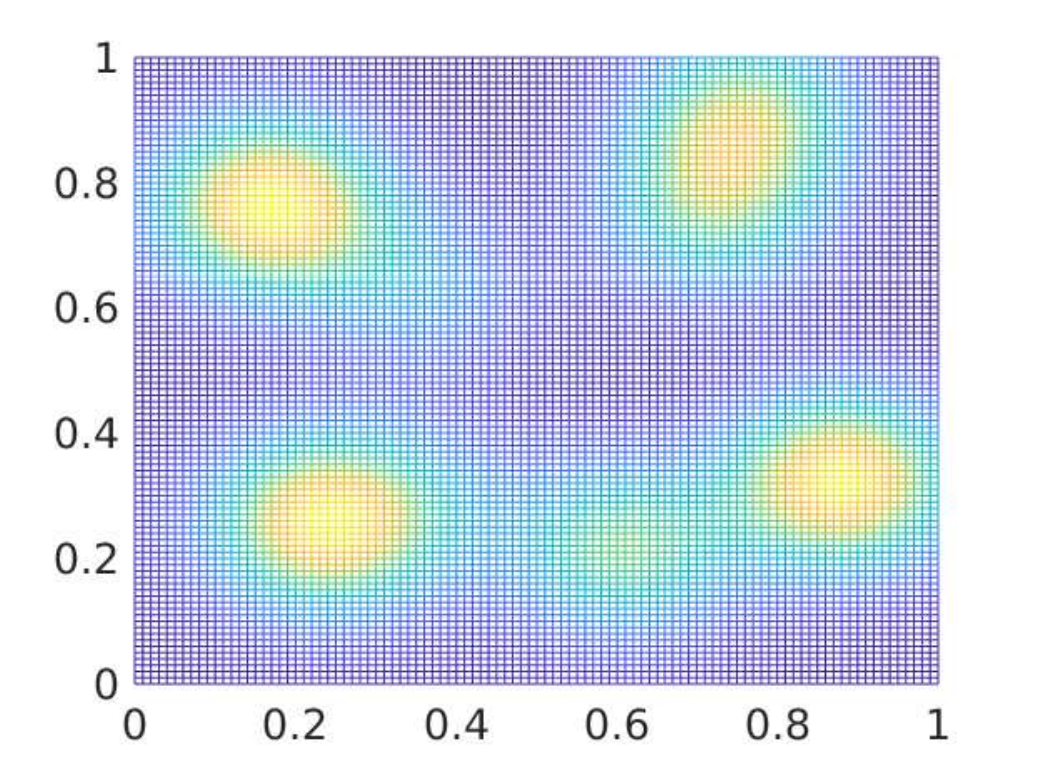}
  \caption{\small Algorithm S$2$.}
 \end{subfigure}
 \begin{subfigure}{0.24\textwidth}
  \centering
  \includegraphics
  [scale=0.38]
  {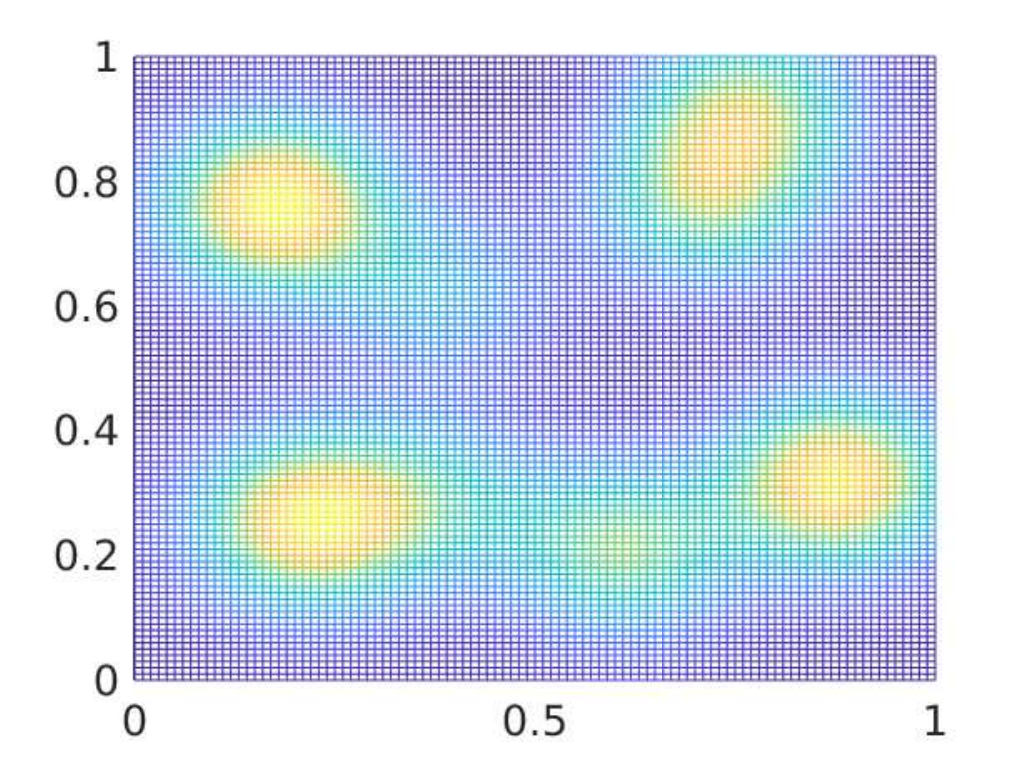}
  \caption{\small Algorithm S$3$.}
 \end{subfigure}
 \caption{Unknown Centres: Reconstructed field.}
 \label{fig:reconstructedfield4}
\end{figure}

\subsection{Fully unknown scalar field}
\par Now we test the estimation algorithms on a more general 
scalar field which is not a linear combination of RBFs. 
For this we consider the continuous scalar field given by 
\begin{align*}
 \phi(x,y) &= 3x^2 e^{\frac{-(x-0.7)^2 - (y-0.7)^2}{0.05}} + 
 e^{\frac{-(x-0.4)^2-(y-0.4)^2}{0.06}} \\ 
 & \qquad + \frac{1}{3} e^{\frac{-(x-0.2)^2-(y-0.2)^2}{0.08}}.
\end{align*}
over the unit square region $\mathcal{Q}$.
A plot of $\phi(\cdot)$ is shown in figure \ref{fig:scalarfield2}.
We use $N=5$ mobile sensors with the partitions $\mathcal{Q}_i$ 
determined as follows: We first run a uniform coverage algorithm 
(coverage algorithm presented in \cite{cortes2004coverage} with 
a uniform density function $\phi(q) \equiv 1$). This makes the 
mobile sensors uniformly spread out in the region $\mathcal{Q}$. 
We then compute the voronoi partition \eqref{eqn:voronoi} of 
the sensors and use it as the required partition $\mathcal{Q}_i$.

\begin{figure}
 \begin{subfigure}{0.24\textwidth}
  \centering
  \includegraphics
  [scale=0.4]
  {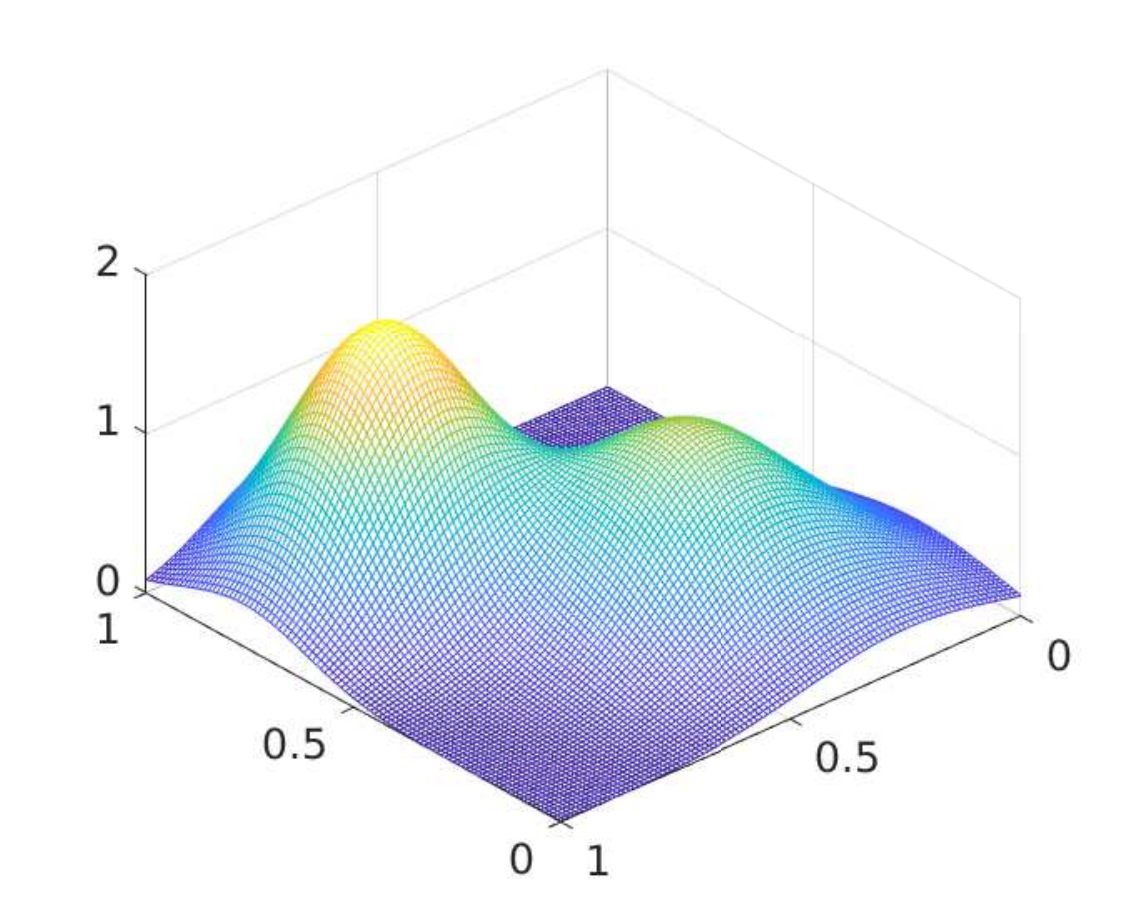}
 \end{subfigure}
 \begin{subfigure}{0.24\textwidth}
  \centering
  \includegraphics
  [scale=0.4]
  {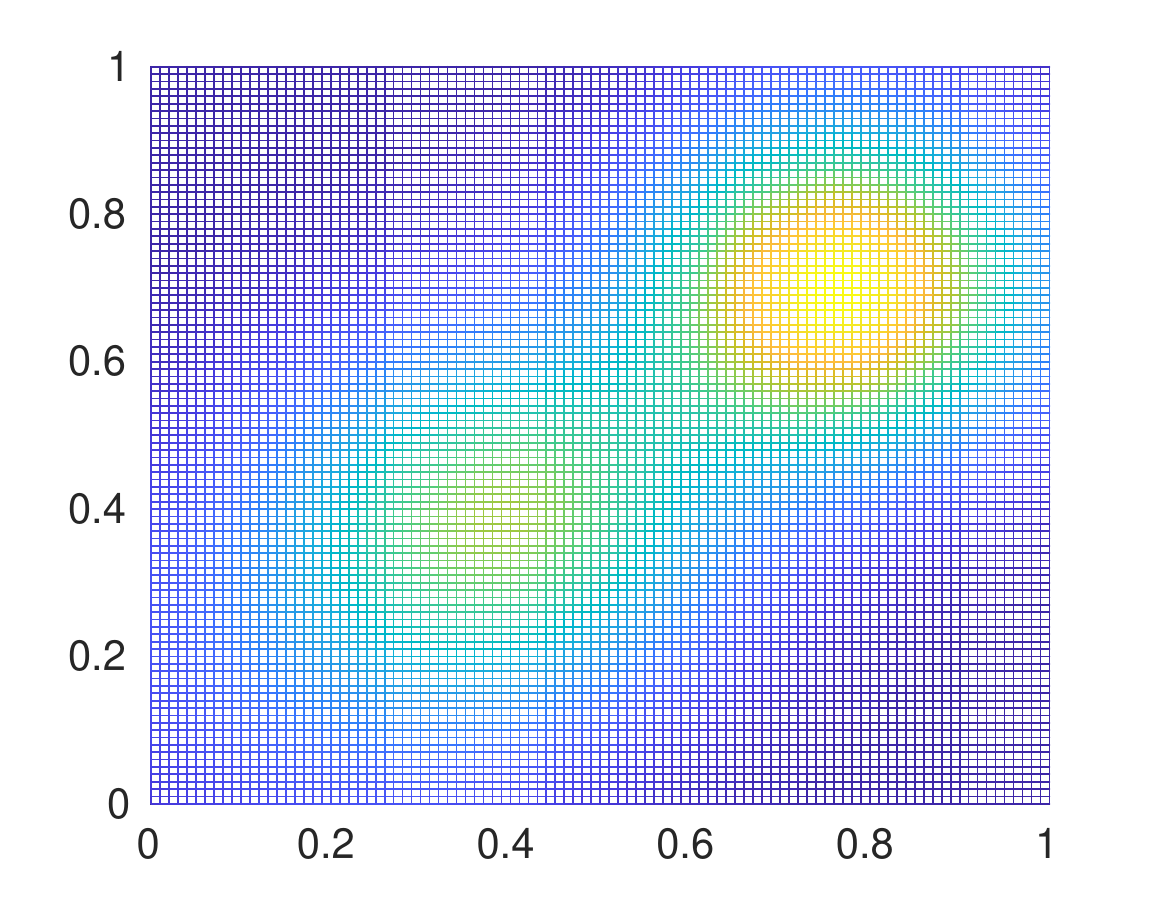}
 \end{subfigure}
 \caption{The scalar field $\phi(x,y)$ used in the simulation}
 \label{fig:scalarfield2}
\end{figure}
We first show the results for approximating the field 
$\phi(\cdot)$ with $p=100$ Gaussian RBFs. The centres of 
the Gaussian are arranged in a uniform grid over the region 
$\mathcal{Q}$. The reconstructed field plots for two values of 
$\sigma_i$ (standard deviation of the Gaussian RBFs) are shown 
in figures \ref{fig:reconstructedfield5}, 
\ref{fig:reconstructedfield6} and \ref{fig:reconstructedfield7} 
with the three algorithms.
\begin{figure}
 \begin{subfigure}{0.24\textwidth}
  \centering
  \includegraphics
  [scale=0.4]
  {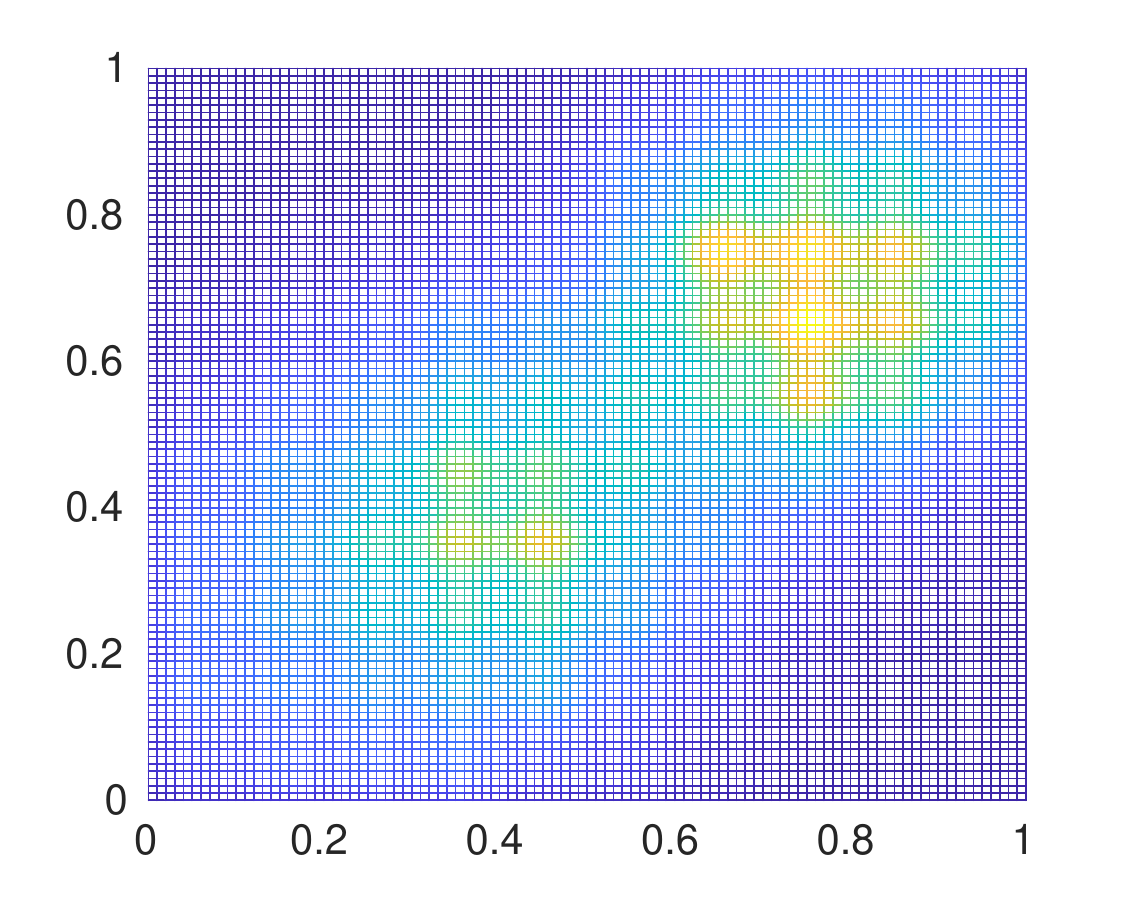}
  \caption{\small $\sigma_i = 0.04$.}
 \end{subfigure}
 \begin{subfigure}{0.24\textwidth}
  \centering
  \includegraphics
  [scale=0.4]
  {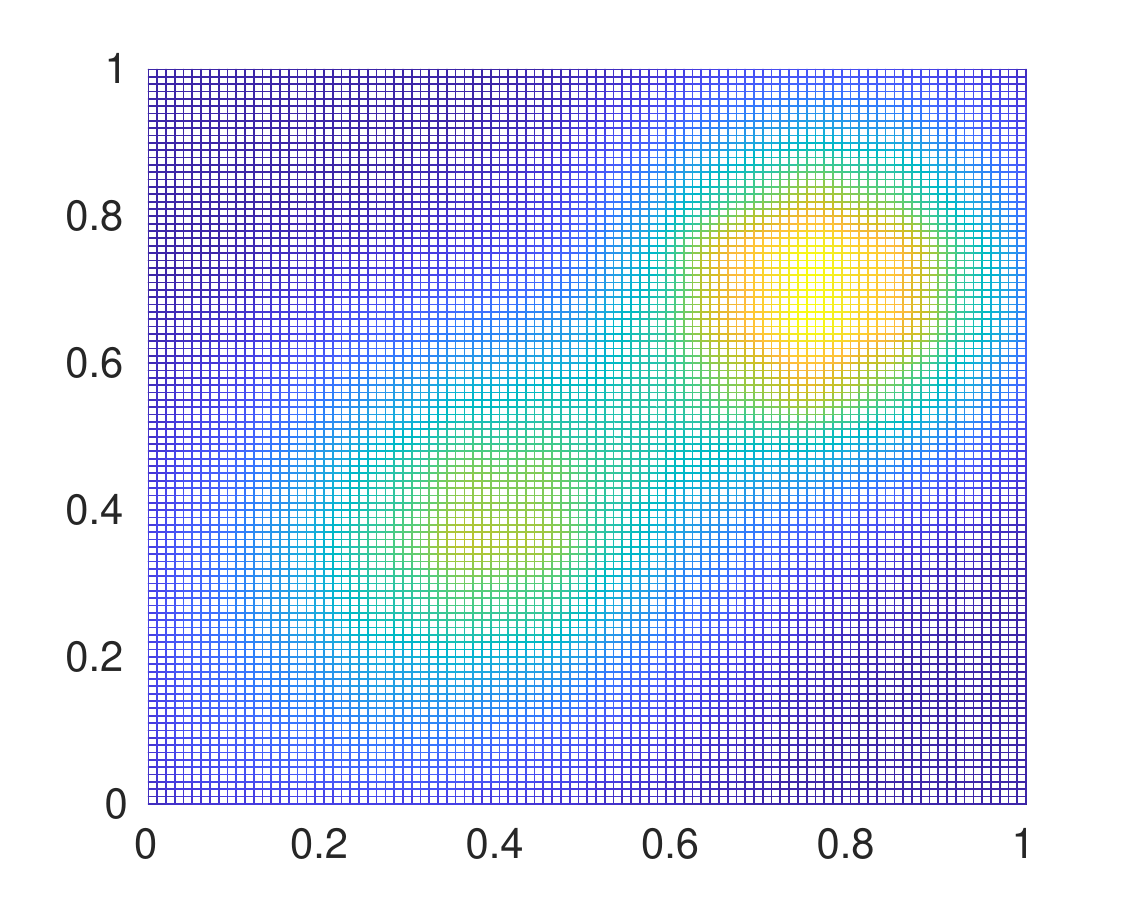}
  \caption{\small $\sigma_i = 0.05$.}
 \end{subfigure}
 \caption{Reconstructed field ($p=100$) with algorithm S$1$.}
 \label{fig:reconstructedfield5}
\end{figure}
\begin{figure}
 \begin{subfigure}{0.24\textwidth}
  \centering
  \includegraphics
  [scale=0.4]
  {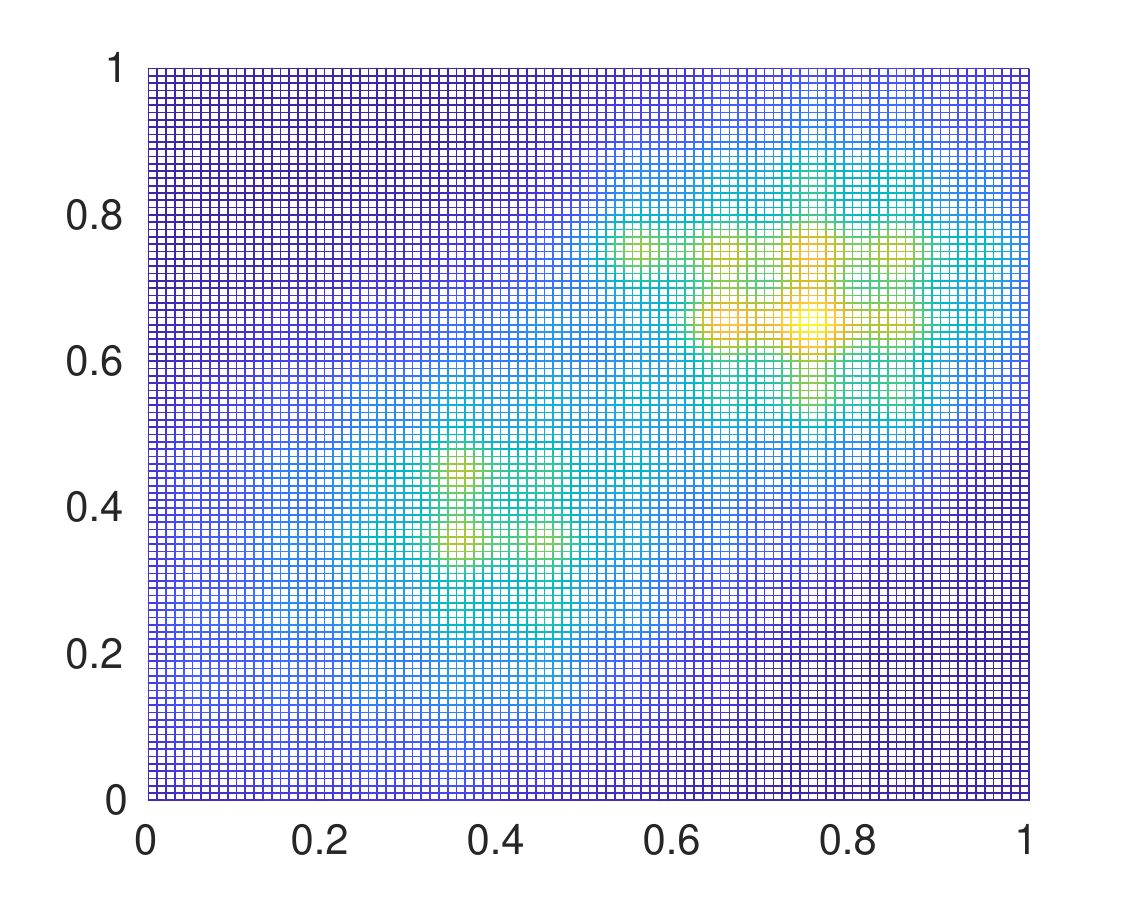}
  \caption{\small $\sigma_i = 0.04$.}
 \end{subfigure}
 \begin{subfigure}{0.24\textwidth}
  \centering
  \includegraphics
  [scale=0.4]
  {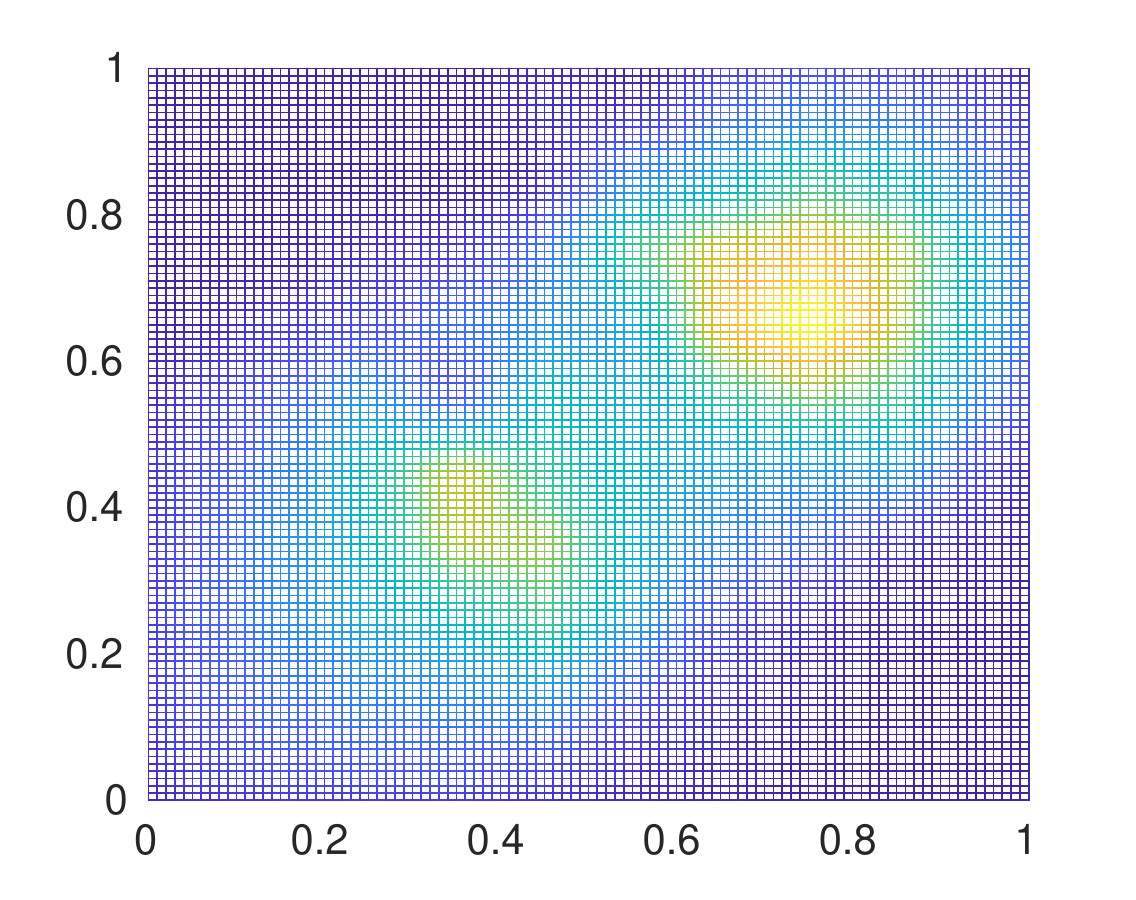}
  \caption{\small $\sigma_i = 0.05$.}
 \end{subfigure}
 \caption{Reconstructed field ($p=100$) with algorithm S$2$.}
 \label{fig:reconstructedfield6}
\end{figure}
\begin{figure}
 \begin{subfigure}{0.24\textwidth}
  \centering
  \includegraphics
  [scale=0.4]
  {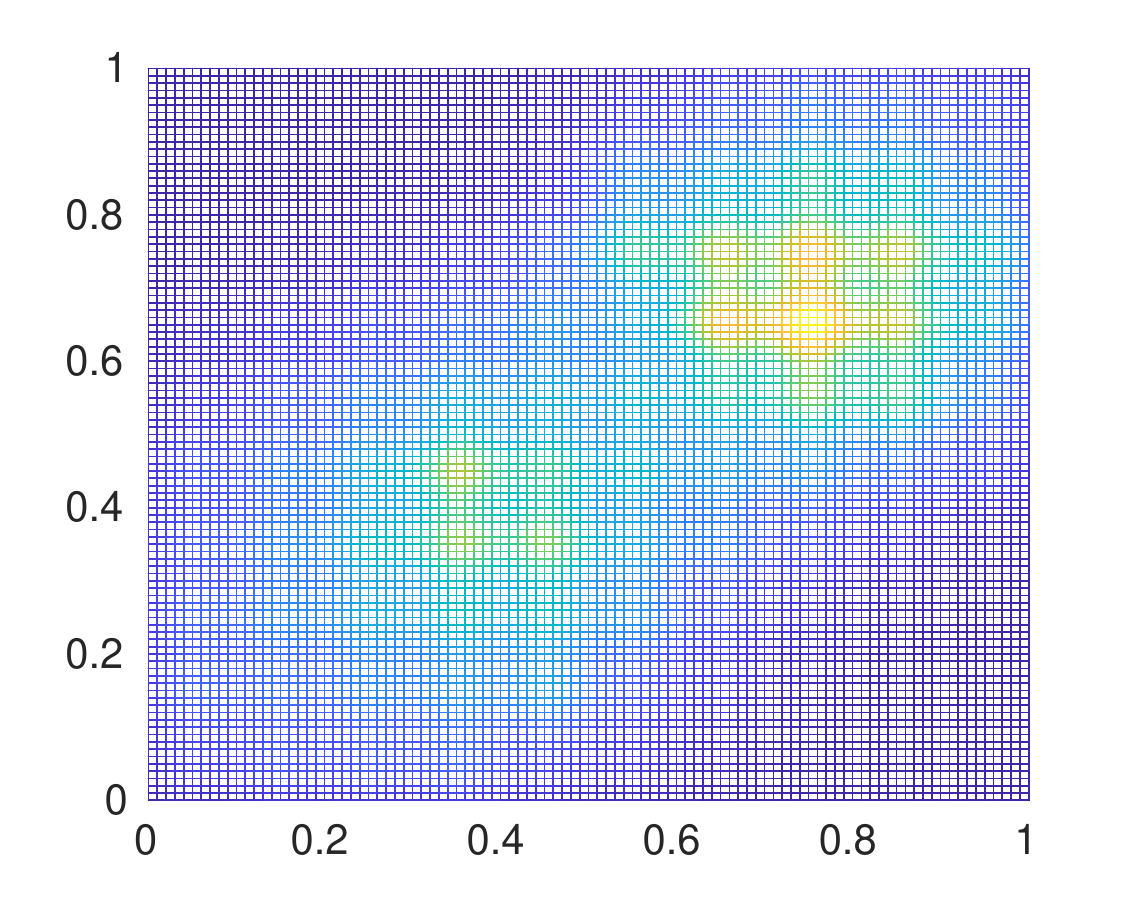}
  \caption{\small $\sigma_i = 0.04$.}
 \end{subfigure}
 \begin{subfigure}{0.24\textwidth}
  \centering
  \includegraphics
  [scale=0.4]
  {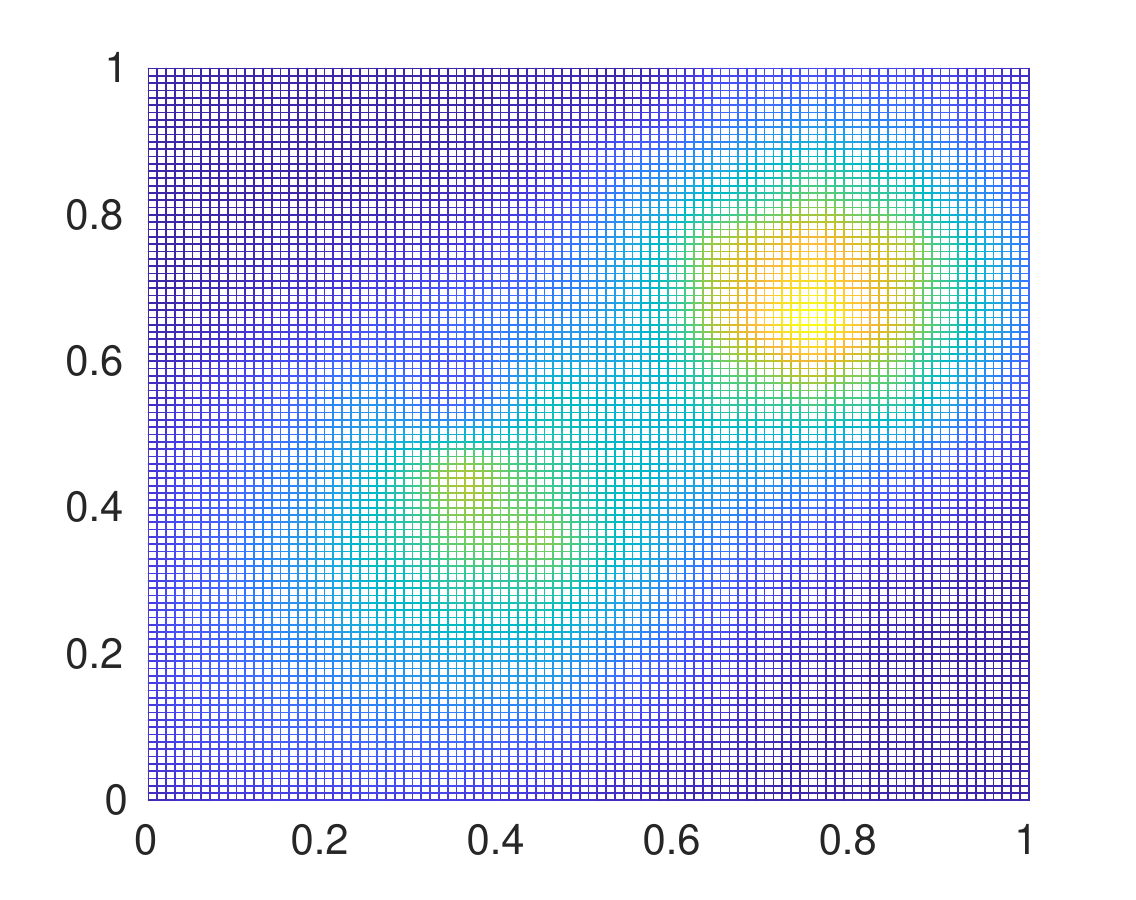}
  \caption{\small $\sigma_i = 0.05$.}
 \end{subfigure}
 \caption{Reconstructed field ($p=100$) with 
 algorithm S$3$.}
 \label{fig:reconstructedfield7}
\end{figure}
To compare the various algorithms, we use the integral error 
(see theorem \ref{thm:approx1}) 
\[
 \|e\|_2 = 
 \int_{\mathcal{Q}} |\phi(q)-\mathcal{K}(q)^{\trp}\hat{a}| dq
\]
where $\hat{a}$ is the final parameter estimate obtained from 
the given algorithm. The integral error for approximation of 
$\phi(\cdot)$ using $p=100$ parameters is shown in table 
\ref{tab:compare_np100}. The table also shows the time $T$ in 
seconds at which the excitation (positive definiteness) 
condition is achieved. The total runtime of the estimation 
algorithms was $T+20$ seconds.
\begin{table}
 \centering
 \begin{tabular}{lcclcc}
 \toprule
  ${\sigma_i=0.04}$ & $T$ (sec) & $\|e\|_2$ & ${\sigma_i=0.05}$ & 
  $T$ (sec) & $\|e\|_2$ \\
  \cmidrule(r){1-3} \cmidrule(l){4-6}
  \emph{Algorithm S$1$} & $3.1$ & $0.045$ & \emph{Algorithm S$1$} & 
  $3.9$ &$0.012$ \\
  \emph{Algorithm S$2$} & $3.1$ & $0.054$ & \emph{Algorithm S$2$} & 
  $3.7$ &$0.053$ \\
  \emph{Algorithm S$3$} & $3.1$ & $0.048$ & \emph{Algorithm S$3$} 
  & $3.7$ & $0.028$ \\
  \bottomrule
 \end{tabular}
 \caption{Comparison of algorithms for $p=100$ parameters.}
 \label{tab:compare_np100}
\end{table}

\par The reconstructed field plots for $p=196$ parameters is 
shown in figures \ref{fig:reconstructedfield8}, 
\ref{fig:reconstructedfield9} and \ref{fig:reconstructedfield10} 
with the three algorithms. The comparison of various algorithms 
is given in table \ref{tab:compare_np196}.
\begin{figure}
 \begin{subfigure}{0.24\textwidth}
  \centering
  \includegraphics
  [scale=0.4]
  {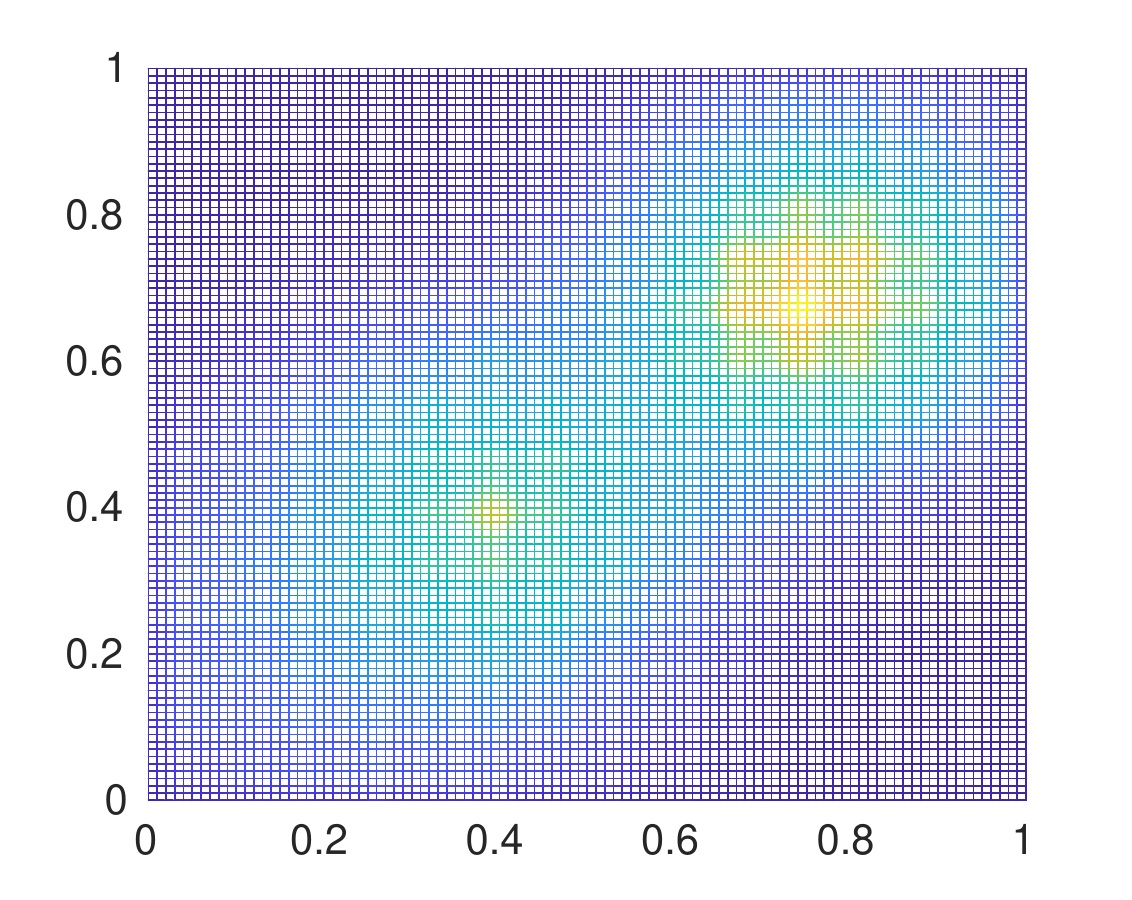}
  \caption{\small $\sigma_i = 0.03$.}
 \end{subfigure}
 \begin{subfigure}{0.24\textwidth}
  \centering
  \includegraphics
  [scale=0.4]
  {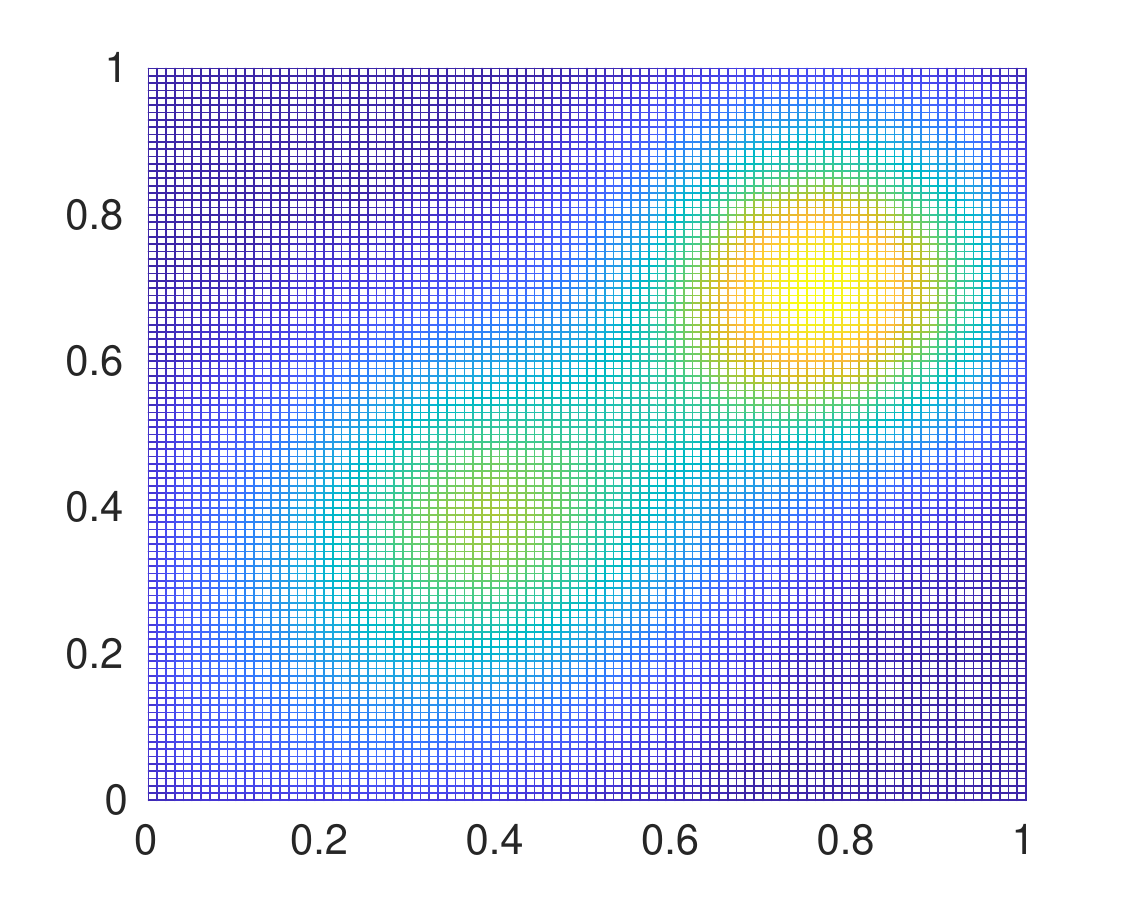}
  \caption{\small $\sigma_i = 0.04$.}
 \end{subfigure}
 \caption{Reconstructed field ($p=196$) with algorithm S$1$.}
 \label{fig:reconstructedfield8}
\end{figure}
\begin{figure}
 \begin{subfigure}{0.24\textwidth}
  \centering
  \includegraphics
  [scale=0.4]
  {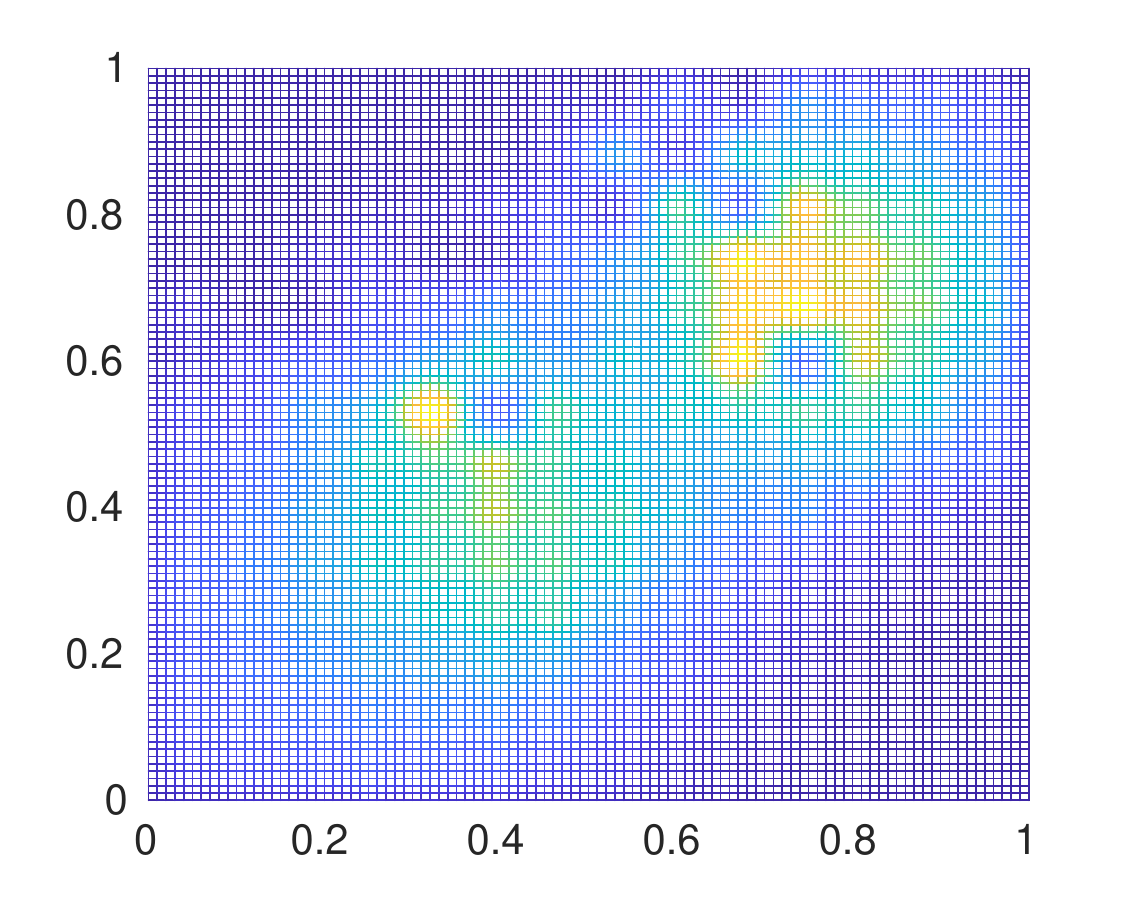}
  \caption{\small $\sigma_i = 0.03$.}
 \end{subfigure}
 \begin{subfigure}{0.24\textwidth}
  \centering
  \includegraphics
  [scale=0.4]
  {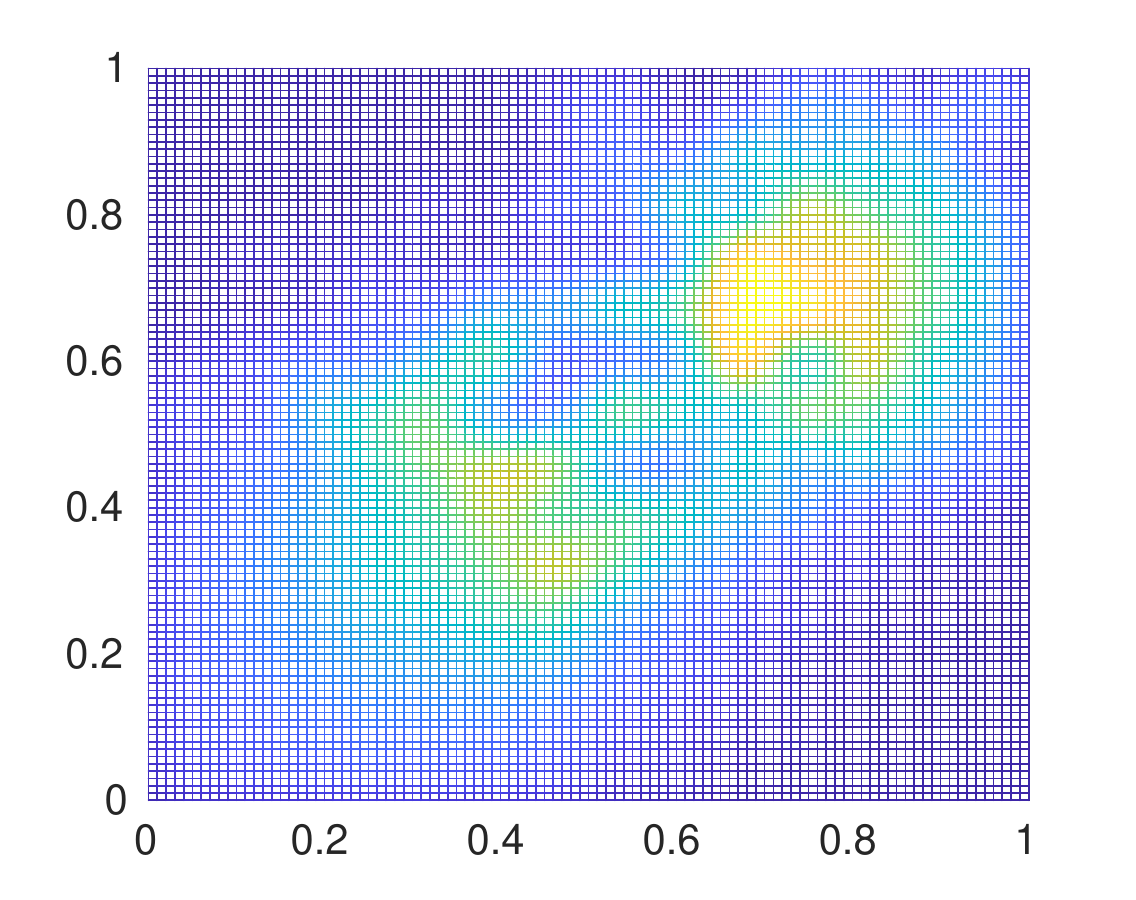}
  \caption{\small $\sigma_i = 0.04$.}
 \end{subfigure}
 \caption{Reconstructed field ($p=196$) with algorithm S$2$.}
 \label{fig:reconstructedfield9}
\end{figure}
\begin{figure}
 \begin{subfigure}{0.24\textwidth}
  \centering
  \includegraphics
  [scale=0.4]
  {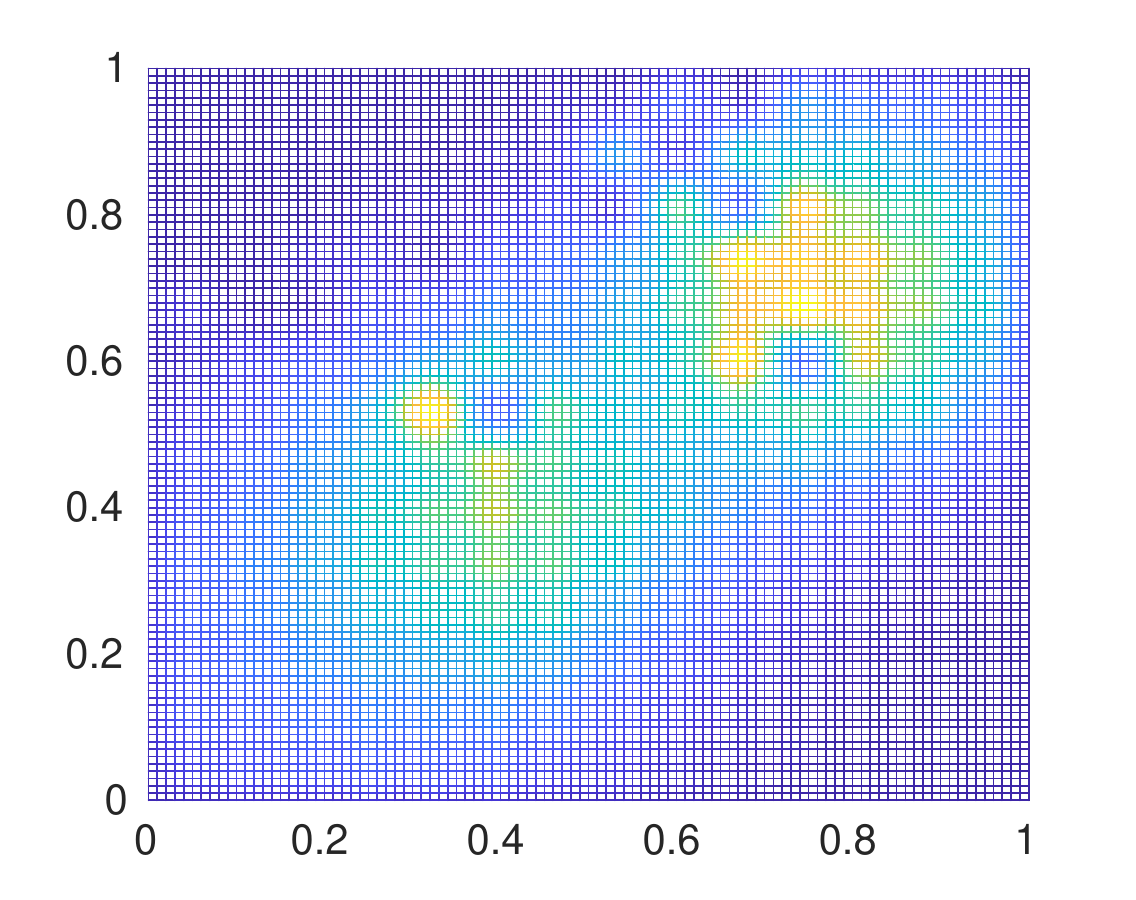}
  \caption{\small $\sigma_i = 0.03$.}
 \end{subfigure}
 \begin{subfigure}{0.24\textwidth}
  \centering
  \includegraphics
  [scale=0.4]
  {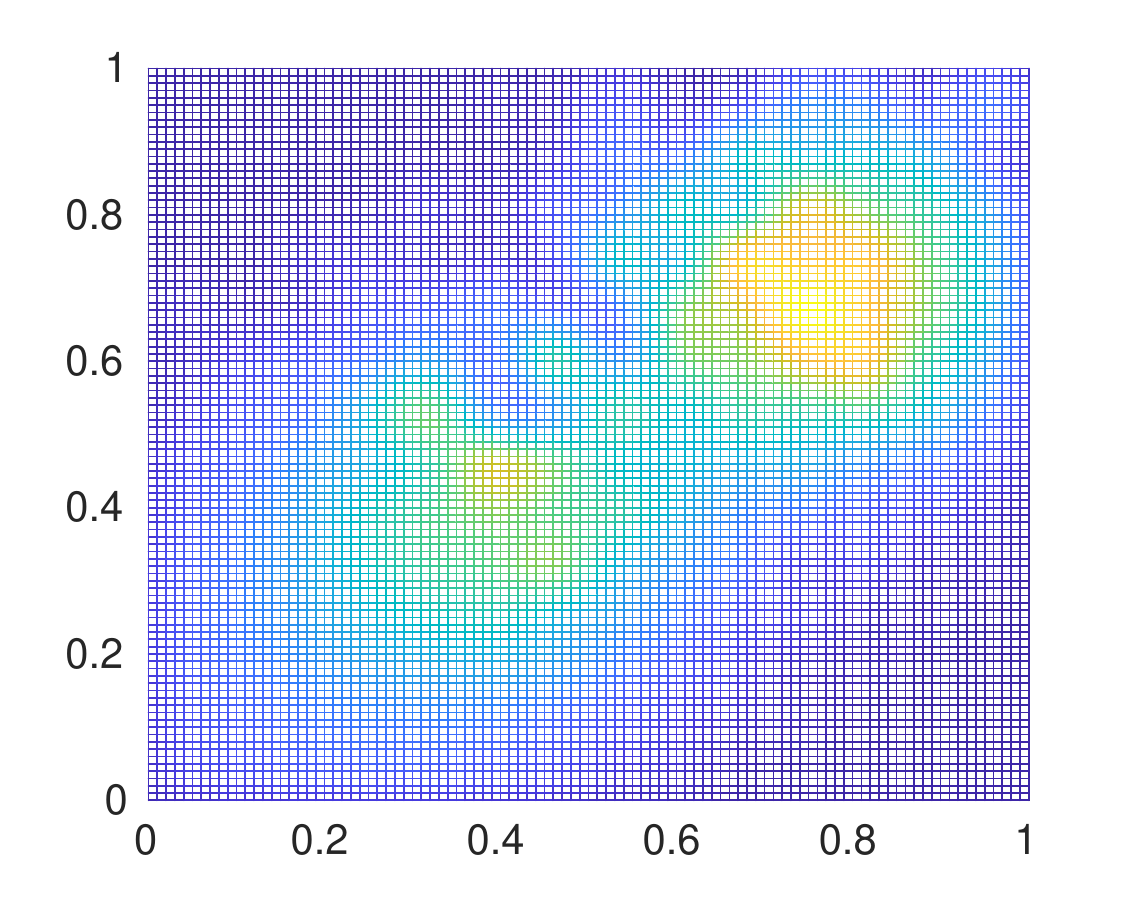}
  \caption{\small $\sigma_i = 0.04$.}
 \end{subfigure}
 \caption{Reconstructed field ($p=196$) with 
 algorithm S$3$.}
 \label{fig:reconstructedfield10}
\end{figure}
\begin{table}
 \centering
 \begin{tabular}{lcclcc}
  \toprule
  ${\sigma_i=0.03}$ & $T$ (sec) & $\|e\|_2$ & ${\sigma_i=0.04}$ & 
  $T$ (sec) & $\|e\|_2$ \\
  \cmidrule(r){1-3} \cmidrule(l){4-6}
  \emph{Algorithm S$1$} & $6.6$ & $0.031$ & \emph{Algorithm S$1$} & 
  $8.9$ &$0.008$ \\
  \emph{Algorithm S$2$} & $6.6$ & $0.059$ & \emph{Algorithm S$2$} & 
  $8.8$ &0.073 \\
  \emph{Algorithm S$3$} & $6.6$ & $0.053$ & \emph{Algorithm S$3$} 
  & $8.8$ & $0.039$ \\
  \bottomrule
 \end{tabular}
 \caption{Comparison of algorithms for $p=196$ parameters.}
 \label{tab:compare_np196}
\end{table}

We see that algorithm S$1$ gives better approximation compared to 
the others as expected. Also the algorithm S$3$ performs 
significantly better compared to algorithm S$2$. Increasing the 
number of parameters gives better approximation as expected for 
algorithm $1$, though for the other algorithms this is not 
guaranteed due to the extra error incurred (see theorem 
\ref{thm:estimation_multiagent2}) which may increase with larger 
$p$ depending on other variables such as the location of centres.
$\sigma_i$ also plays an important role in the reconstruction of 
the original field. For $p=100$, $\sigma_i = 0.05$ seems to 
provide a better approximation compared to $\sigma_i = 0.04$, 
and for $p=196$, $\sigma_i = 0.04$ seems to provide a better 
approximation compared to $\sigma_i = 0.03$. To summarize, 
algorithm S$1$ gives better approximation compared to the others 
though it is more computational and memory intensive. The 
algorithm S$3$ also gives a good approximation requiring much less 
memory. It may also be noted that in many applications, we may 
only be interested in identifying the main features of the 
original field which was successfully done in most of the cases 
discussed. 

\section{Conclusion} \label{sec:conclusion}
In this paper we consider the estimation of a scalar field 
motivated by tools from adaptive control theory and lyapunov 
analysis. We derived two estimation algorithms, one in which 
each mobile sensor estimates the entire parameter vector, and 
another in which each mobile sensor estimates only part of the 
parameter vector. We verified and tested the algorithms using 
simulations. Further work involves improving upon the proposed 
algorithms, and possibility of estimation of time-varying fields 
by persistent motion of the mobile sensors. 

\ifCLASSOPTIONcaptionsoff
  \newpage
\fi
%
%



\bibliographystyle{IEEEtran}
\bibliography{IEEEabrv,mybibfile}
\end{document}